\numberwithin{equation}{section}
\numberwithin{figure}{section}
\newtheorem{theorem}{Theorem}[section]
\newtheorem{lemma}[theorem]{Lemma}
\newtheorem{fact}[theorem]{Fact}
\newtheorem{corollary}[theorem]{Corollary}
\newtheorem{claim}[theorem]{Claim}
\newtheorem{observation}[theorem]{Observation}
\newtheorem{remark}[theorem]{Remark}
\newtheorem{note}[theorem]{Note}
\newtheorem*{definition*}{Definition}
\DeclareMathOperator{\DS}{DS}
\DeclareMathOperator{\GDS}{GDS}
\DeclareMathOperator{\TC}{TC}
\DeclareMathOperator{\GTC}{GTC}
\DeclareMathOperator{\Hom}{Hom}
\DeclareMathOperator{\coker}{coker}
\DeclareMathOperator{\image}{image}
\DeclareMathOperator{\DW}{DW}
\DeclareMathOperator{\TDW}{TDW}
\newcommand{\C}{\mathbb{C}}
\DeclareMathAlphabet{\mathpzc}{OT1}{pzc}{m}{it}
\newcommand{\be}{\begin{equation}}
\newcommand{\ee}{\end{equation}}
\begin{document}

\title{Double semions in arbitrary dimension}

\author{Michael H. Freedman$^{1,2}$ \and Matthew B.~Hastings$^1$}
\address{$^1$Station Q, Microsoft Research, Santa Barbara, CA 93106-6105, USA}
\address{$^2$Department of Mathematics, University of California, Santa Barbara, CA 93106 USA}

\begin{abstract}
We present a generalization of the double semion topological quantum field theory to higher dimensions, as a theory of $d-1$ dimensional surfaces in a $d$ dimensional ambient
space.  We construct a local Hamiltonian which is a sum of commuting projectors and analyze the excitations and the ground state degeneracy.  Defining a consistent set of local rules requires the sign structure of the ground state wavefunction to depend not just
on the number of disconnected surfaces, but also upon their higher Betti numbers through the {\it semicharacteristic}.  For odd $d$ the theory is related to the
toric code by a local unitary transformation, but for even $d$ the dimension of the space of zero energy ground states is in general different from the toric code and for even $d>2$ it is also in general different from that of the twisted $Z_2$ Dijkgraaf-Witten model.
\end{abstract}
\maketitle

\section{}
The notion of a topological quantum field theory (TQFT) had its initial success as a $2+1$ dimensional theory employed by Witten to give the Jones polynomial a coordinate free interpretation.  There are a \emph{few} regimes in which TQFTs generalize straightforwardly to higher dimension, $d+1$, $d>2$.  These include Dijkgraaf-Witten theories \cite{DW} or, more generally, theories based on the category of maps to a space $Y$ with only finitely many nontrivial homotopy groups which are, themselves, finite \cite{Q}.  Another family of examples are generalized toric codes ($\GTC$s): Given $0<p < d$ there is a $d+1$ dimensional TQFT whose Hilbert space on a closed (not necessarily oriented) $d$-manifold $X$ is spanned by the elements of $H_p\left(X; Z_2\right)$.
See Ref.~\cite{K} for the case $p=1,d=2$, see Ref.~\cite{systolic} for the case $p=1,d=3$, and see Ref.~\cite{dennis} for the case
$p=2,d=4$.

In dimension $2+1$ the toric code is often studied hand-in-hand with a sister theory ``double semions'' or $\DS$.  In this theory the ground state vectors are multi-loops, with a loop fugacity of $\delta = -1$.  The purpose of this paper is to explain (only) in the case $p = d-1$ how the $\DS$ theory can be generalized to $d>2$.  We call this the generalized (double) semion theory ($\GDS$); we parenthesize the word ``double" because these theories are not
doubles in any sense that we know.  We address:
\begin{enumerate}
    \item gapped local Hamiltonian.  This Hamiltonian will be a sum of commuting local terms, each of which is a projector.
    \item The form of ground state wave functions---answering the question: What is the generalization of loop fugacity $\delta = -1$?
    \item The nature of loop and dual loop operations, which we now call ``balloon'' and ``dual loop.', and also the nature of excitations.
    \item To what class of $d$-manifolds $\{X^d\}$ should we restrict, to arrive at a theory still analogous to the $\GTC$s, insofar as ground space degeneracies and balloon and loop operators are concerned.
\end{enumerate}

The multi-loops of the theory will be defined on a fixed, finite cellulation of some ambient manifold.  Thus, the Hilbert space of the theory will be
finite dimensional.  Hamiltonians where the terms are commuting projectors are often studied as lattice model realizations of a TQFT.  Such lattice model realizations are known in $d=2$ via the
Levin-Wen construction\cite{LW} and in $d=3$ via the Walker-Wang construction\cite{WW}.
Lattice models where the terms
in the Hamiltonian commute are mathematically interesting as these properties simplify the mathematical treatment of these theories; further, stabilitiy
of these Hamiltonians under small perturbations has been proven under fairly general assumptions\cite{stab1,stab2}, so that results about topological
order in these idealized Hamiltonians can often be shown to hold for more general Hamiltonians.

\section{Brief review of toric code ($\TC$) and double semions ($\DS$)}
The generalized toric code model takes as input a cellulation ${\mathcal C}$ of a $d$ dimensional manifold $X$.  There is a qubit degree of freedom for each $p$ cell of ${\mathcal C}$, for some given $p$ which is also input to the theory,
with the Hilbert space ${\mathcal H}$ being the tensor product of these Hilbert spaces (generalizations exists to qudit degrees of freedom; we do not
consider this here).
The Hamiltonian acting on ${\mathcal H}$ is a sum of two terms,
\be
H=H_+ + H_\square,
\ee
where
\be
H_+ = \sum_{\text{$p-1$ cells }e} H_e\text{, }H_\square = \sum_\text{$p+1$ cells $c$} H_c.
\ee
We define
\be
H_e=\frac{1-\prod_{\text{$p$ cells $f$ s.t. $e \in \partial f$}} Z_f}{2},
\ee
where $Z_f$ is the Pauli-$Z$ operator on the cell $f$ and $\partial f$ denotes the boundary of $f$.
We define
\be
H_c=\frac{1- \prod_{\text{$(d-1)$ cells $f \in \partial c$}} X_{f}}{2},
\ee
where the operator $X_{f}$ is the Pauli-$X$ operator acting on the cell $f$, changing the state of the cell.
When we describe product of several Pauli-$X$ operators on some set of cells later, we will sometimes refer to this as $\operatorname{NOT}$ on that set.

These terms are all projectors, and they pairwise commute.
The ground state degeneracy is given by $2^{H_p(X,Z_2)}$.
A basis for the ground state wavefunctions of this theory is to take equal amplitude superpositions of closed $p$-chains in a given homology class.

The theory is a topological quantum field theory for $0<p<d$.
We now restrict to $d=2,p=1$.
On a sphere, the ground state wavefunction of the toric code is a sum over equal amplitude superpositions of closed $1-$chains.
The double semion model on a sphere has a similar ground state wavefunction up to a sign factor: the ground state wavefunction is a sum over closed
$1$-chains, with an amplitude equal to $-1$ raised to a power equal to the number of closed loops.
To define this amplitude unambiguously, one typically works on a hexagonal lattice (or other lattice with degree at most $3$).
If we instead consider a theory of loops on a square lattice, a ``figure-$8$" loop configuration is possible for which it is ambiguous whether we have a single loop which crosses itself or two loops which touch at a corner.  If the degree is at most $3$ then closed $1$-chains
unambiguously define a disconnected set of closed loops: a spin $\uparrow$ on a $1$-cell corresponds to loop being present on that $1$-cell and a
spin $\downarrow$ corresponds to loop being absent.

There are two distinct choices typically made for the double semion model Hamiltonian on a hexagonal lattice\cite{LW}.
Both have the same ground state subspace and both have an excitation gap.
In both, we again decompose $H=H_++H_\square$, where $H_+$ is the same as the toric code $H_+$ for $p=1,d=2$ and
$H_\square=\sum_{\text{$2$ cells $c$}} H_c$ for $H_c$ defined as follows.
In the first choice, one sets
\be
\label{Hcdef1}
H_c=\frac{1 + i^{n_{bdry}(c)} \prod_{\text{$1$ cells $f \in \partial c$}} X_{f}}{2},
\ee
where $n_{bdry}(c)$ is equal to the number of legs leaving the hexagon $c$ which are in the $\uparrow$ state.  Thus, $i^{n_{bdry}(c)}$ is equal to the product of
the operators
$\begin{pmatrix} 1 & 0 \\ 0 & i \end{pmatrix}$ around the legs leaving the hexagon.
With this choice, the terms  $H_c$ commute with the terms $H_e$ and the terms $H_c$ pairwise commute with each other
when restricted to the eigenspace of $H_+$ with vanishing eigenvalue.

The second choice is to project the term in Eq.~(\ref{Hcdef1}) into the zero eigenspace of the operators $H_e$ for all vertices $e$ in the hexagon $c$.
With this choice, the terms all commute with each other.

Consider a configuration of loops on the sphere.  The term $H_c$ can
induce a transition between two configurations differing by a surgery move:
\[\begin{xy}\ar@{-}@/_5pt/ (0,2.5);(5,2.5) \ar@{-}@/^5pt/ (0,-2.5);(5,-2.5)\end{xy} \rightarrow \begin{xy}\ar@{-}@/_5pt/ (-2.5,-2.5);(-2.5,2.5) \ar@{-}@/^5pt/ (2.5,-2.5);(2.5,2.5)\end{xy}\]
To be in the zero eigenspace of $H$, the two configurations related by this move must enter the wavefunction with
the opposite sign.
Given any configuration of closed loops on the sphere, indeed this transition will always change the number of loops by $\pm 1$ and hence
the sign rule for this transition is consistent with the ground state wavefunction.

However, on other manifolds, such a transition may not change the number of loops.
Consider, for example, a single closed loop $(1,1)$ on a torus, winding once around the meridian and once around the longitude.  This can transition to
a loop with different relative windings $(1,-1)$, as in the following figure where opposite sides of the square
are identified
\[\begin{xy}\ar@{-}@/_5pt/ (2,6.5);(7,2.5) \ar@{-}@/^5pt/ (-2,2);(3,-2.5)
\ar@{-}@/5pt/ (-2,-2.5);(-2,6.5)
\ar@{-}@/5pt/ (7,6.5);(-2,6.5)
\ar@{-}@/5pt/ (7,6.5);(7,-2.5)
\ar@{-}@/5pt/ (7,-2.5);(-2,-2.5)
\end{xy} \rightarrow \begin{xy}\ar@{-}@/_5pt/ (-2,2);(2,6.5) \ar@{-}@/^5pt/ (3,-2.5);(7,2.5)
\ar@{-}@/5pt/ (-2,-2.5);(-2,6.5)
\ar@{-}@/5pt/ (7,6.5);(-2,6.5)
\ar@{-}@/5pt/ (7,6.5);(7,-2.5)
\ar@{-}@/5pt/ (7,-2.5);(-2,-2.5)
\end{xy}\]
Thus, the sign structure of the wavefunction on a torus obeys a more complicated rule.  In fact, consistency of the transition rule on
non-orientable surfaces can reduce the ground state degeneracy of the double semion model; see the next section.

\section{Some pathologies of $\DS$ and its potential generalizations}\label{sec:3}
For an arbitrary manifold $\Sigma$, let $TC(\Sigma)$ and $DS(\Sigma)$ denote the subspace of zero energy states.  In this section we consider
the degeneracy of this subspace for various manifolds.

%%%%% Insert - Degeneracy of DS on non-orientable surface %%%%%
Closed non-orientable surfaces are diffeomorphic to: $P$, $2P$, $3P$, $\ldots$, where $P$ is the projective plane, $RP^2$, and the positive integer represents that number of $P$s tubed (``connected sum'') together.

\begin{theorem}\label{thm:A}
\[\dim(\DS(tP)) = 2^{t-1}\]
\end{theorem}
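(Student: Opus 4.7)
The plan is to realize $\DS(tP)$ as a subspace of $\TC(tP)$, with the restriction coming from sign-consistency of the loop-fugacity rule, and to identify the admissible subspace as those $\mathbb{Z}_2$-homology classes on which a coherent sign can be assigned.

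First, I would normalize: any candidate ground state is of the form $|\psi\rangle = \sum_L \alpha(L)\,|L\rangle$, where $L$ ranges over closed $1$-chains on a degree-$\leq 3$ cellulation of $tP$. The vertex terms $H_+$ force $\alpha$ to be supported on cycles, and the plaquette terms $H_c$ (with the $i^{n_{\mathrm{bdry}}(c)}$ factor from Eq.~(\ref{Hcdef1}) present precisely to make the local rule consistent on the sphere) impose $\alpha(L') = -\alpha(L)$ whenever $L'$ arises from $L$ by a single hexagonal surgery. Since elementary surgeries preserve $\mathbb{Z}_2$-homology and act transitively on cycles within a fixed class, $|\psi\rangle$ decomposes as $\sum_{[c]} |\psi_{[c]}\rangle$ indexed by $[c] \in H_1(tP;\mathbb{Z}_2) \cong \mathbb{Z}_2^t$, and for each $[c]$ the restriction is determined up to one scalar --- unless some loop of surgery moves that returns to the initial configuration contributes a net sign $-1$, in which case that scalar is forced to $0$.

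Second, I would compute the obstruction class by class. Pick generators $\alpha_1,\ldots,\alpha_t$ for $H_1(tP;\mathbb{Z}_2)$ corresponding to the $t$ crosscaps, each represented by a small non-orientable loop $\gamma_i$ circling one crosscap. I would construct an explicit sequence of elementary hexagonal surgeries that slides $\gamma_i$ once around its crosscap and returns it to itself. The claim to be verified is that this sequence uses an \emph{odd} number of surgeries, yielding the sign $-1$ and obstructing the class $\alpha_i$. Granting this, the obstruction for a general class $\sum n_i \alpha_i$ factors as $(-1)^{\sum n_i}$, so admissible classes are exactly those with $\sum n_i \equiv 0 \pmod 2$. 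Equivalently, the admissible classes form the kernel of the pairing with $w_1(tP) \in H^1(tP;\mathbb{Z}_2)$, a subgroup of index $2$.

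Finally, counting gives $2^{t-1}$ admissible classes, each contributing a one-dimensional subspace to $\DS(tP)$, hence $\dim(\DS(tP)) = 2^{t-1}$.

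\textbf{Main obstacle.} The heart of the argument is the explicit crosscap computation: showing that a minimal cycle of surgery moves that transports a representative of $\alpha_i$ once around its crosscap involves an odd number of elementary surgeries (equivalently, that the $\mathbb{Z}_2$-valued sign obstruction of the $\DS$ rule on surfaces is represented by $w_1$). One must set up a careful local cellulation near a crosscap, pick a standard loop $\gamma_i$, and track both the configuration-dependent $i$-phases in the $H_c$'s and the resulting net sign as $\gamma_i$ is swept across the crosscap. Consistency with the orientable cases --- where $w_1 = 0$ and all $2^{2g}$ classes survive, matching $\dim \TC(\Sigma_g) = 4^g$ --- serves as a cross-check of the computation.
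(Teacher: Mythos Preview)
Your overall framework matches the paper's: decompose by $H_1(tP;\mathbb{Z}_2)$, observe that each class contributes at most one dimension, and determine which classes are killed by a closed cycle of surgeries carrying net sign $-1$. But your identification of the obstructed classes is wrong for odd $t$, and the count $2^{t-1}$ survives only because $w_1:H_1(tP)\to\mathbb{Z}_2$ is onto, so the kernel and its complement have the same size.

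The correct criterion (the paper's Claim~\ref{claim}) is that the sector $e$ survives iff $w_1(e)+\chi(\Sigma)\equiv 0\pmod 2$. Since $\chi(tP)=2-t$, for $t$ odd the surviving classes are exactly those with $w_1(e)=1$, i.e.\ the orientation-\emph{reversing} cycles --- not the kernel of $w_1$. The warmup on $RP^2$ already exhibits this: there $\psi(\emptyset)=0$ (three Morse events sweep a circle over $P$) while $\psi(RP^1)$ can be normalized to $1$, the opposite of what your rule predicts. The missing $\chi(\Sigma)$ term arises because the obstructing closed histories live in the class $[E]\times[S^1]+[\Sigma]\times[\mathrm{pt}]$ of $H_2(\Sigma\times S^1)$; resolving this singular union to an embedded surface picks up $\chi(\Sigma)$ from the $\Sigma$-summand and $w_1(e)$ from how the resolution twists along $E$. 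Your proposed ``slide $\gamma_i$ around its crosscap'' move, interpreted as the local Klein-bottle history in $\Sigma\times S^1$, has $\chi=0$ and detects nothing; the genuine obstruction involves a history sweeping over all of $\Sigma$. The paper's cobordism argument (Facts~\ref{fact:1} and~\ref{fact:3}) together with the Euler characteristic of the resolved history produces this $\chi(\Sigma)$ correction automatically and bypasses any explicit cell-by-cell computation near a crosscap.
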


\subsubsection*{Note}

For orientable surfaces, $\DS$ and $\TC$ have equal dimension, but for closed non-orientable surfaces, $\frac{\dim(\DS(tP))}{\dim(\TC(tP))} = \frac{1}{2}$.

We will prove theorem \ref{thm:A} later in this section.
First we collect a few elementary topological facts and then prove some special cases as a warmup before proving the general result.

\begin{fact}\label{fact:1}
If a closed surface $\Sigma^2 = \partial M^2$ is the boundary of a $3$-manifold,
\[\dim\ker\left(H_1\left(\Sigma,Z_2\right)\rightarrow H_1\left(M;Z_2\right)\right) = \frac{1}{2}\dim H_1\left(\Sigma; Z_2\right).\]
In particular, $\dim H_1\left(\Sigma;Z_2\right)$ is even.  Equivalently the Euler characteristic $\chi(\Sigma)$ is even.

More generally, if $N^{2k}$ is the boundary of a ($2k+1$)-dimensional manifold $M^{2k+1}$, $\geq 1$, then:
\[\dim\ker\left(H_k\left(N;Z_2\right)\right)\rightarrow H_k\left(M;Z_2\right) = \frac{1}{2}\dim H_k\left(N;Z_2\right)\]
Again, this implies $\dim H_k\left(N, Z_2\right)$ is even, and the Euler characteristic $\chi(N)$ (base of Betti numbers with any coefficients) is also even.

Since we only use $Z_2$-coefficients, they are now dropped from the notation.
\end{fact}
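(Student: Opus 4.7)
The plan is to prove this as the classical ``half-lives, half-dies'' lemma for the boundary of an odd-dimensional manifold. I will focus on the general statement (the surface case being the specialization $k=1$), and I will only use $\mathbb{Z}_2$ coefficients so that all Poincar\'e and Lefschetz duality isomorphisms hold without any orientability hypotheses.

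The starting point is the long exact sequence of the pair $(M^{2k+1},N^{2k})$:
\[
\cdots \to H_{k+1}(M,N) \xrightarrow{\partial} H_k(N) \xrightarrow{i_*} H_k(M) \to \cdots
\]
Setting $L := \ker(i_*) = \mathrm{image}(\partial)$, exactness at $H_k(N)$ gives
\[
\dim H_k(N) \;=\; \dim L \;+\; \dim\mathrm{image}(i_*),
\]
so the whole statement reduces to showing $\dim L = \dim\mathrm{image}(i_*)$.

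The key step is duality. Over $\mathbb{Z}_2$, Poincar\'e duality on the closed $2k$-manifold $N$ gives $H_k(N)\cong H^k(N)$, and Lefschetz duality on $M$ gives $H_{k+1}(M,N)\cong H^k(M)$. A standard naturality check shows that under these isomorphisms the connecting homomorphism $\partial$ corresponds to the restriction map $i^*:H^k(M)\to H^k(N)$; hence
\[
\dim L \;=\; \dim\mathrm{image}(\partial) \;=\; \dim\mathrm{image}(i^*).
\]
Next, over the field $\mathbb{Z}_2$ the universal coefficient pairing identifies $H^k$ with the dual of $H_k$, and makes $i^*$ the transpose of $i_*$. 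Since a linear map and its transpose over a field have equal rank, $\dim\mathrm{image}(i^*)=\dim\mathrm{image}(i_*)$. Combining the two equalities yields $\dim L = \dim\mathrm{image}(i_*)$, and therefore $\dim L = \tfrac{1}{2}\dim H_k(N)$, as claimed.

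For the parity statements, I would finish by noting that Poincar\'e duality on the closed $2k$-manifold $N$ pairs $H_j(N)$ with $H_{2k-j}(N)$ for $j\neq k$, so those Betti numbers contribute an even quantity to $\chi(N)$; the middle term $\dim H_k(N)$ is even by the result just proved, so $\chi(N)$ is even. (In the $k=1$, $\Sigma=\partial M^3$ case, the same argument applies verbatim with the middle-dimensional form being the $\mathbb{Z}_2$ intersection form on a surface.) The only real subtlety is the naturality statement identifying $\partial$ with $i^*$ under the two duality isomorphisms; once that is in hand, the rest is linear algebra over $\mathbb{Z}_2$, so I do not anticipate any serious obstacle.
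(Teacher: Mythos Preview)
Your proof is correct and follows essentially the same route as the paper: both use the long exact sequence of the pair $(M,N)$, Poincar\'e/Lefschetz duality over $\mathbb{Z}_2$, and the universal coefficient theorem to conclude $\dim\ker i_* = \dim\mathrm{image}\, i_*$. The paper organizes the argument slightly differently, writing out both the homology and cohomology long exact sequences as a commuting ladder (with vertical duality isomorphisms) and chasing through it, whereas you invoke the single naturality statement that $\partial$ corresponds to $i^*$ under duality and then use rank-of-transpose; these are equivalent packagings of the same idea.
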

\begin{proof}
Consider the long exact sequences of the pair $(M,N)$ in both cohomology and homology with the vertical isomorphism given by Poincar\'{e} and Lefschetz dualities:
\[\xymatrix{
H^k(M,\partial) \ar[r]\ar[d]^{\text{L.D.}} & H^k(M) \ar[r]\ar[d]^{\text{L.D.}} & H^k(\Sigma) \ar[r]\ar[d]^{\text{P.D.}} & H^{k+1}(M,\partial) \ar[r]^{\hat{\alpha}}\ar[d]^{\text{L.D.}} & H^{k+1}(M) \ar[d]^{\text{L.D.}} \\
H_{k+1}(M) \ar[r]^{\alpha} & H_{k+1}(M,\partial) \ar[r]^{\beta} & H_{k+1}(\Sigma) \ar[r]^{i_\ast} & H_k(M) \ar[r]^{\gamma} & H_k(M,\partial)
}\]

The universal coefficient theorem (with field coefficients $F$) provides a natural isomorphism $H^\ast(\; ; F) \overset{\cong}{\longrightarrow} \Hom\left(H_\ast(\; ; F), F\right)$ implying that $\alpha$ and $\hat{\alpha}$ above are $\Hom$-dual.  Thus $\Hom(\coker \alpha, Z_2) = \ker\hat{\alpha}$.  In particular, $\dim(\coker\alpha) = \dim\left(\ker\hat{\alpha}\right) = \dim(\ker\gamma)$, the later from the commutative square.  By exactness of the homology sequence, $\coker\alpha\cong\image\beta = \ker i_\ast$.  Thus $\dim\left(\ker i_\ast\right) = \dim(\ker \gamma)$.  But using exactness once more, $\image i_\ast = \ker\gamma$ so we conclude: $\dim(\ker i_\ast) = \dim(\image i_\ast)$.  Since $\dim H_k(\Sigma) = \dim(\ker i_\ast) + \dim(\image i_\ast)$, the fact is proved.
\end{proof}

\begin{corollary}
$tP$, $t$ odd, do not bound any compact $3$-manifold. For $t$ even, $tP$ does bound a compact $3$-manifold.
\end{corollary}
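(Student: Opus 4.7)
The plan has two directions to establish: that $tP$ cannot bound for $t$ odd, and that it does bound for $t$ even. The odd case is an immediate application of Fact \ref{fact:1}, while the even case requires an explicit construction of a bounding $3$-manifold.

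For the odd direction, I would first record the standard computation $\dim H_1(tP; Z_2) = t$, which follows from the usual $2t$-gon presentation of $tP$ (or inductively from the Mayer--Vietoris sequence applied to the connected sum decomposition, noting $H_1(RP^2;Z_2) \cong Z_2$). Then, by the $k=1$ case of Fact \ref{fact:1}, any closed surface that bounds a compact $3$-manifold must have $\dim H_1(\,\cdot\,;Z_2)$ even. Hence for $t$ odd the surface $tP$ cannot be such a boundary.

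For the even direction $t = 2k$, I would give an explicit bounding $3$-manifold. The key building block is that the Klein bottle $2P$ is the boundary of the solid Klein bottle $N := S^1 \,\tilde{\times}\, D^2$ (the twisted $D^2$-bundle over $S^1$, equivalently a regular neighborhood of the core circle in the orientation double cover construction). Given this, for $t = 2k$ form the boundary connected sum $M := N \natural N \natural \cdots \natural N$ ($k$ copies). The boundary connected sum operation on $3$-manifolds with nonempty, connected boundary produces a $3$-manifold whose boundary is the (interior) connected sum of the original boundaries, so
\[
\partial M \;\cong\; \underbrace{K \# K \# \cdots \# K}_{k\ \text{copies}} \;=\; \#_{k}(2P) \;=\; 2kP \;=\; tP ,
\]
exhibiting the desired bounding manifold.

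The only nontrivial step is identifying a bounding $3$-manifold for $2P$, and the solid Klein bottle does the job essentially tautologically; everything else is bookkeeping with Fact \ref{fact:1} and connected sums, so no serious obstacle remains.
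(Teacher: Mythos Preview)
Your proposal is correct and follows essentially the same approach as the paper: the odd case via Fact~\ref{fact:1} and $\dim H_1(tP;Z_2)=t$, and the even case by observing that $2P=K$ bounds the twisted $D^2$-bundle over $S^1$ (the paper phrases this as extending the reflection monodromy from $S^1$ to $D^2$, which is exactly your solid Klein bottle), then taking the boundary connected sum (``tubing together'') of $k$ copies to get a filling of $2kP$.
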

\begin{proof}
An elementary calculation shows $H_1(tP)\cong Z_2^t$ and thus has odd dimension for $t$ odd.

Let $K = 2P$ be the Klein bottle.  $K$ is a $S^1$-bundle over $S^1$ with reflection monodromy.  ($S^1$ is the circle.)  This monodromy clearly extends to a reflection of the disk $D^2$, so $K$ bounds a disk bundle over $S^1$.  This handles the case $t=2$.  In general, for $t = 2j$ one may tube together $j$ copies of this disk bundle to produce the desired $3$-manifold.
\end{proof}

\begin{fact}\label{fact:3}
Any two embedded codimension $=1$ submanifolds $P$ and $Q$ of a manifold $N^d$ which represent the same $Z_2$-homology class $\alpha\in H_{d-1}\left(N;Z_2\right)$ are cobordant: There is a compact $W^d$ with $\partial W^d = P\coprod Q$ (and in fact $W$ also maps into $N$ extending the inclusions of $P$ and $Q$).

We will apply this where $P$ and $Q$ are closed surfaces and $N$ is a closed $3$-manifold.
\end{fact}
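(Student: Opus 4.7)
My plan is to reduce the assertion to the Pontryagin--Thom correspondence in codimension one. With $Z_2$-coefficients, Poincar\'e--Lefschetz duality (which requires no orientability hypothesis on $N$) gives a natural isomorphism $H_{d-1}(N;Z_2)\cong H^1(N;Z_2)$, while Brown representability identifies $H^1(N;Z_2)$ with $[N,K(Z_2,1)]=[N,RP^\infty]$. Thus the hypothesis $[P]=[Q]$ in $H_{d-1}(N;Z_2)$ is equivalent to the classifying maps $N\to RP^\infty$ Poincar\'e dual to $P$ and $Q$ being homotopic, and the goal becomes realising this homotopy geometrically as a cobordism.

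For an embedded closed codimension-one submanifold $P\subset N$ I would build a classifying map $f_P\colon N\to RP^\infty$ whose class in $[N,RP^\infty]$ is Poincar\'e dual to $[P]$ and for which $P$ equals the transverse preimage $f_P^{-1}(H)$ of a fixed hyperplane $H\subset RP^\infty$. This is the Thom collapse: choose a tubular neighbourhood $\nu(P)$, classify the (possibly non-trivial) normal line bundle $\nu(P)\to P$ by a map $P\to BO(1)=RP^\infty$, and assemble with the collapse $N\to \nu(P)/\partial\nu(P)=\mathrm{Th}(\nu(P))$ using the identification $MO(1)=RP^\infty$. Apply the same construction to $Q$ to obtain $f_Q$. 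Since $[P]=[Q]$, the maps $f_P$ and $f_Q$ are homotopic; choose a homotopy $F\colon N\times I\to RP^\infty$ with $F|_{N\times\{0\}}=f_P$ and $F|_{N\times\{1\}}=f_Q$. By relative transversality, perturb $F$ rel $N\times\partial I$ to be transverse to $H$. Define $W:=F^{-1}(H)\subset N\times I$: this is a properly embedded compact $d$-manifold with $\partial W=(P\times\{0\})\coprod(Q\times\{1\})$, and post-composing the inclusion $W\hookrightarrow N\times I$ with the projection $\pi\colon N\times I\to N$ produces the desired map $W\to N$ extending the inclusions of $P$ and $Q$.

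The main obstacle is the first construction---arranging that the classifying map $f_P$ cuts out $P$ \emph{exactly}, and not merely up to cobordism. This is delicate precisely when $\nu(P)$ is the non-trivial $Z_2$-line bundle, i.e.\ when $P$ is one-sided in $N$; the two-sided case can be handled by a map into $S^1\subset RP^\infty$, but the one-sided case genuinely requires the twisted Thom construction. Once this Pontryagin--Thom dictionary is in place, the rest is routine: a homotopy in $[N,RP^\infty]$ converts to a cobordism in $N\times I$ via a transverse preimage, and projection to $N$ gives the mapped-in cobordism promised by the statement.
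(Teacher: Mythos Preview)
Your proposal is correct and follows essentially the same route as the paper: both arguments dualize to $H^1(N;Z_2)\cong[N,RP^\infty]$, realize $P$ and $Q$ as transverse preimages of a hyperplane under homotopic classifying maps, and obtain $W$ by applying relative transversality to the homotopy $F$ on $N\times I$. Your write-up is in fact slightly more detailed than the paper's in spelling out the Thom-collapse construction of $f_P$ and the projection $W\hookrightarrow N\times I\to N$; the only point the paper treats more carefully is the case $\partial N\neq\emptyset$, where one should use relative maps $(N,\partial N)\to(RP^\infty,\ast)$ and the duality $H_{d-1}(N;Z_2)\cong H^1(N,\partial N;Z_2)$.
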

\begin{proof}
$H_{d-1}\left(N,Z_2\right)\cong H^1\left(N,\partial N,Z_2\right)$ by Lefschetz duality.  $H^1$ is \emph{represented} as homotopy classes of maps $\left[N,RP^\infty\right]$ (with a relativization on $\partial N$ to the base point if $\partial N\neq\emptyset$).  We use $\partial X$ to denote the boundary of any space $X$.  The classifying space for $H^1\left(\;\;;Z_2\right)$ is $K\left(Z_2,1\right)$, the space with $\pi_1\cong Z_2$ and all higher homotopy groups vanishing; it is the infinite projective space $RP^\infty$.  Its characteristic class $\iota \in H^1\left(RP^\infty; Z_2\right)$ is dually represented by a codimension one $RP^{\infty-1}\subset RP^\infty$.  A class $p\in H^1\left(N, \partial N; Z_2\right)$ is represented by a map $f_p : (N, \partial N) \rightarrow \left(RP^\infty, \ast\right)$; $\operatorname{P.D.}(p) = \left[f_p^{-1}(RP)^{\infty-1}\right]$.  The submanifolds $P$ and $Q$ arise as different but homotopic ways of making the representing map transverse to $RP^{\infty - 1} \subset RP^\infty$.  $W$ is produced by a \emph{relative} application of transversality to the homotopy, $F$.  That is, $P = f_p^{-1}\left(RP^{\infty-1}\right)$, $Q = f_q^{-1}\left(RP^{\infty-1}\right)$, and $W = F^{-1}\left(RP^{\infty-1}\right)$.
\end{proof}

Now consider $\DS\left(RP^2\right)$.  We first consider the question: Is there any ground state wave function (g.s.w.f.) $\psi(\;)\in\DS\left(RP^2\right)$ with nonzero weight on the empty picture or \emph{blank} multi-loop diagram.  The answer is no, since starting with the empty picture, a loop may be swept over $P$ with $3$ local events, each of which produces a $-1$ phase difference between the multi-loops before and after each event.  The product of these three signs is $-1$ showing $\psi(\emptyset) = -\psi(\emptyset) = 0$.
\[\emptyset\underset{-1}{\longrightarrow}
  \begin{xy}<5mm,0mm>:
  (0,-0.25);(0.25,0)**\crv{(0,0)} % inside
  ,(0,-0.25);(-0.5,-0.25)**\crv{(0,-0.5)&(-0.5,-0.5)} % bottom
  ,(0.25,0);(0.25,0.5)**\crv{(0.5,0)&(0.5,0.5)} % right
  ,(0,0.5);(-0.5,0)**\crv{(-0.5,0.5)} % outside
  ,(0,0.5);(0.25,0.5)**\dir{-} % outside to right
  ,(-0.5,0);(-0.5,-0.25)**\dir{-} % outside to bottom
  \end{xy}
  \longrightarrow
  \begin{xy}<5mm,0mm>:
  (-1,0);(1,0)**\dir{-} % horizontal cross
  ,(1,0);(1.5,0.5)**\crv{(1.5,0)} % bottom cap bottom
  ,(1,0.5);(1.5,0.5)**\crv{(1.5,0.75)} % bottom cap top
  ,(-1,0);(-1.25,0.5)**\crv{(-1.25,0)} % inside left bottom
  ,(-1.25,0.5);(0,1.25)**\crv{(-1.25,1.25)} % inside left top
  ,(0,1.25);(1,1)**\crv{(1,1.25)} % inside top right
  ,(1,1);(1.5,1)**\crv{(1,0.75)&(1.5,0.75)} % top cap
  ,(1.5,1);(0,1.75)**\crv{(1.5,1.75)} % outside top right
  ,(0,1.75);(-1.75,0.5)**\crv{(-1.75,1.75)} % outside left top
  ,(-1.75,0.5);(-1,-0.5)**\crv{(-1.75,-0.5)} % outside left bottom
  \ar@{-}@`{(-0.75,-0.5)}|(0.7)\hole (-1,-0.5);(1,0.5) % diagonal cross
  \end{xy}
  \underset{-1}\longrightarrow
  \begin{xy}<5mm,0mm>:
  (-1,0);(1,0)**\dir{-} % horizontal cross
  ,(1,0);(1.5,0.5)**\crv{(1.5,0)} % bottom cap bottom
  ,(-1,0);(-1.25,0.5)**\crv{(-1.25,0)} % inside left bottom
  ,(-1.25,0.5);(0,1.25)**\crv{(-1.25,1.25)} % insde left top
  ,(0,1.25);(1,1)**\crv{(1,1.25)} % inside top right
  ,(1.5,1);(0,1.75)**\crv{(1.5,1.75)} % outside top right
  ,(0,1.75);(-1.75,0.5)**\crv{(-1.75,1.75)} % outside left top
  ,(-1.75,0.5);(-1,-0.5)**\crv{(-1.75,-0.5)} % outside left bottom
  ,(1,0.5);(1.09375,0.625)**\crv{(1.03,0.625)} % merge left bottom
  ,(1.0975,0.625);(1.0975,0.875)**\crv{(1.2225,0.625)&(1.2225,0.875)} % merge left middle
  ,(1.09375,0.875);(1,1)**\crv{(1.03,0.875)} % merge left top
  ,(1.5,0.5);(1.40625,0.625)**\crv{(1.47,0.625)} % merge right bottom
  ,(1.40625,0.625);(1.40625,0.875)**\crv{(1.28125,0.625)&(1.28125,0.875)} % merge right middle
  ,(1.40625,0.875);(1.5,1)**\crv{(1.47,0.875)} % merge right top
  \ar@{-}@`{(-0.75,-0.5),(1,0.25)}|(0.575)\hole (-1,-0.5);(1,0.5) % diagonal cross
  \end{xy}
  \longrightarrow
  \begin{xy}<5mm,0mm>:
  (0.165,0.25);(-0.5,-0.5)**\crv{(0.165,0)&(-0.165,0)&(-0.165,-0.5)} % inside
  ,(-0.5,0.5);(-0.5,-0.5)**\crv{(-1,0.5)&(-1,-0.5)} % left
  ,(-0.5,0.5);(0.165,0.25)**\crv{(0.165,0.5)} % top
  \end{xy}
  \underset{-1}\longrightarrow\emptyset
\]

In words, we create a small circle in a M\"{o}bius band $\mathcal{M}$, expand it around the band, and recouple it so that it becomes parallel to the boundary of $\mathcal{M}$.  Since $P = \mathcal{M}\cup_\partial\text{disk}$, the recoupled loop now shrinks to a point across the disk and disappears.  The sign is $(-1)^{\chi(P)}$ as each ``event'' corresponds to a critical point of a Morse function on $P$.  In the simplest case there are three.  The crossing in the above figure represents the half-twist in the M\"{o}bius band; the fourth image (out of six images in the sequence) represents the loop parallel to the boundary of the band.  This boundary is drawn, after projecting the band into the plane, as a loop with a single self-crossing.

Similar argument shows that for any $\DS$ g.s.w.f.~$\psi_{tP}(\text{empty}) = 0$, $t$ odd.  The sweepout now has $t+2$ events: the birth of a circle, $t$ reconnections, and finally the death of the circle.  We have shown:
\begin{lemma}
For $t$ odd, any $\DS$ ground state wave function satisfies $\psi_{tP}(\text{empty}) = 0$.
\end{lemma}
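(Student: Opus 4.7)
The plan is to mimic the sweepout argument already carried out for $RP^2=P$ in the preceding text, but with $t$ cross-caps instead of one. I realize $tP$ by taking $t$ disjoint Möbius bands $\mathcal M_1,\dots,\mathcal M_t$ joined along a common disk cap (equivalently, $tP$ is what one gets by capping off the boundary of a surface obtained from a disk by gluing in $t$ cross-caps). Equivalently, pick a self-indexing Morse function $f:tP\to[0,t+2]$ with exactly one index-$0$ critical point, $t$ index-$1$ critical points (one in the core of each $\mathcal M_j$), and one index-$2$ critical point. These $t+2$ critical points are the ``events'' that will generate the sign.

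Using this $f$ I produce a one-parameter family of multi-loop configurations on $tP$ starting and ending at the empty picture: sweeping across the minimum births a small circle (one event); sweeping across the $j$-th saddle performs a local recoupling/surgery inside $\mathcal M_j$ of exactly the type drawn in the $RP^2$ figure (one event each, $t$ in total); sweeping across the maximum kills a small circle (one event). Each single event is a local move between two configurations that differ by a single application of a plaquette projector $H_c$ (after, if necessary, subdividing the cellulation near the critical point so that the move is supported in one hexagon). By the defining relation of $H_c$ in the double semion model, the two configurations related by the move must enter any g.s.w.f.\ with opposite signs, contributing an overall factor of $-1$ per event.

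Composing the $t+2$ signs gives, for any $\psi\in\DS(tP)$,
\be
\psi_{tP}(\emptyset)=(-1)^{t+2}\,\psi_{tP}(\emptyset)=(-1)^t\,\psi_{tP}(\emptyset),
\ee
so for $t$ odd we conclude $\psi_{tP}(\emptyset)=-\psi_{tP}(\emptyset)=0$. (Note that $t+2\equiv\chi(tP)\pmod 2$, which is the ``$(-1)^{\chi}$'' rule observed in the $RP^2$ case.)

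The main obstacle is justifying that each Morse-theoretic event really is implemented by a single plaquette relation of the Hamiltonian rather than by some uncontrolled sequence of local moves whose total sign could accidentally cancel. Concretely, one must argue that the cellulation can be refined so that (i) each saddle lies in the interior of a single hexagon whose flip is precisely the pictured recoupling, and (ii) the local Morse model for birth/death is likewise realized by one $H_c$ plaquette flip. Both reduce to standard transversality and subdivision arguments, together with the fact that inserting or removing trivial pairs of moves does not alter the sign since any extra moves come in cancelling pairs (each a single plaquette applied twice). With that in hand, the sign accounting above goes through verbatim.
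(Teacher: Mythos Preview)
Your proposal is correct and follows essentially the same approach as the paper: a sweepout of a small circle over $tP$ governed by a Morse function with one birth, $t$ saddles (one per cross-cap), and one death, yielding $(-1)^{t+2}$ and hence $\psi_{tP}(\emptyset)=0$ for $t$ odd. Your extra paragraph on refining the cellulation so that each Morse event is a single plaquette move is more caution than the paper exercises; the paper simply invokes the ``Morse $=-1$'' skein rule directly.
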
\qed

To gain some intuition, let us now consider a g.s.w.f.~$\psi_P(\;)$ with $\psi_P\left(RP^1\right) = 1$, which evaluates to one on the essential loop.  Consider $P$ as the unit disk $D$ modulo antipodal boundary identifications.  There is an embedding of the Klein bottle $K \hookrightarrow P\times S^1$ corresponding to revolving $RP^1$ (a diameter of $D$) by $\pi$ as one traverses the $S^1$ coordinate.

\begin{figure}[hbpt]
\[\begin{xy}<5mm,0mm>:
(0,0)*{\bullet}*\xycircle<71pt>{-}
,(0,1)*{RP^1}
,(-5,0);(5,0)**\dir{-}
,(0,-2.5)*{P}
,(10,0)*{\text{yields $K\hookrightarrow P\times S^1$}}
\ar@`{(2.25,3.5),(-2.25,3.5)}_{\pi\text{-rotation}} (2.5,0.5);(-2.5,0.5)
\end{xy}\]
\caption{}
\label{fig:1}
\end{figure}

Any $1$-parameter history of multi-loops including births, deaths, and reconnections beginning and ending with $RP^1$ defines a closed surface $S\hookrightarrow P\times S^1$.  The second homology of $P\times S^1$ obeys the K\"{u}nneth formula:
\[H_2\left(P\times S^1\right)\cong H_1(P)\times H_1\left(S^1\right)\oplus H_2(P) \times H_0\left(S^1\right) \cong Z_2\oplus Z_2\]
Choosing the initial multi-loop to be $RP^1$ or any other choice in that homology class fixes the first factor to be $1\in Z_2$.  The two remaining choices corresponding to $H_2(P)\cong Z_2$ are exemplified by the product history (nothing happens!) where $S\cong S^1\times S^1 := T$, a torus, and the twisting history (Figure \ref{fig:1}) where $S\cong K$.

Any other choice of history $S^\prime$ is homologous in $P\times S^1$ to either $T$ or $K$.  Since both $T$ and $K$ are cobordant to $\emptyset$ (bound closed manifolds), and (by Fact \ref{fact:3}) $S^\prime$ is cobordant to $T$ or $K$, we see that $S^\prime$ is cobordant to $\emptyset$.  (Concretely, glue the cobordism from $S^\prime$ to $T$ or $K$ to the null cobordism along their common boundary ($T$ or $K$).)  Thus by Fact \ref{fact:1}, $\chi\left(S^\prime\right) = \text{even}$.  Consequently the skein relations between multi-loops in the non-trivial homology class are consistent: an even number of $(-1)$s appear as we move from any essential multi-loop configuration through others and finally back to itself.  This means that there is a nonzero $\DS$ g.s.w.f., with $\psi_P\left(RP^1\right) = 1$.  With this warmup we can complete the proof of the theorem.

\begin{proof}[Proof of \ref{thm:A}]
Let $\Sigma$ be a closed surface with first $Z_2$-Betti number $=b$.  Let the elements $e_1,\ldots,e_{2^b}$ of $H_1(\Sigma)$ span $\mathbb{C}^{2^b}$.  There is a map $\theta,\theta(\psi) = \left(\psi\left(E_1\right),\ldots,\psi\left(E_{2^b}\right)\right)$ which injects the Hilbert space $\DS(\Sigma)$ into $\mathbb{C}^{2^b}$, where $E_i$ is any multi-loop representing $e_i$.

To determine $\dim(\DS(\Sigma))$, it is sufficient to characterize the $e_i$ for which $\psi\left(E_i\right)$ can be nonzero.

\begin{claim}\label{claim}
There is a $\DS$ g.s.w.f.~$\psi_i$ so that $\psi_i\left(E_i\right) = 1$ and $\psi_i\left(E_j\right) = 0$, $i\neq j$, if and only if $w_1\left(e_i\right) + \chi(\Sigma) = 0$.  $w_1$ is the first Stiefel-Whitney class. $\left\{\theta\left(\psi_i\right)\right\}$ spans $\image\theta$.

If $\Sigma$ is orientable, note that $w_1\left(e_i\right) + \chi(S) \equiv 0+0 = 0\mod 2$.

Also note the claim implies the theorem since, for a non-orientable $\Sigma = tP$, $w_1 : H_1(\Sigma)\rightarrow Z_2$ is onto, so exactly $\frac{1}{2}$ of the $2^t$ first homology classes have $w_1\left(e_i\right) = 0$ and exactly $\frac{1}{2}$ have $w_1\left(e_i\right) = 1$.
\end{claim}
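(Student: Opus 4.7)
The plan is to translate the existence of $\psi_i$ into a global sign-consistency condition for histories of surgery moves, and then to reduce this to an Euler-characteristic computation via Wu's formula.

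Any $\DS$ g.s.w.f.\ $\psi$ reverses sign under each surgery (balloon) move, so for multi-loops $F,F'$ in the same class joined by a sequence of $N$ moves (a ``history''), one has $\psi(F) = (-1)^N \psi(F')$. The natural recipe is to pick representatives $E_i$ of the classes $e_i$, define $\psi_i(F) := (-1)^{N(H)}$ for any history $H$ from $E_i$ to $F$, and set $\psi_i(F):=0$ if $[F] \neq e_i$ (which automatically gives $\psi_i(E_j)=0$ for $j\neq i$, since local moves preserve $\mathbb{Z}_2$-homology class). Two histories with common endpoints assemble into a closed history at $E_i$, traced out by a closed surface $S\subset\Sigma\times S^1$ whose horizontal slices all lie in class $e_i$; viewing the projection $S\to S^1$ as a Morse function, each move is a critical point, so a standard index count gives $N\equiv\chi(S)\pmod 2$. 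Thus $\psi_i$ is well-defined precisely when $\chi(S)\equiv 0\pmod 2$ for every such closed $S$.

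By Fact~\ref{fact:3}, any two such surfaces in the same $\mathbb{Z}_2$-class in $H_2(\Sigma\times S^1)$ are cobordant in $\Sigma\times S^1$, and by Fact~\ref{fact:1} cobordant closed surfaces have equal Euler characteristic mod~$2$, so $\chi\bmod 2$ is a function of $[S]\in H_2(\Sigma\times S^1;\mathbb{Z}_2)$. The K\"unneth formula gives $H_2(\Sigma\times S^1)\cong H_2(\Sigma)\oplus H_1(\Sigma)\otimes H_1(S^1)$; the slice condition forces the $H_1(\Sigma)$ factor to be $e_i$, leaving a single binary invariant $a\in\mathbb{Z}_2$. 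The $a=0$ class has the product representative $E_i\times S^1$, a disjoint union of tori with $\chi=0$, so consistency is automatic there.

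The main calculation is $\chi(S)\bmod 2$ for the $a=1$ class. Let $\alpha:=\operatorname{PD}_{\Sigma\times S^1}(S)\in H^1(\Sigma\times S^1;\mathbb{Z}_2)$; by the K\"unneth decomposition and the slice condition, $\alpha=\beta+at$ with $\beta=\operatorname{PD}_\Sigma(e_i)\in H^1(\Sigma)$ and $t$ the generator of $H^1(S^1)$. Splitting $T(\Sigma\times S^1)|_S=TS\oplus NS$ and using the standard identity $w_1(NS)=\iota^*\alpha$ for the inclusion $\iota:S\hookrightarrow\Sigma\times S^1$ yields $w_1(TS)=\iota^*(w_1(\Sigma)+\alpha)$; then Wu's formula $w_2=w_1^2$ on the closed surface $S$ together with the projection formula gives
\[
\chi(S)\equiv\int_{\Sigma\times S^1}\bigl(w_1(\Sigma)^2+\alpha^2\bigr)\cup\alpha\pmod 2.
\]
Expanding with $t^2=0$ and $\int_\Sigma\gamma\cdot t=\int_\Sigma\gamma$ for $\gamma\in H^2(\Sigma)$ reduces this to $a\bigl(\int_\Sigma\beta^2+\int_\Sigma w_1(\Sigma)^2\bigr)$. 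Applying Wu's formula on $\Sigma$ gives $\int_\Sigma\beta^2=\langle w_1(\Sigma)\cup\beta,[\Sigma]\rangle=w_1(e_i)$ and $\int_\Sigma w_1(\Sigma)^2=\chi(\Sigma)\bmod 2$, so $\chi(S)\equiv a\bigl(w_1(e_i)+\chi(\Sigma)\bigr)\pmod 2$. This cup-product bookkeeping is the main technical step.

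With the formula in hand, both directions follow. If $w_1(e_i)+\chi(\Sigma)\equiv 1$, the $a=1$ history accumulates sign $-1$, forcing $\psi(E_i)=0$ for every g.s.w.f.\ $\psi$. If $w_1(e_i)+\chi(\Sigma)\equiv 0$, the sign is consistent on multi-loops in class $e_i$, so the recipe above defines a valid $\psi_i$; checking the local reconnection relations verifies $\psi_i\in\DS(\Sigma)$. Finally, $\theta(\psi_i)$ is the standard basis vector supported at coordinate $i$, so the admissible $\{\theta(\psi_i):w_1(e_i)+\chi(\Sigma)=0\}$ span $\image\theta$, which by the ``only if'' direction has no other nonzero coordinates.
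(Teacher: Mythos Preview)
Your proof is correct and follows the same overall architecture as the paper's (histories $\leadsto$ closed surfaces $S\subset\Sigma\times S^1$, reduce via Facts~\ref{fact:1} and~\ref{fact:3} to computing $\chi(S)\bmod 2$ on each K\"unneth class), but you diverge at the key step. The paper computes $\chi(S)$ for the nontrivial class by building an explicit representative: it resolves the singular union $(E_i\times S^1)\cup(\Sigma\times 1)$ along the intersection circle $E_i\times 1$, analyzes whether the normal-cross resolution can be carried out consistently around the circle (success iff $w_1(e_i)=0$), and when it fails inserts a disk along a ``baseball curve,'' tracking the effect on $\chi$ via additivity. You instead compute $\chi(S)\bmod 2$ purely with characteristic classes: take $\alpha=\operatorname{PD}(S)$, use $w_1(NS)=\iota^*\alpha$ to get $w_1(TS)$, invoke Wu's relation $w_2=w_1^2$ on closed surfaces, and push the resulting integral to $\Sigma\times S^1$ by the projection formula. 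Your route is cleaner and avoids the geometric case analysis; the paper's route is more visual and makes the role of the M\"obius neighborhood of $E_i$ explicit. Both land on $\chi(S)\equiv a\bigl(w_1(e_i)+\chi(\Sigma)\bigr)$.
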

\begin{proof}[Proof of \ref{claim}]
As in our warmup discussion of $RP^2$, there are precisely two homology classes $E_i\times S^1$ in $\Sigma\times S^1$ which realize a given $e_i$ upon intersection with $\Sigma\times 1$.  One is represented by the constant history.  Any surface $S$ (history) in this class is cobordant to $E_i\times S^1$ which bounds $E_i\times D^2$, so $S$ is cobordant to $\emptyset$, hence $\chi(S) \equiv 0\mod 2$.  In this case, the skein relations merely say that $\psi\left(E_i\right) = (-1)^\text{even}\psi\left(E_i\right)$ and pose no restriction.  The other possibility is more interesting: $\Sigma$ homologous to $E_i\times S^1 + \Sigma\times 1$.  In this case we compute:
\be\label{eqn:ast}
\chi(S) \equiv w_1\left(E_i\right) + \chi(\Sigma)\mod 2.\tag{$\ast$}
\ee
$\psi\left(E_i\right)$ is forced to vanish exactly when the right-hand side is odd.  Thus it suffices to establish (\ref{eqn:ast}) for any representative of the cobordism class of $S$.  There is no loss of generality taking $E_i$ to be a single circle, which we do.  The natural choice is a ``resolution'' of the union $\left(E_i\times S^1\right)\cup (\Sigma\times 1)$, a union of a torus and a copy of $S$ along $E_i\times 1$.  This means that in each normal disk cross-section of $E_i\times 1$ we resolve the crossing
$\begin{xy}<5mm,0mm>:
(-0.5,0);(0.5,0)**\dir{-}
,(0,-0.5);(0,0.5)**\dir{-}
\end{xy}$
of the two surfaces as either
$\begin{xy}<5mm,0mm>:
(-0.5,0);(0,0.5)**\crv{(0,0)},(0,-0.5);(0.5,0)**\crv{(0,0)}
\end{xy}$
or
$\begin{xy}<5mm,0mm>:
(-0.5,0);(0,-0.5)**\crv{(0,0)},(0,0.5);(0.5,0)**\crv{(0,0)}
\end{xy}$
.  We try to do this continuously all the way around $E_i\times 1$, and we will succeed precisely when $w_1\left(e_i\right) = 0$, for in this case $E_i\subset \Sigma$ has a cylindrical neighborhood (as does $E_i\times 1\subset E_i\times S^1$).  We will fail precisely when $w_1\left(E_i\right) = 1$, i.e., $E_i$ has a M\"{o}bius band neighborhood in $\Sigma$.  In this case, near the contradictory point we see the two resolutions fitting together to form a ``baseball'' curve near the final point (see Figure \ref{fig:2}).

\begin{figure}[hbpt]
\[\begin{xy}<5mm,0mm>:
(0,0)*\xycircle<71pt>{-} % main circle
,(0,0)*\xycircle<21pt,71pt>{++\dir{--}} % vertical equator
,(1.5,0);(5,0)**\crv{(2.5,0)&(5,1)} % right middle
,(5,0);(0,-5)**\crv{(4,-4)} % right bottom
,(0,-5);(-5,0)**\crv{(-0.25,-5)&(-0.125,-4)&(-3.5,-3.5)&(-4.5,0.5)&(-5,0.5)} % left bottom
,(-5,0);(4.33,2.5)**\crv{~**\dir{.}(-5,-0.5)&(-4.5,-0.5)&(4.5,1.25)} % dotted
,(4.33,2.5);(0,5)**\crv{(4.2,2.75)&(3.25,3.75)&(1,4.5)&(1,5)} % top left
,(0,5);(1.5,0)**\crv{(-3,4)&(-3,0)} % left middle
\end{xy}\]
\caption{}
\label{fig:2}
\end{figure}

Thus, to compute the resolution, a disk $\delta$ must be added bounding the baseball curve in the ball in Figure \ref{fig:2}.  Using the additivity formula for Euler characteristic
\be\label{eqn:astast}
\chi(A\cup B) = \chi(A) + \chi(B) - \chi(A\cap B)\tag{$\ast\ast$},
\ee
one readily checks (\ref{eqn:ast}) by showing that when $w_1\left(E_i\right) = 0$, $\chi(\text{Res}) = \chi(\Sigma)$ and when $w_1\left(E_i\right) = 1$, the partial gluing along two arcs (not circles) contributes $-2$ to $\chi$ while $\delta$ contributes $+1$, so in total $\chi(\text{Res}) = \chi(\Sigma) - 1$.
\end{proof}
\end{proof}

\subsubsection*{Note}
As an example, on $K$ any $\DS$ g.s.w.f.~$\psi$ evaluates to zero on the two non-orientable $1$-cycles, but is arbitrary on the oriented cycles.  In general the restriction on a $\DS$ g.s.w.f. on $tP$ is that it evaluate to zero on non-orientable cycles if and only if $t$ is even and that it evaluate to zero on orientable cycles if $t$ is odd.
%%%%% End insert %%%%%

In subsequent sections we will identify classes of $d$-manifolds $X$ for which $\GDS$s on $X$ have ground state degeneracy (gsd) $= \left|H_1\left(X; Z_2\right)\right|$.  In the remainder of this section we show a dramatic drop in ground state degeneracy if $\DS$ is generalized to a theory of multi-loops of dimension one fluctuating within closed $3$-manifolds.  This motivates our focus on fluctuating codimension one surfaces within $d$-dimensional manifolds.

First consider a theory in which \emph{unframed} multi-loops fluctuate in a $3$-manifold $M^3$ with the zero mode ground state wave function (zgswf) experiencing a phase $=-1$ with each Morse transition.  Even locally, within a $3$-ball, a closed cycle with one Morse transition exists:

\begin{equation}\label{line:3.1}
\begin{xy}<5mm,0mm>:
(-0.67,0)*\xycircle<10pt>{}
,(4.25,0);(3.75,0.3125)**\crv{(4.5,-0.25)&(5.875,0.25)&(5,1.25)&(5.5,1.5625)&(6.5,1.375)&(7.25,1)&(6.75,-0.75)&(2.5,-0.75)&(5.5,1)&(4.5,2)&(2.875,1.25)}
,(11.375,0);(10.875,0.3125)**\crv{(11.625,-0.25)&(13,0.25)&(12.625,1.5625)&(9.625,-0.75)&(13.875,-0.75)&(14.375,1)&(13.625,1.375)&(12.625,1)&(11.625,2)&(10.5,1.25)}
,(18.5,0)*\xycircle<10pt>{}
\ar^(0.33){\text{isotopy}}@`{(1.33,0.5),(2,-0.5)} (0.67,0);(2.67,0)
\ar^{\text{Morse}} (7.67,0);(9.67,0)
\ar^(0.33){\text{isotopy}}@`{(15.67,0.5),(16.33,-0.5)} (15,0);(17,0)
\end{xy}
\end{equation}

The local inconsistency: $\psi(\text{circle}) = -\psi(\text{circle})$ forces any zgswf to be identically zero on any $M^3$.

The explanation for this local inconsistency is that we neglected a normal framing, which we now add.  So we now assume all Morse transitions respect framings.  The beginning and end pictures on line (\ref{line:3.1}) now have different framings so there is no inconsistency.

However, even in the context of framed multi-loops, we find that the Hilbert space $V(M)$ of this theory for any closed oriented $3$-manifold $M$ has dimension $=1$, and below (Theorem \ref{thm:3.6}) characterize the single nontrivial sector.

\subsubsection*{Note}
It is proved in \cite{WW} that for any modular tensor category (MTC), the skein space on a closed oriented $3$-manifold has dimension one, so the dimension count in Theorem \ref{thm:3.6} is merely a special case, since ``semions'' form a MTC.

Given a closed $3$-manifold, let $E\in H_1(M)\overset{\text{P.D.}}{\cong}H^2(M)\cong\hom\left(H_2(M);Z_2\right)$ denote the class which assigns to any closed surface embedded in $M$, $\Sigma\hookrightarrow M$, its Euler class $\chi(\Sigma)\mod 2$.

\begin{theorem}\label{thm:3.6}
Let $M$ be a closed oriented $3$-manifold and let $V(M)$ be the ``Hilbert space'' of framed multi-loops embedded in $M$ with skein relation: ``Morse $=-1$''.  $\dim V(M)=1$, with $E$ being the nontrivial sector.
\end{theorem}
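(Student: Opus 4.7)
The plan is to lift the argument of Theorem~\ref{thm:A} by one dimension: the ambient surface $\Sigma$ becomes the ambient oriented $3$-manifold $M$, and the history surfaces $S\subset\Sigma\times S^1$ tracking a loop family become framed $2$-surface histories $S\subset M\times S^1$ tracking a $1$-parameter family of framed multi-loops in $M$.

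First I would pin down the support of any nonzero $\psi\in V(M)$.  Fix $L$ with $\psi(L)\ne 0$ and choose any closed embedded surface $\Sigma\hookrightarrow M$ in general position with $L$.  Build a closed $1$-parameter history of $L$ by (i)~birthing a small framed loop $c$ on $\Sigma$ away from $L\cap\Sigma$, (ii)~sweeping $c$ across $\Sigma$ via the level sets of a Morse function $\Sigma\rightarrow\mathbb{R}$, detouring around each of the $k=|L\cap\Sigma|$ punctures created by the transverse intersections with $L$, (iii)~killing $c$, and (iv)~if $\Sigma$ is non-orientable in $M$, compensating the residual framing twist by Reidemeister-I-like moves, which come in pairs and thus contribute $(-1)^{\text{even}}=+1$.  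A direct Morse count gives the total number of events along this closed history as $\chi(\Sigma)+k\equiv\chi(\Sigma)+[L]\cdot[\Sigma]\pmod 2$.  Since $\psi(L)$ must be preserved by any closed history, this parity is forced to be even for every $\Sigma$, which by the defining property of $E$ is precisely the condition $[L]=E$.

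Next I would establish sign-consistency and uniqueness on the $E$-class.  Any two framed multi-loops $L_1,L_2$ with $[L_i]=E$ are $Z_2$-homologous in $M$ and hence cobordant via a framed $W\subset M\times I$ (by the codimension-$2$ Pontryagin--Thom argument analogous to Fact~\ref{fact:3}); a generic projection to the time axis realizes $W$ as a sequence of framed Morse moves assigning $\psi(L_2)/\psi(L_1)=(-1)^{\chi(W)}$.  Two such cobordisms $W,W'$ glue along $L_1\coprod L_2$ into a closed framed surface $S\subset M\times S^1$ whose $H_2(M;Z_2)$-component vanishes, so the parity formula of the previous paragraph yields $\chi(S)=\chi(W)+\chi(W')\equiv 0\pmod 2$ and the two ratios agree.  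Because every $Z_2$-homology class in $H_1(M;Z_2)$ admits an embedded representative, and rank-$2$ normal bundles over $1$-manifolds in an oriented $3$-manifold are trivial, a framed realization of $E$ exists and furnishes a nonzero ground state $\psi$, establishing $\dim V(M)=1$ with $E$ the nontrivial sector.

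The main obstacle is the parity count of the first step.  One has to verify carefully that the local detour of the sweeping loop around each transverse crossing point contributes exactly one Morse event modulo $2$, and that the framing-compensation moves needed when $\Sigma$ is non-orientable in $M$ always come in pairs, so that the final parity is genuinely $\chi(\Sigma)+[L]\cdot[\Sigma]\pmod 2$.  This local bookkeeping is the three-dimensional analogue of the Möbius-band/baseball-curve resolution argument that produces~\eqref{eqn:ast} in the proof of Theorem~\ref{thm:A}.
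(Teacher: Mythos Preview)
Your first step recovers exactly the paper's formula $\chi(\bar h)\equiv\chi(x)+[L]\cdot x\pmod 2$, where $x=[\Sigma]$ is the $H_2(M)$-component of the closed history; the paper obtains it more cheaply by writing $[\bar h]=[L]\times[S^1]+x\times[\text{pt}]$ via K\"unneth and then computing $\chi$ of a single resolved representative $L\times S^1\cup\Sigma\times\text{pt}$, rather than constructing an explicit sweepout and tallying detours and framing fixes. Both land on the same obstruction, so the necessity direction (only the sector $[L]=E$ can survive) is fine.

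The sufficiency direction has a genuine gap. Your claim that the glued surface $S=W\cup_{\partial}\overline{W'}\subset M\times S^1$ has vanishing $H_2(M;Z_2)$-component is unjustified and in general false: two framed cobordisms between the same framed links can differ by any $x\in H_2(M;Z_2)$. What actually makes the $E$-sector consistent is that for $[L]=E$ the parity formula gives $\chi(x)+E\cdot x=\chi(x)+\chi(x)\equiv 0$ for \emph{every} $x$; this is the paper's one-line conclusion. But your first step only establishes the parity formula for your particular sweepout histories, one per class $x$, not for an arbitrary closed history such as $S$. To close the argument you must add the cobordism-invariance step (Fact~\ref{fact:3} applied in $M\times S^1$): any closed framed history is cobordant to one of your sweepouts in the same $H_2(M\times S^1)$-class, hence shares its $\chi\bmod 2$. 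Once that is said, the erroneous ``$x=0$'' claim can simply be deleted.
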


\begin{proof}
The argument closely parallels the proof of Theorem \ref{thm:A}, so it is only sketched.  Let $L$ be a loop or multi-loop in class $F\in H_1(M)$.  Consider a ``history'' $h$ beginning at $L\times 0\subset M\times 0\subset M\times [0,1]$ and ending at $L\times 1\subset M\times 1\subset M\times[0,1]$ and then identify boundaries to embed $M\times(0,1)\hookrightarrow M\times[0,1] / m\times 0 = m\times 1 = M\times S^1$.  Let us compute the mod $2$ Euler class of the closed history $\bar{h} \subset M\times S^1$.  $[\bar{h}] = [L] \times [S^1] + x\times [\text{pt.}]\in H_2\left(M\times S^1\right)$, where $x\in H_2(M)$.  Using the data of the normal framing, one computes:

\[\chi\left(\bar{h}\right)\equiv\chi\left(L\times S^1\right) + \chi(x) + \left|L\times S^1\cap x\right|\mod 2\]
\[\equiv 0 + \chi(x) + F\cdot x\mod 2\]

The right-hand side is zero for all histories, i.e., for all $x\in H_2(M)$, if and only if $F = [L] = E$, which is precisely the case that the sector survives in $V(M)$.
\end{proof}

\section{The theories we study: ``fluctuating ($d-1$) submanifold of $X^d$''}

A smooth closed $X^d$ manifold will be the home or ``ambient space'' of our generalized double semion ($\GDS$) theory.  We consider a fixed combinatorial structure $\mathcal{C}$, a generic cellulation, on $X$.  Associated to $\mathcal{C}$ is a ``crude Hilbert space'' $\mathcal{H}$ consisting of one qubit for each ($d-1$)-cell of $\mathcal{C}$, spanned by $\langle\uparrow = \text{present}, \downarrow = \text{absent}\rangle$.  There is a Hamiltonian:
\[H = H_+ + H_\square\]
acting on $\mathcal{H}$.  We further write:
\[H_+ = \sum_{d-2\text{ cells }e} H_e\text{, }H_\square = \sum_\text{$d$-cells $c$} H_c.\]
$H_+$ has a term for each ($d-2$)-cell $P$ which is zero if an even number of ($d-1$)-cells meeting $P$ are $\uparrow$ (present) and $1$ if an odd number are $\uparrow$.  This is the same as the $H_+$ term for the generalized toric code.

$H_\square$ has a term $H_c$ for each $d$-cell $c$.  This term is
\be\label{eq:4.1}
H_c=\frac{1-O_c}{2},
\ee
where
\be
\label{Ocdef}
O_c=\pm \prod_{\text{$(d-1)$ cells $f \in \partial c$}} X_{f}.
\ee
The operator $X_{f}$ is the Pauli-$X$ operator acting on the cell $f$, changing the state of the cell.  The sign in (\ref{Ocdef}) is $-1 \times -1^{\chi\left(\uparrow_c\right)}$ where $\chi$ is Euler characteristic and $\uparrow_c$ is the codimension $=0$ submanifold of $\partial c$ consisting of the union of ($d-1$) cells of $\partial c$ which are labeled $\uparrow$ in the state on which $H_c$ acts.  We will show in the appendix that this subset is indeed a smooth submanifold with corners.  This is the reason that we chose a generic cellulation; it is analogous to the reason for defining the double
semion model in two dimensions on a hexagonal lattice, rather than a square lattice.

The boundary of $\uparrow_c$ regarded as a submanifold of $\partial c$ is the same as the boundary of $\downarrow_c$ and is equal to $\uparrow_c \cap \downarrow_c$.
For even $d$, the Euler characteristic of $\uparrow_c \cap \downarrow_c$ is equal to twice the Euler characteristic of $\uparrow_c$, so
$-1^{\chi\left(\uparrow_c\right)}$ is equal to $i^{\chi\left(\uparrow_c \cap \downarrow_c \right)}$.  This makes the sign more reminiscent of the sign in the double semion model
in two dimensions, where there is a factor of $i$ for every leg leaving the hexagon.

\subsubsection*{Explanation of generic cellulations}

Generic cellulations may be defined as those divisions of a smooth closed $d$-manifold $X^d$ into a union of smooth $k$-cells, $0\leq k\leq d$, piecewise smooth on their boundaries, which obey the local combinatorics of the dual cells to a smooth triangulation of $X$.  For example, on a surface ($d=2$), a trivalent graph with contractible complementary regions determines a generic cellulation.  In 3D the point singularities are cone over the $1$-skeleton of a tetrahedron, etc.  A second construction which works in any dimension is to start with not a smooth triangulation but rather Riemannian metric.  Then any sufficiently dense set $S$ of points, if perturbed to be generic, will determine Voronoi cells (the $d$-cell interiors consist of all points closest to some $s\in S$) which define a generic cellulation.  We work with these structures rather than some fixed lattice for two reasons:
\begin{enumerate}
    \item every union $U$ of ($d-1$) cells whose $Z_2$-boundary vanishes ($\Leftrightarrow$ every ($d-2$) cell meets an even number of ($d-1$) cells of $U$) is a submanifold.
    \item heritability: The intersection pattern induced on the boundary of any cell of a generic cellulation is \emph{also} a generic cellulation in a lower dimension.  For more details, see appendix.
\end{enumerate}

\begin{lemma}
\label{termscommlemma}
All terms of $H_+$ pairwise commute.  All terms in $H_+$ commute with all terms in $H_\square$.  All terms in $H_\square$ pairwise commute when restricted to the
eigenspace of $H_+$ with vanishing eigenvalue.
\end{lemma}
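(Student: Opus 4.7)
The plan is to handle the three claims in order of increasing subtlety.  The first assertion, that the $H_e$ pairwise commute, is immediate: each $H_e$ is a polynomial in Pauli-$Z$ operators and hence diagonal.  For $[H_e,H_c]=0$, I observe that $O_c$ is a state-dependent sign times $X_{\partial c}:=\prod_{f\in\partial c}X_f$, and the sign factor (being diagonal in the computational basis) commutes with $H_e$.  It therefore suffices to verify $[H_e,X_{\partial c}]=0$, which I would argue by local genericity: if $e\notin\partial c$ then no $(d-1)$-cell incident to $e$ lies in $\partial c$, while if $e\in\partial c$ then exactly two such cells lie in $\partial c$ (the $(d-1)$-faces shared by $c$ with each of the other two $d$-cells meeting $e$).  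Either way $X_{\partial c}$ flips an even number of the $Z_f$'s appearing in $H_e$, yielding commutation.

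The third and main claim is that $[H_c,H_{c'}]=0$ on the zero eigenspace of $H_+$.  Write $O_c|s\rangle = d_c(s)|s\oplus\partial c\rangle$ on basis states, with $d_c(s) = -(-1)^{\chi(\uparrow_c(s))}$.  The Pauli-$X$ blocks of $O_c$ and $O_{c'}$ commute as operators, so $O_cO_{c'}=O_{c'}O_c$ reduces on $|s\rangle$ to the scalar identity $d_c(s)/d_c(s\oplus\partial c')=d_{c'}(s)/d_{c'}(s\oplus\partial c)$.  In a generic cellulation $\partial c\cap\partial c'$ is either empty (immediate commutation) or a single $(d-1)$-cell $f_0$, so toggling $\partial c'$ affects $\uparrow_c$ only at $f_0$.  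Writing $\delta_c(s)$ for the mod-$2$ change in $\chi(\uparrow_c)$ under this toggle, the identity becomes $\delta_c(s)\equiv\delta_{c'}(s)\pmod 2$.  Applying $\chi(A\cup B)=\chi(A)+\chi(B)-\chi(A\cap B)$ with $B=f_0$ and $A$ the rest of $\uparrow_c$, and using $\chi(f_0)=1$, yields
\[
\delta_c(s)\equiv 1+\chi(M_c(s))\pmod 2,
\]
where $M_c(s)$ is the union of those $(d-2)$-cells $e\in\partial f_0$ for which $s(f_1(e,c))=1$, with $f_1(e,c)$ the unique $(d-1)$-cell in $\partial c\setminus\{f_0\}$ incident to $e$.

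The main obstacle is the mod-$2$ identity $\chi(M_c(s))\equiv\chi(M_{c'}(s))$.  At each $e\in\partial f_0$ the three $(d-1)$-cells of $X$ meeting $e$ are $f_0,f_1(e,c),f_1(e,c')$ (the pairwise intersections of the three $d$-cells meeting $e$), and the $H_+$ relation at $e$ reads $s(f_0)+s(f_1(e,c))+s(f_1(e,c'))\equiv 0\pmod 2$.  If $s(f_0)=0$ this forces $s(f_1(e,c))=s(f_1(e,c'))$ at every $e$, so $M_c(s)=M_{c'}(s)$ as subsets of $\partial f_0$ and the equality is trivial.  If $s(f_0)=1$ then $M_c$ and $M_{c'}$ partition the top-dimensional cells of $\partial f_0$, so $M_c\cup M_{c'}=\partial f_0$, and inclusion-exclusion gives
\[
\chi(M_c)+\chi(M_{c'})\equiv\chi(\partial f_0)+\chi(M_c\cap M_{c'})\equiv\chi(M_c\cap M_{c'})\pmod 2
\]
because the sphere $\partial f_0$ has even Euler characteristic.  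Invoking the appendix's guarantee that $M_c$ is a codim-$0$ submanifold-with-corners of $\partial f_0$ with topological boundary $M_c\cap M_{c'}$, the remaining claim $\chi(M_c\cap M_{c'})\equiv 0\pmod 2$ splits by the parity of $d$: for even $d$ the boundary $\partial M_c$ is a closed odd-dimensional manifold, so $\chi(\partial M_c)=0$; for odd $d$, $M_c$ itself is a compact odd-dimensional manifold-with-boundary, so $\chi(\partial M_c)=2\chi(M_c)$ is even.  Either case delivers $\delta_c\equiv\delta_{c'}\pmod 2$ and commutation follows.
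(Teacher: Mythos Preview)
Your proof is correct and follows essentially the same strategy as the paper: diagonality for the $H_+$ terms, the even-overlap argument for $[H_e,H_c]$, and inclusion--exclusion together with the $H_+$ constraint to match the Euler-characteristic signs for $[H_c,H_{c'}]$.  Where the paper dismisses the case $s(f_0)=1$ with ``a similar calculation,'' you supply the details via the complementarity of $M_c$ and $M_{c'}$ in $\partial f_0$ and the even parity of $\chi(\partial M_c)$; this is a useful elaboration rather than a genuinely different route.
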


\begin{proof}
The terms in $H_+$ pairwise commute since they are all diagonal in the product basis $|\uparrow\rangle,|\downarrow\rangle$.
We assume the reader is familiar with the analogous verification that the terms of Kitaev's \cite{K} toric code ($\TC$) commute.  As with $\TC$, adjacent $+$ and $\square$ operators share
two anticommuting factors and therefore commute.  The local sign is irrelevant.

Consider a pair of terms $H_{c_1},H_{c_2}$ in $H_\square$.
They trivially commute unless $\partial c_1$ and $\partial c_2$ share some $(d-1)$ cell $f$.
Consider a state $\Psi$ in the product basis on which $H_+$ vanishes.
We show that $O_{c_1} O_{c_2} \Psi=O_{c_2} O_{c_1} \Psi$.
The states $O_{c_1} O_{c_2} \Psi$ and $O_{c_2} O_{c_1} \Psi$ are both product states and have the same spin configuration.  However, we must check that they have the same sign.
Suppose first that $f$ is labeled $\downarrow$ in $\Psi$, so that $O_{c_2}$ changes the set of $\uparrow$ spins on $c_1$ to $\uparrow_{c_1} \cup f$.
So, the sign for $O_{c_1} O_{c_2}$ is equal to
$-1^{\chi\left(\uparrow_{c_1} \cup f \right)}        -1^{\chi\left(\uparrow_{c_2}\right)}$.
A similar computation for $O_{c_2} O_{c_1}$ gives
$-1^{\chi\left(\uparrow_{c_2}\cup f\right)}          -1^{\chi\left(\uparrow_{c_1}\right)}$.
So we must check that
\be\label{eq:4.2}
-1^{\chi\left(\uparrow_{c_1} \cup f \right)}        -1^{\chi\left(\uparrow_{c_2}\right)}=
-1^{\chi\left(\uparrow_{c_2}\cup f\right)}          -1^{\chi\left(\uparrow_{c_1}\right)}.
\ee
By additivity formula of Euler characteristic, $\chi\left(\uparrow_{c_1}\cup f\right)=\chi\left(\uparrow_{c_1}\right)+\chi(f)-\chi(\uparrow_{c_1} \cap f)$.
So, we must check that
\be
\chi(\uparrow_{c_1}\cap f)+\chi(\uparrow_{c_2}\cap f)=0 \mod 2.
\ee
However, by assumption that $H_+$ vanishes on $\Psi$, $\uparrow_{c_1}\cap f = \uparrow_{c_2}\cap f$.
If instead $f$ is labeled $\uparrow$ in $\psi$, we must check that
$-1^{\chi\left(\uparrow_{c_1} \setminus f \right)}        -1^{\chi\left(\uparrow_{c_2}\right)}=
-1^{\chi\left(\uparrow_{c_2}\setminus f\right)}          -1^{\chi\left(\uparrow_{c_1}\right)}$.
A similar calculation using additivity formula for Euler characteristic shows this case also.
\end{proof}

\begin{lemma}
All terms of $H$ are projectors.
\end{lemma}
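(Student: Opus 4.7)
The plan is to establish, for each local term $H = \tfrac{1}{2}(1 - A)$ appearing in the Hamiltonian, that $A$ is Hermitian and satisfies $A^2 = \mathrm{id}$; these two properties together make $H$ an orthogonal projector. For a vertex term $H_e$ the operator is $A = \prod_f Z_f$, a product of mutually commuting Hermitian Paulis each squaring to the identity, so both properties are immediate.

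For a $d$-cell term $H_c = \tfrac{1}{2}(1-O_c)$, I would unpack $O_c$ as a state-dependent sign $s(\Psi) = -(-1)^{\chi(\uparrow_c^\Psi)}$ composed with the spin-flip $X_c := \prod_{f\in\partial c}X_f$, so that $O_c|\Psi\rangle = s(\Psi)|X_c\Psi\rangle$ on product-basis states. Since $\uparrow_c^{X_c\Psi} = \downarrow_c^\Psi$, Hermiticity of $O_c$ (equality of the only nonzero off-diagonal entry $s(\Psi)$ with its conjugate transpose $s(X_c\Psi)$) and idempotency ($O_c^2|\Psi\rangle = s(\Psi)\,s(X_c\Psi)|\Psi\rangle$) both reduce to the single parity statement
\[
\chi(\uparrow_c) + \chi(\downarrow_c) \equiv 0 \pmod{2}.
\]

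For this parity identity I would apply the inclusion-exclusion formula $\chi(\uparrow_c) + \chi(\downarrow_c) = \chi(\partial c) + \chi(B)$, where $B := \uparrow_c\cap\downarrow_c$ is the common boundary of the two codimension-$0$ submanifolds of $\partial c \cong S^{d-1}$; the appendix guarantees that $\uparrow_c$, $\downarrow_c$, and hence $B$ are genuine smooth submanifolds with corners, with $B$ closed of dimension $d-2$. If $d$ is odd, then $B$ is closed and odd-dimensional so $\chi(B)=0$, while $\chi(S^{d-1})=2$; if $d$ is even then $\chi(S^{d-1})=0$, and Fact~\ref{fact:1} applied to the null-cobordism $\uparrow_c$ of $B$ gives $\chi(B)$ even. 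In either case the sum is even, which is what was needed.

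The only part I expect to require care is the bookkeeping of the state-dependent sign $s(\Psi)$ under a second application of $O_c$ — in particular remembering that the $\uparrow$-labelling seen by the second factor of $O_c$ is the $\downarrow$-labelling of the original state. Once that reduction is in hand, the remainder is purely topological and is supplied directly by Fact~\ref{fact:1} together with standard Euler-characteristic combinatorics.
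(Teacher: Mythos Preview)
Your proposal is correct and follows essentially the same route as the paper: reduce $H_c^2=H_c$ to $O_c^2=\mathrm{id}$, reduce that to $\chi(\uparrow_c)\equiv\chi(\downarrow_c)\bmod 2$, and verify the parity via inclusion--exclusion on $\partial c\cong S^{d-1}$ together with the fact that the common boundary $B=\uparrow_c\cap\downarrow_c$ has even Euler characteristic because it bounds. The only differences are cosmetic: you split the parity of $\chi(B)$ into the cases $d$ odd (where $B$ is odd-dimensional so $\chi(B)=0$) and $d$ even (invoking Fact~\ref{fact:1}), whereas the paper states uniformly that any bounding closed manifold has even Euler characteristic; and you explicitly check Hermiticity of $O_c$, which the paper leaves implicit.
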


\begin{proof}
This is obvious for terms of $H_+$.  For $H_c$ one needs to check that the operator $O_c$
is order two.
To do this, one needs to check that the signs agree:
\be\label{eqn:astastast}
(-1)^{\chi\uparrow_c} \equiv (-1)^{\chi\downarrow_c}\mod 2
\ee
where $\downarrow_c$ is the complement in $\partial c$ of $\uparrow_c$.  But this follows immediately from three facts:
\begin{enumerate}
    \item $\partial c$ is a ($d-1$)-sphere $S^{d-1}$, $\chi(\partial c) = \begin{cases} 0 & d\text{ even} \\ 2 & d\text{ odd} \end{cases}$,
    \item the additivity formula:
        \[\chi\left(\uparrow_c\right) + \chi\left(\downarrow_c\right) - \chi\left(\uparrow_c\cap\downarrow_c\right) = \chi(\partial c) \equiv 0\mod 2,\]
        and
    \item Since $\left(\uparrow_c\cap\downarrow_c\right)$ bounds (it in fact bounds both $\uparrow_c$ and $\downarrow_c$), $\chi\left(\uparrow_c\cap\downarrow_c\right)\equiv 0\mod 2$ since the Euler characteristic of any bounding manifold is even.
\end{enumerate}

The upshot is (\ref{eq:4.2}).
\end{proof}

As in the double semion case in two dimensions, it is possible to define a Hamiltonian which is a sum of commuting projectors by projecting the operators
$H_c$ into the zero eigenspace of the terms $H_e$ for $e$ a face of $c$, setting $H_\square=\sum_c H_c^{proj}$, with
\be
\label{Hcdef2}
H_c^{proj}=\Bigl( \prod_\text{$d-2$ cells $e$, s.t. $e$ is a face of $c$} 1-H_e \Bigr) H_c \Bigl( \prod_\text{$d-2$ cells $e$, s.t. $e$ is a face of $c$} 1-H_e \Bigr).
\ee
Then,
\begin{lemma}
All terms $H_c^{proj}$ pairwise commute and commute with all terms $H_e$.
\end{lemma}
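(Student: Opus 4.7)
The plan is to exploit the unconditional commutativity facts from Lemma~\ref{termscommlemma} to simplify $H_c^{proj}$. Set $P_c = \prod_{e \text{ face of } c}(1-H_e)$, a product of pairwise commuting projectors and hence itself a projector. Because each $H_e$ commutes with $H_c$ by Lemma~\ref{termscommlemma}, $P_c$ commutes with $H_c$, so
\[
H_c^{proj} = P_c H_c P_c = P_c^2 H_c = P_c H_c = H_c P_c.
\]
This reformulation is the key technical step; it replaces the nested conjugation by a single multiplication and turns the remaining argument into a short algebraic manipulation.

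Commutativity of $H_c^{proj}$ with an arbitrary $H_e$ then splits into two cases. If $e$ is a face of $c$, the factor $(1-H_e)$ appears in $P_c$, so $H_e P_c = P_c H_e = 0$, whence both $H_e H_c^{proj}$ and $H_c^{proj} H_e$ vanish. If $e$ is not a face of $c$, then $H_e$ commutes with every factor of $P_c$ (all $H_{e'}$ are pairwise commuting) and also with $H_c$ by Lemma~\ref{termscommlemma}, so it commutes with $H_c^{proj} = P_c H_c$.

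For two $d$-cells $c_1, c_2$, let $Q = P_{c_1} P_{c_2}$, which equals $P_{c_2} P_{c_1}$ since all $H_e$ commute. Using that each $H_e$ commutes with both $H_{c_1}$ and $H_{c_2}$, I would compute
\[
H_{c_1}^{proj} H_{c_2}^{proj} = P_{c_1} H_{c_1}\, P_{c_2} H_{c_2} = P_{c_1} P_{c_2} H_{c_1} H_{c_2} = Q H_{c_1} H_{c_2},
\]
and symmetrically $H_{c_2}^{proj} H_{c_1}^{proj} = Q H_{c_2} H_{c_1}$. The lemma thus reduces to the single identity $Q[H_{c_1}, H_{c_2}] = 0$.

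The main subtlety, and the one step that is not purely formal, is that Lemma~\ref{termscommlemma} only supplies $[H_{c_1}, H_{c_2}] = 0$ on the full kernel of $H_+$, whereas the range of $Q$ is the joint kernel of only those $H_e$ with $e$ a face of $c_1$ or $c_2$. I would resolve this by revisiting the proof of Lemma~\ref{termscommlemma}: the crucial step there was the equality $\uparrow_{c_1}\cap f = \uparrow_{c_2}\cap f$ for each shared $(d-1)$-cell $f \in \partial c_1 \cap \partial c_2$, and this equality requires only that $H_e$ vanish for the $(d-2)$-cells $e\in\partial f$. Since $f \subset \partial c_1 \cap \partial c_2$, any such $e$ is automatically a face of both $c_1$ and $c_2$, and is therefore already annihilated by $P_{c_1}$ (and by $P_{c_2}$). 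Hence the range of $Q$ sits inside the kernel of $[H_{c_1}, H_{c_2}]$, which gives $Q[H_{c_1}, H_{c_2}] = 0$ and completes the argument.
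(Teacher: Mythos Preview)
Your proof is correct and follows essentially the same approach as the paper. The paper's proof is a one-liner pointing back to Lemma~\ref{termscommlemma} with the observation that its argument only needed $H_e=0$ for $e$ a face of $c_1,c_2$; you have unpacked this carefully, including the explicit reformulation $H_c^{proj}=P_cH_c$ and the verification of commutativity with all $H_e$, which the paper leaves implicit.
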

\begin{proof}
The proof is the same as lemma \ref{termscommlemma}, noting that that proof only used the vanishing of $H_e$ for $H_e$ a face of $c_1,c_2$.
\end{proof}

\subsubsection*{Explanation of the sign of $H_c$}

We should think of $H_c$ as inducing fluctuations between codimension one submanifolds.  The most basic possible fluctuations are Morse transitions which near the transition point, by definition assume the following local form, for index $0\leq k\leq d-1$:
\[-x_1^2 - x_2^2 -\cdots -x_k^2 + x_{k+1}^2 +\cdots + x_{d-1}^2 = -\epsilon \Rightarrow\]
\[-x_1^2 - x_2^2 -\cdots - x_k^2 + x_{k+1}^2 +\cdots + x_{d-1}^2 = \epsilon\text{, }\epsilon > 0.\]
For example, for the usual double semions these are the transitions:
\[\emptyset \rightarrow \begin{xy}(0,0)*\xycircle<5pt>{}\end{xy}\text{ ($k=0$)}\]
\[\begin{xy}\ar@{-}@/_5pt/ (0,2.5);(5,2.5) \ar@{-}@/^5pt/ (0,-2.5);(5,-2.5)\end{xy} \rightarrow \begin{xy}\ar@{-}@/_5pt/ (-2.5,-2.5);(-2.5,2.5) \ar@{-}@/^5pt/ (2.5,-2.5);(2.5,2.5)\end{xy}\text{ ($k=1$)}\]
\[\begin{xy}(0,0)*\xycircle<5pt>{}\end{xy} \rightarrow \emptyset\text{ ($k=2$)}\]
Form the \emph{usual} Skein relations, each of which is associated with a phase $=-1$.  We would like to generalize this to higher dimensions by declaring a phase $=-1$ for each Morse transition.

Unfortunately, the transition $t := \uparrow_c\rightarrow\downarrow_c$ is not in general Morse but in general much more complicated.  However, a simple argument (below) enables us to write $t$ as a composition of $\left(\chi\left(\uparrow_c\right) + 1\right)_{\text{mod }2}$ Morse transitions.  This decomposition of $t$ is not unique but the parity of the number of Morse factors is.

Consider a Morse function: $f : c\rightarrow [0,1]$ with $f^{-1}(0) = \left(\uparrow_c\right)^-$, $f^{-1}(1) = \left(\downarrow_c\right)^-$, the $^{(-)}$ indicating deletion of a thin collar.  Let $f$ be free of critical points on $\partial c\setminus \left(\left(\uparrow_c\right)^-\cup\left(\downarrow_c\right)^-\right)$.

As we cross a critical level $s$ of $f$, $\chi\left(f^{-1}\left(0,s_+\right]\right)$ changes by $\pm 1$ according to $\operatorname{index}(s) =$ odd or even, respectively.  Since $\chi(c) = \chi\left(f^{-1}[0,1]\right) = 1$, the number (mod $2$) of Morse transitions of $f$ is $\chi\left(\uparrow_c\right)+1 = \chi\left(f^{-1}(0)\right)+1$.

\begin{observation}[``Morse $=-1$''] The Hamiltonian $H$ enforces on any zgswf $\psi$, the skein rule Morse $=-1$.  That is, if $E$ and $E^\prime$ are cellular $d-1$ submanifolds of $M$ which differ by $k$ Morse transitions, $\psi(E) = (-1)^k\psi\left(E^\prime\right)$.
\end{observation}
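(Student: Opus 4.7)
The plan is to recast the eigenvalue equation $H\psi=0$ as a collection of local sign relations among cellular $(d-1)$-submanifolds, and then match each such relation against the Morse-count parity produced by the decomposition argument in the paragraph preceding the observation.

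First I would unpack the ground state conditions. Vanishing of $H_+$ on $\psi$ forces $\psi$ to be supported on configurations in which each $(d-2)$-cell is incident to an even number of $\uparrow$ $(d-1)$-cells; by the genericity property of the cellulation these configurations are exactly the cellular $(d-1)$-submanifolds of $X$. Vanishing of $H_c$ is equivalent to $O_c\psi = \psi$, which read in the product basis is the sign relation
\[\psi(E\triangle\partial c) \;=\; (-1)^{\chi(\uparrow_c)+1}\,\psi(E)\]
for every such $E$, where $\uparrow_c$ is the part of $E$ inside $\partial c$ and $\triangle$ denotes symmetric difference of cell sets. The preceding projector lemma makes this relation symmetric in $E$ and $E\triangle\partial c$.

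Next I would identify this sign with the rule Morse $=-1$ for the elementary flip $t_c : E \leftrightarrow E\triangle\partial c$. Viewed inside $c$, $t_c$ is a cobordism $\uparrow_c\rightsquigarrow\downarrow_c$ in the $d$-ball $c$; by the Morse-function argument immediately above the observation, it factors as a composition of $n$ Morse transitions with $n\equiv\chi(\uparrow_c)+1\pmod 2$. Since $O_c$ supplies exactly the sign $(-1)^{\chi(\uparrow_c)+1}=(-1)^n$, each elementary $c$-flip satisfies Morse $=-1$. For a general pair $(E,E')$ linked by a sequence of $k$ Morse transitions, each individual transition is supported in an arbitrarily small ball and can be localized inside a single $d$-cell, where it becomes an elementary $c$-flip contributing sign $-1$; composing along the sequence then yields the total sign $(-1)^k$.

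The hard part is the last step, where one must justify that every Morse transition between cellular submanifolds can be realized as such an elementary $c$-flip (possibly after a cellular refinement), and that the Morse-count parity is a well-defined invariant of the cellular transition independent of the factorization chosen. The first point rests on the local nature of a Morse transition together with genericity of the cellulation; the second is precisely the parity identity $n\equiv\chi(\uparrow_c)+1\pmod 2$, which guarantees that the sign prescribed by the Hamiltonian depends only on $n$ modulo $2$, so that induction along any factorization of $E\rightsquigarrow E'$ into elementary flips closes the argument.
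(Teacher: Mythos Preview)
Your proposal is correct and matches the paper's approach: the observation is placed immediately after the ``Explanation of the sign of $H_c$'' paragraph precisely because it is meant to follow from that Morse-function computation, and your steps 1--2 reproduce that reasoning verbatim while step 3 makes explicit what the paper leaves implicit. The caveats you flag about cellular refinement and well-definedness of the Morse-count parity are the right ones; the paper handles them only tacitly (the parity identity $n\equiv\chi(\uparrow_c)+1\pmod 2$ is exactly the paper's statement that ``the parity of the number of Morse factors is'' unique).
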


\section{Some properties of $\GDS$}

Let $X^d$ be the ambient manifold of dimension $d$.  As before, our model is regularized by choosing a generic cellulation $\mathcal{C}$ of $X$.  In this section and Section \ref{sec:6}, cutting and gluing constructions are used whose motivation is from the continuum.  It may be necessary to subdivide $\mathcal{C}$ to carry out these constructions.  Thus all statements should be interpreted as holding for some refinement $\mathcal{C}^\prime$ of $\mathcal{C}$.

The first property we observe regards the form of zero energy ground state wave functions (zgswfs).  Because of the term $H_+$, all zgswfs are superpositions of (cellular) submanifolds (of dimension $d-1$, and not necessarily oriented).  Because $H_\square$ does \emph{not} fluctuate between different classes of $H_{d-1}\left(X;Z_2\right)$, all zgswfs are superposition of zgswfs each with support confined to a fixed class $x\in H_{d-1}\left(X,Z_2\right)$.  The next lemma implies that there is at most one zgswf up to phase for any class $x\in H_{d-1}\left(X,Z_2\right)$.  However, as we saw when $d=2$ on non-orientable surfaces, it is not necessarily the case that \emph{each} possible sector $x\in H_{d-1}\left(X;Z_2\right)$ actually admits a zgswf.

\begin{lemma}
Let $\alpha$ and $\beta$ be two $\mathcal{C}$-cellular ($d-1$)-$Z_2$-cycles belonging to the same class $x\in H_{d-1}\left(X; Z_2\right)$.  There is a finite set $\{c_1,\ldots,c_j\}$ of $d$-cells of $\mathcal{C}$ so that $\alpha - \beta = \partial\left(c_1\cup\cdots\cup c_j\right)$.
\end{lemma}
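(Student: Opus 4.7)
The plan is to reduce the lemma to a direct application of cellular homology with $Z_2$ coefficients. The first step is to identify the paper's notion of a ``$\mathcal{C}$-cellular $(d-1)$-$Z_2$-cycle'' with an element of $Z_{d-1}^{\text{cell}}(X;Z_2)$ in the cellular chain complex associated to $\mathcal{C}$. The cycle condition---that every $(d-2)$-cell is adjacent to an even number of chosen $(d-1)$-cells---is precisely the statement that the cellular boundary vanishes in $C_{d-2}^{\text{cell}}(X;Z_2)$. Likewise, a union of $d$-cells $c_1\cup\cdots\cup c_j$ corresponds to the chain $c_1+\cdots+c_j\in C_d^{\text{cell}}(X;Z_2)$, and I would observe that its chain-level boundary coincides with the topological boundary of the union as a submanifold: any $(d-1)$-face shared between two selected $d$-cells appears twice in the chain boundary and cancels mod $2$, leaving exactly those $(d-1)$-faces that sit between $c_1\cup\cdots\cup c_j$ and its complement.

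Next I would invoke the standard fact that cellular homology on the CW structure $\mathcal{C}$ computes singular homology, so there is a canonical isomorphism $H_{d-1}^{\text{cell}}(X;Z_2)\cong H_{d-1}(X;Z_2)$. Under this identification the hypotheses $[\alpha]=[\beta]=x$ give $\alpha+\beta\in B_{d-1}^{\text{cell}}(X;Z_2)$ (recalling that $-=+$ over $Z_2$, so $\alpha-\beta=\alpha+\beta$). By the very definition of the image of $\partial_d$, there exists a cellular $d$-chain $\sigma\in C_d^{\text{cell}}(X;Z_2)$ with $\partial\sigma=\alpha+\beta$.

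Expanding $\sigma$ in the distinguished basis of $d$-cells gives $\sigma=c_1+\cdots+c_j$, where $\{c_1,\ldots,c_j\}$ is the (necessarily finite, since $\mathcal{C}$ is a finite cellulation of the closed manifold $X$) set of $d$-cells appearing with coefficient $1$. Applying the dictionary set up in the first paragraph, $\partial\sigma$ equals the topological boundary $\partial(c_1\cup\cdots\cup c_j)$ of the union, yielding $\alpha-\beta=\partial(c_1\cup\cdots\cup c_j)$ as claimed.

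There is no real obstacle here; the only point requiring a little care is the bookkeeping that translates between the geometric ``union of $d$-cells'' language used in the statement and the algebraic ``sum of basis elements in $C_d^{\text{cell}}$'' language, and checking that the cellular boundary operator really does compute the topological boundary of that union when working mod $2$. Once that dictionary is in place, the lemma is an immediate restatement of the fact that homologous cellular cycles differ by a cellular boundary.
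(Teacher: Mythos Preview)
Your proof is correct and follows essentially the same approach as the paper: the paper's proof simply says ``This is directly from the definition of cellular homology. No multiplicities or orientation need be considered since the coefficients are $Z_2$,'' and your proposal is a careful unpacking of exactly that sentence.
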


\begin{proof}
This is directly from the definition of cellular homology.  No multiplicities or orientation need be considered since the coefficients are $Z_2$.
\end{proof}

\begin{corollary}
Let $\mathbb{C}^{2^b}$ be the Hilbert space spanned by the elements $e_1,\ldots,e_{2^b}$ of $H_{d-1}(X)$, and choose representative cellular cycles $E_i$ for $e_i$, $1\leq i\leq 2^b$.  The map $\theta$, $\theta(\psi) = \left(\psi\left(E_1\right),\ldots,\psi\left(E_{2b}\right)\right)$, is an injection of the zero-modes $\GDS(X)\rightarrow\mathbb{C}^{2^b}$.\qed
\end{corollary}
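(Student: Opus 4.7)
The plan is to prove $\theta$ is injective directly: assume $\psi \in \GDS(X)$ with $\theta(\psi) = 0$, i.e., $\psi(E_i) = 0$ for every $i$, and show $\psi(\alpha) = 0$ for every cellular $(d-1)$-cycle $\alpha$, so that $\psi \equiv 0$.

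First, because $H_+\psi = 0$, any such $\psi$ is a superposition of indicator states $|\alpha\rangle$ with $\alpha$ a cellular $(d-1)$-$Z_2$-cycle. Each $O_c$ in $H_\square$ changes a cycle $\alpha$ in its support by $\partial c$, which is a $Z_2$-boundary, so the homology class $[\alpha] \in H_{d-1}(X;Z_2)$ is preserved. Consequently $\psi$ decomposes as $\psi = \sum_{i=1}^{2^b} \psi_i$, where $\psi_i$ is supported on cycles in class $e_i$ and each $\psi_i$ is independently annihilated by $H$. It therefore suffices to fix $i$, assume $\psi$ is supported in sector $e_i$ with $\psi(E_i)=0$, and show $\psi(\alpha) = 0$ for every cycle $\alpha$ in class $e_i$.

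Second, by the preceding lemma, any such $\alpha$ satisfies $\alpha + E_i = \partial(c_1\cup\cdots\cup c_j)$ for some $d$-cells $c_1,\ldots,c_j$. Ignoring signs, $\prod_k O_{c_k}$ is a product of Pauli-$X$ operators flipping exactly the $(d-1)$-cells in the $Z_2$-symmetric difference $\sum_k\partial c_k = \partial(c_1\cup\cdots\cup c_j) = \alpha+E_i$. Including the scalar signs in each $O_{c_k}$, one obtains
\[O_{c_1} O_{c_2} \cdots O_{c_j} |E_i\rangle = \epsilon\,|\alpha\rangle\]
for some $\epsilon \in \{\pm 1\}$.

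Third, the ground-state condition $H_{c_k}\psi = 0$ is equivalent to $O_{c_k}\psi = \psi$; and by Lemma \ref{termscommlemma} the $O_{c_k}$ mutually commute on the zero eigenspace of $H_+$, which contains $\psi$. Hence $O_{c_j}\cdots O_{c_1}\psi = \psi$, and taking the inner product with $|E_i\rangle$, together with the relation from the previous step, gives
\[\psi(\alpha) = \langle\alpha|\psi\rangle = \epsilon\,\langle E_i|O_{c_j}\cdots O_{c_1}\psi\rangle = \epsilon\,\langle E_i|\psi\rangle = 0.\]
The only possibly worrisome point, namely the unknown sign $\epsilon$, is irrelevant here precisely because we are proving vanishing; in particular no separate consistency check on the signs of the Morse moves is needed. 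I do not foresee any genuine obstacle — the corollary is essentially a direct consequence of the sector decomposition together with the preceding combinatorial lemma.
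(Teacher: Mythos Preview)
Your argument is correct and is exactly the reasoning the paper has in mind: the corollary is stated with only a \qed, as an immediate consequence of the preceding lemma together with the sector decomposition already discussed in the paragraph before it. One small remark: you do not actually need the commutativity of the $O_{c_k}$ from Lemma~\ref{termscommlemma}, since $O_{c_k}\psi=\psi$ for each $k$ already yields $O_{c_j}\cdots O_{c_1}\psi=\psi$ by iteration; what you \emph{do} implicitly use when moving the operators onto the bra is the Hermiticity of each $O_c$, which follows from the same parity identity $(-1)^{\chi(\uparrow_c)}=(-1)^{\chi(\downarrow_c)}$ used in the paper to show $O_c^2=1$.
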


In Section \ref{sec:3} we discussed ``missing sectors'' when $X$ is a non-orientable surface.  In higher dimensions, there may be ``missing sectors,'' $\theta$ has cokernel, even when $X$ is orientable.  An extreme example is $X = CP^2$.

\subsubsection*{Example}

Let the ambient manifold $X$ be $\mathbb{C}P^2$, complex projective $2$-space, $d = \dim\left(\mathbb{C}P^2\right) = 4$.  $H_3\left(CP^2,Z_2\right)\cong 0$, so there is only one possible sector of zgswfs in $\GDS$, the trivial sector.  But, in fact, $\GDS\left(\mathbb{C}P^2\right)\cong 0$.  Just like $RP^2$, $\mathbb{C}P^2$ has a Morse function with three critical points
\[\left(Z_1,Z_2,Z_3\right) \rightarrow \frac{|Z_2|^2+2|Z_3|^2}{|Z_1|^2+|Z_2|^2+|Z_3|^2}.\]
Thus, as in Section \ref{sec:3}, $\psi(\emptyset) = -\psi(\emptyset) = 0$, so $\GDS\left(\mathbb{C}P^2\right)\cong 0$.  Of course $H$ will have \emph{some} least energy state but we do not know that it has topological significance.  We also do not know its entanglement properties.
If we work with a Hamiltonian that is a sum of commuting projectors using terms $H_c^{proj}$, then
the ground state is an eigenstate of every term separately.  Hence,
this ground state will be a zero energy eigenstate of the Hamiltonian on some ``punctured cellulation", obtained by removing the terms $H_c,H_e$
which have expectation value $+1$.
We now give precise conditions under which $\theta$ is an isomorphism:

\begin{theorem}\label{thm:5.3}
If $d$ is odd, $\theta$ is an isomorphism, $\GDS(X)\cong \mathbb{C}^{2^b}$.  Moreover, in each sector $x\in H_{d-1}(X)$, an explicit zgswf is given as $\psi_x(E) = i^{\chi(E)}$.

For d even and $x\in H_{d-1}(X)$, there is an zgswf $\psi_x(E)\neq 0$, $[E] = x\in H_{d-1}(X)$, if and only if $\chi(X) + \epsilon(x) \equiv 0\mod 2$, where $\epsilon(x)$ is defined below.

In general, there is no local formula\footnote{Even when $d=2$, it is only true on the $2$-sphere $S^2$ that the signs zgswfs are $(-1)^{\sharp(\text{components})}$.  Recall the example on the torus $T^2$, where the curves $(1,1)$ and $(1,-1)$ are in the same $Z_2$-sector, but $\psi((1,1)) = -\psi((1,-1))$ for any zgswf $\psi$.} for the zgswfs; however, if $X$ is a $Z_2$-homology $d$-sphere,\footnote{It is sufficient to assume $H_{d-1}(X) = 0 = H_\frac{d}{2}(X)$.} the formula for the sign of $\psi_x$ on $E$ is $\pm = (-1)^{s(E)}$, $s(E)$ also defined below.
\end{theorem}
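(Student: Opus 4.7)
The plan is to mimic the proof of Theorem \ref{thm:A}, replacing $\Sigma$ by $X^d$ and tracking the Euler-characteristic parity of histories. Fix a sector $x\in H_{d-1}(X;Z_2)$ and a cellular representative $E$. A zgswf supported in sector $x$ is determined up to scalar by $\psi(E)$, and the Observation ``Morse $=-1$'' forces $\psi(E)=(-1)^{\chi(\bar h)}\psi(E)$ for every closed history $\bar h\subset X\times S^1$ beginning and ending at $E\times 1$, because Morse theory on $\bar h$ (using the height function along $S^1$) yields $\chi(\bar h)\equiv \#(\text{critical points})\bmod 2$. Hence the sector is realized iff $\chi(\bar h)\equiv 0\bmod 2$ for every such $\bar h$. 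By the K\"unneth formula and $H_d(X;Z_2)\cong Z_2$, these histories fall into exactly two classes in $H_d(X\times S^1;Z_2)$: the constant class $[E\times S^1]$, which has $\chi(E\times S^1)=0$ and imposes no constraint, and the twisted class $[E\times S^1]+[X\times 1]$.

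For the twisted class I would take a representative produced by resolving the singular union $(E\times S^1)\cup(X\times 1)$ along its intersection $E\times 1$. As in Claim \ref{claim}, the local crossing $\{u=0\}\cup\{v=0\}\to\{uv=\epsilon\}$ can be smoothed coherently around $E$ precisely when the normal bundle $\nu_{E/X}$ is orientable; along the codimension-$1$ locus $D\subset E$ Poincar\'e-dual to $w_1(\nu_{E/X})\in H^1(E;Z_2)$ one must patch in ``baseball'' handles analogous to the disk $\delta$ of Figure \ref{fig:2}. Repeated application of the additivity formula $\chi(A\cup B)=\chi(A)+\chi(B)-\chi(A\cap B)$, together with $\chi(E)=0$ for $d$ even (since $E$ is odd-dimensional), yields
\[
\chi(\bar h)\;\equiv\;\chi(X)+\epsilon(x)\pmod 2,
\]
where $\epsilon(x)$ is the mod-$2$ baseball contribution, a well-defined function of the class $x$ by Fact \ref{fact:3}. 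For $d$ odd, $\bar h$ is a closed odd-dimensional manifold so $\chi(\bar h)=0$ automatically and the constraint is vacuous: every sector is realized and $\theta$ is an isomorphism. The explicit formula $\psi_x(E)=i^{\chi(E)}$ is then verified by the local computation that a Morse transition of index $k$ is a surgery along $S^{k-1}\subset E$ and changes $\chi(E)$ by $\chi(S^{d-k-1})-\chi(S^{k-1})=(-1)^{d-k-1}-(-1)^{k-1}$, which for $d$ odd is always $\pm 2$, so $i^{\chi(E)}$ picks up $i^{\pm 2}=-1$ per Morse event; for $d$ even the same computation gives zero, explaining why no $\chi(E)$-based formula can work in that case.

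For the homology-$d$-sphere case (with $d$ even) the only sector is trivial, and $H_{d/2}(X;Z_2)=0$ combined with Poincar\'e duality forces $\chi(X)$ to be even, so that sector is realized; the proposed local formula $\psi_x(E)=(-1)^{s(E)}$, with $s(E)$ the Kervaire semicharacteristic $\sum_{i=0}^{(d-2)/2}\dim_{Z_2}H_i(E;Z_2)\bmod 2$, is justified by verifying that $s(E)$ flips under every Morse transition on a codimension-$1$ submanifold. This reduces to a direct Mayer--Vietoris bookkeeping of the change in low-dimensional Betti numbers under a $(k-1)$-surgery on an odd-dimensional manifold, with the homology-sphere hypothesis on $X$ preventing global obstructions from mixing sectors. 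The main obstacle I anticipate is the $d$-even resolution computation: one must verify that the baseball patch assembles along $D$ into a quantity whose Euler-characteristic parity depends only on the homology class $x$ and not on the chosen representative $E$, and that this intrinsic definition of $\epsilon$ matches the parity that drops out of the additivity bookkeeping.
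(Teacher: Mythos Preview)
Your overall strategy matches the paper's: reduce consistency to the parity of $\chi(\bar h)$ for closed histories in $X\times S^1$, use K\"unneth to see only the constant and twisted classes, resolve the twisted class along $E\times 1$ with baseball patches over the Poincar\'e dual of $w_1(\nu_{E/X})$, and handle $d$ odd by the direct surgery computation showing $\chi(E)$ jumps by $\pm 2$. For the $d$-even part you correctly flag the remaining work: the paper defines $\epsilon(x)=\chi(L)\bmod 2$ with $L\subset E$ Poincar\'e dual to $w_1(\nu)$, and proves well-definedness by showing $L$ is determined up to bordism (Fact~\ref{fact:3} applied to $w_1$ of the normal bundle of the bordism $V\subset X\times I$) and then invoking Fact~\ref{fact:1} to conclude $\chi(L)\bmod 2$ is a bordism invariant. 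Your sketch points at this but does not supply it.

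The genuine gap is in the homology-sphere clause. Your phrase ``direct Mayer--Vietoris bookkeeping \ldots with the homology-sphere hypothesis preventing global obstructions from mixing sectors'' misses the actual mechanism. There is only one sector anyway; the issue is that $s(E)$ does \emph{not} automatically flip under a Morse transition of the middle index $k$ (where $d=2k+2$). The paper isolates this in Lemma~\ref{lem:5.2}: for an index-$k$ transition with elementary cobordism $W$, if both boundary maps $\partial,\partial'$ vanish then $s$ is unchanged, and this bad $(0,0)$ case occurs precisely when the self-intersection pairing on $H_{k+1}(W)/\operatorname{im}H_{k+1}(M)\cong Z_2$ is nontrivial. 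The hypotheses enter as follows: $H_{d-1}(X)=0$ forces the normal bundle of $E$ in $X$ to be trivial, so each elementary cobordism $W$ embeds as a codimension-zero submanifold of $X$; then $H_{d/2}(X)=H_{k+1}(X)=0$ kills the relevant $(k+1)$-cycle in $W\subset X$ and forces the pairing to vanish. Without this argument (Remark~\ref{rem:5.5}), the claim that $(-1)^{s(E)}$ is consistent is unsupported --- indeed on a general $X$ it is false, which is why the paper restricts to $Z_2$-homology spheres. You should replace the vague ``global obstructions'' sentence with this self-intersection analysis.
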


\begin{proof}
Consider $d$ odd.  In this parity, the codimension one cellular submanifold $E$ is even dimensional.  Morse transitions of even dimensional manifolds always change $\chi$ by $\pm 2$.  The transition removes an $S^j\times D^{d-1-j}$ and replaces it with $D^{j+1}\times S^{d-j-2}$.
\[\chi\left(S^p\right) = \begin{cases} 0 & p\text{ odd} \\ 2 & p\text{ even} \end{cases}, \text{so}\]
\[\chi\left(S^j\times D^{d-1-j}\right) = \chi\left(S^j\right) \equiv_2 \chi\left(S^{d-j-2}\right) = \chi\left(S^{d-j-2}\times D^{j+1}\right).\]

The claim now follows from the gluing formula for Euler characteristic:

\be\label{eqn:5.4}
\chi(A\cup B) = \chi(A) + \chi(B) - \chi(A\cap B)
\ee

Thus every Morse transition $M\rightarrow N$ changes $i^{\chi(M)}$ by a factor of $-1$ (which is consistent with the fluctuation of $H_\square$).

Now consider $d =$ even.  Let $E$ be a cellular hypersurface of $X$ representing $x$.  Let $\nu$ be the normal bundle of $E$ in $X$ and $L\subset E$ be a $d-2$ dimensional embedded submanifold Poincar\'{e} dual to the first Stiefel-Whitney class $w_1(\nu) \in H^1(E)$, of the tangent bundle to $E$.

\begin{definition*}
$\epsilon(x) = \chi(L) \mod 2$.
\end{definition*}

\begin{lemma}
$\epsilon(x)$ is well-defined.
\end{lemma}

\begin{proof}
According to (\ref{fact:3}), any other hypersurface $E^\prime$ representing $x$ is bordant to $E$ via $V\subset X\times I$; $\partial_+ V = E^\prime$, $\partial_- V = E$.  Furthermore, denoting the normal bundle to $E^\prime$ (in $X\times 1$) by $\nu^\prime$, for any $L^\prime$ P.D. to $w_1\left(\nu^\prime\right)$ in $E^\prime$ there is a bordism (it is the P.D. to $w_1\left(\nu_{V\hookrightarrow x\times I}\right)$) $\bar{L}\subset V$ from $L$ to $L^\prime$; $\partial\bar{L} = L\coprod L^\prime$.  Thus $L$ while not uniquely defined is well-defined \emph{up to bordism}.

From Fact \ref{fact:1}, if a closed manifold $L$ of dimension $2k$ is a boundary, then $b_k(L)$ is even (and $\dim(\ker_k) = \frac{b_k(L)}{2}$).  An immediate corollary is that if $L^{2k}$ is a closed manifold which bounds, then $\chi(L) =$ even, and in particular, if $L^{2k}$ is bordant to ${L^\prime}^{2k}$, then $\chi(L)\equiv\chi\left(L^\prime\right)\mod 2$.
\end{proof}

Now $\psi_x(E)$ can be normalized to $1$ unless there is some odd length sequence of Morse transitions in $X$ starting with $E$ and returning to $E$.  That is, unless there is a submanifold $\overline{W}\subset X\times S^1$ with $\overline{W}\cap (X\times 0) = E$ and $\chi(\overline{W}) =$ odd.

But the K\"{u}nneth theorem tells us there are exactly two possible homology classes (and therefore bordism classes) for $\overline{W}$.  They are $x\times\left[S^1\right] \in H_{d-1}(X)\otimes H_1\left(S^1\right)\subset H_d\left(X \times S^1\right)$, $x\times\left[S^1\right] + [X]\times\ast \in H_{d-1}(X)\otimes H_1\left(S^1\right)\oplus H_d(X)\otimes H_1\left(S^1\right) = H_d\left(X\times S^1\right)$.  Again, to determine the parity of $\chi(\overline{W})$, it is sufficient to study any representative of each homology class, since all representatives are bordant.  In the first case, $E\times S^1$ is a representative with $\chi\left(E\times S^1\right) = 0$, so we may dismiss the first case entirely.

The second case is more interesting and fully analogous to the discussion at the close of Section \ref{sec:3}.  The class $x\times\left[S^1\right] + [X]\times\ast$ is represented by the union $E\times S^1\cup X\times 0$.  This union is not a submanifold but instead is singular along $E\times 0 = E\times S^1\cap X\times 0$.  If the normal structure, a ``
$\begin{xy}<5mm,0mm>:
(-0.5,0);(0.5,0)**\dir{-}
,(0,-0.5);(0,0.5)**\dir{-}
\end{xy}$
'' bundle over $E\times 0$, is trivial (i.e., a product bundle), the singularity can be resolved (replace
$\begin{xy}<5mm,0mm>:
(-0.5,0);(0.5,0)**\dir{-}
,(0,-0.5);(0,0.5)**\dir{-}
\end{xy}$
with
$\begin{xy}<5mm,0mm>:
(-0.5,0);(0,0.5)**\crv{(0,0)},(0,-0.5);(0.5,0)**\crv{(0,0)}
\end{xy}$) with $\chi(\text{resolution}) = \chi\left(E\times S^1\right) + \chi(X\times 0) = \chi(X\times 0)$.  While the $S^1$-direction (``vertical'') segment of ``
$\begin{xy}<5mm,0mm>:
(-0.5,0);(0.5,0)**\dir{-}
,(0,-0.5);(0,0.5)**\dir{-}
\end{xy}$
'' is trivial, the horizontal segment is reflected along $L$; thus the resolution is as previously pictured in Figure \ref{fig:2} but now in a parameter family over $L$.  Where previously a single resolution as in Figure \ref{fig:2} accounted for an additional $1\mod 2$, now, in its parametric form, it contributes $\chi(L)$.  Thus $\chi(\overline{W})\equiv \chi(X)+\epsilon(x)$.  The sector $x$ disappears from $\GDS(X)$ precisely when $\chi(\overline{W})$ is odd.

Thus completes the proof of \ref{thm:5.3} except for the statement about $Z_2$-homology spheres which is discussed in Remark \ref{rem:5.5}.
\end{proof}

To make further progress, we need the notion of ``Kervaire semi-characteristic.''

We assume $d = 2k+2$ is even.  Kervaire defined the semi-characteristic of a closed odd dimensional manifold $M^{2k+1}$
\be
s(M) = \sum_{b=0}^k b_i(M),
\ee
where $b_i$ is the $i^\text{th}$ Betti number of $M$.  Kervaire considered real Betti numbers but it is more consistent with our approach to take $Z_2$-Betti numbers:
\be
    b_i = \dim\left(H_i\left(M;Z_2\right)\right)
\ee
and consider $s(M)\in Z_2$, the mod $2$ reduction only.

The next lemma says that $s$ fluctuates in a manner similar to the sign $\theta_c$ of $H_c$.

\begin{lemma}\label{lem:5.2}
Suppose $M^{2k+1}$ undergoes a Morse transition to $N$ of index $< k$.  Then $s(N) = s(M) + 1$.  If index $=k$, this formula still holds if either
\begin{enumerate}
    \item the vanishing ($k-1$)-cycle $\delta$ at the transition is nontrivial in $H_{k-1}\left(M;Z_2\right)$ or
    \item\label{lem:5.2.2} $[\delta] = 0$ but the bordism $W$ associated to the transition has vanishing intersection pairing $H_{k+1}\left(W;Z_2\right)\times H_{k+1}\left(W;Z_2\right) \rightarrow Z_2$.
\end{enumerate}
\end{lemma}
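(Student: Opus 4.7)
The plan is to compute how the Kervaire semi-characteristic changes through the trace cobordism $W^{2k+2}$ of the Morse transition. Realize the transition of index $j$ as surgery along a vanishing sphere $\delta\cong S^{j-1}\subset M^{2k+1}$: the trace $W$ is $M\times[0,1]$ with a single $j$-handle glued along $\delta\subset M\times\{1\}$, so $\partial W=M\sqcup N$ and $W\simeq M\cup_\delta D^j$. Accordingly, $H_i(W,M;Z_2)\cong Z_2$ concentrated in degree $i=j$; dually, viewing $W$ upside down, $H_i(W,N;Z_2)\cong Z_2$ concentrated in degree $i=2k+2-j$.

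The long exact sequence of $(W,M)$ gives $b_i(M)=b_i(W)$ for $i\notin\{j-1,j\}$ and the five-term sequence
\[
0\to H_j(M)\to H_j(W)\to Z_2\xrightarrow{\partial}H_{j-1}(M)\to H_{j-1}(W)\to 0
\]
with $\partial(1)=[\delta]$. Setting the alternating sum of $Z_2$-dimensions equal to zero gives the parity identity $\bigl(b_j(W)-b_j(M)\bigr)+\bigl(b_{j-1}(W)-b_{j-1}(M)\bigr)\equiv 1\pmod 2$. Symmetrically, $b_i(N)=b_i(W)$ whenever $i\notin\{2k+1-j,2k+2-j\}$, and for any $1\le j\le k$ both disturbed indices on the $N$-side are strictly greater than $k$. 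Hence $b_i(N)=b_i(W)$ for every $0\le i\le k$, so
\[
s(N)-s(M)=\sum_{i=0}^k\bigl(b_i(W)-b_i(M)\bigr)=\bigl(b_{j-1}(W)-b_{j-1}(M)\bigr)+\bigl(b_j(W)-b_j(M)\bigr)\equiv 1\pmod 2,
\]
which proves the formula for all $j<k$ (the case $j=0$ is handled by the truncated sequence $0\to H_0(M)\to H_0(W)\to Z_2\to 0$).

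For $j=k$ the same telescoping applies, and one distinguishes the two cases allowed by the hypotheses. Under (1), $[\delta]\neq 0$ makes $\partial$ injective, yielding $b_{k-1}(W)=b_{k-1}(M)-1$ and $b_k(W)=b_k(M)$, hence $s(N)-s(M)\equiv -1\equiv 1\pmod 2$. Under (2), $[\delta]=0$ forces $\partial=0$, so $b_{k-1}(W)=b_{k-1}(M)$ and $b_k(W)=b_k(M)+1$, again giving $s(N)-s(M)\equiv 1\pmod 2$. The main obstacle lies in interpreting the intersection-pairing condition in (2): when $[\delta]=0$, the new $k$-cycle in $W$ is the handle's core capped with a bounding chain for $\delta$ in $M$, and the vanishing of the middle pairing on $H_{k+1}(W;Z_2)$ is precisely what prevents Poincar\'{e}--Lefschetz duality on $W$ from producing a compensating class in $H_{k+1}(W,\partial W)$ that could feed back through the $(W,N)$ exact sequence and spoil the identifications $b_i(N)=b_i(W)$ on which the telescoping rests. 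Making this last point rigorous — showing that hypothesis (2) is exactly what disables the pathological case where $[\delta]=0$ but the intersection pairing is nonzero — is the delicate step; the remainder is a routine diagram chase in the two long exact sequences above.
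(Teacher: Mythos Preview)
Your approach via the pair of long exact sequences for $(W,M)$ and $(W,N)$ is exactly the paper's, and your treatment of the sub-critical indices is correct. The gap lies in locating the critical case. The case that actually requires hypotheses (1)--(2) is the \emph{middle-dimensional} handle of $W^{2k+2}$, i.e.\ handle index $k+1$ (the lemma's phrasing is off by one; note that hypothesis (2) refers to the pairing on $H_{k+1}(W)$, which is an intersection form only when $k+1$ is the middle dimension). For handle index $k+1$, both $H_*(W,M)$ and $H_*(W,N)$ are $Z_2$ concentrated in degree $k+1$, so the $(W,N)$ sequence disturbs $b_k(N)$ just as the $(W,M)$ sequence disturbs $b_k(M)$. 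Your assertion that ``both disturbed indices on the $N$-side are strictly greater than $k$'' therefore fails here, and with it the identification $b_i(N)=b_i(W)$ for all $i\le k$; at handle index $k$ your telescoping is valid but then needs no hypothesis at all, which should have signaled the mismatch.

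In the genuine critical case there are two boundary maps $\partial:Z_2\to H_k(M)$ and $\partial':Z_2\to H_k(N)$, and a direct count gives $s(N)-s(M)\equiv 0$ exactly when both vanish or both are nonzero. The paper finishes by (i) ruling out $(\partial\neq 0,\partial'\neq 0)$ via Poincar\'e duality in $M$ --- if $\partial(1)=y\neq 0$ then its dual $\hat y\in H_{k+1}(M)$, punctured along the handle, furnishes a null-homology of $\partial'(1)$ in $N$ --- and (ii) identifying $(\partial=0,\partial'=0)$ with nonvanishing self-intersection of the new class in $\mathcal{W}=H_{k+1}(W)/\image H_{k+1}(M)\cong Z_2$, computed by pairing the core capped off in $M$ against the co-core capped off in $N$. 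Your final paragraph gestures at this mechanism but does not supply either step.
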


The intersection pairing clearly vanishes on $H_{k+1}\left(W;Z_2\right)\times \image\left(H_{k+1}\left(M;Z_2\right)\right)$ so it is sufficient to consider the intersection pairing on the quotient
\[\mathcal{W} := H_{k+1}\left(W;Z_2\right) / \image H_{k+1}\left(M;Z_2\right) \cong Z_2\]
\be\label{eqn:5.3}
\mathcal{W}\times \mathcal{W}\rightarrow Z_2.
\ee

Since we are working with $Z_2$ coefficients, the vanishing of the pairing  (\ref{eqn:5.3}) is a nontrivial assumption for $W$ of all even dimensions.

\begin{proof}[Proof of \ref{lem:5.2}]
From here forward, $Z_2$ coefficients are implicit, although most assertions and diagrams have analogs over other fields and sometimes rings.

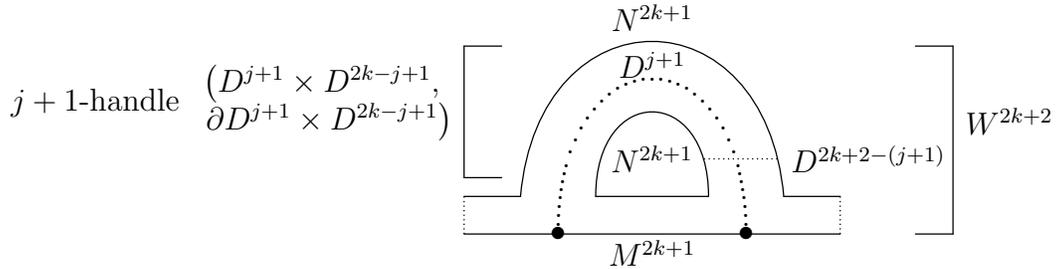
\begin{figure}[hbpt]
\[\begin{xy}<5mm,0mm>:
% center shape
(0,0)*{N^{2k+1}}
,(-1.5,-1);(1.5,-1)**\dir{-}
,(-1.5,-1);(1.5,-1)**\crv{(-1.5,2)&(1.5,2)}
% bottom line
,(-5,-2);(5,-2)**\dir{-}
% middle curve
,(-2.5,-2);(2.5,-2)**\crv{~*=<4pt>{.} (-2.5,3.5)&(2.5,3.5)}
,(-2.5,-2)*{\bullet}
,(2.5,-2)*{\bullet}
% outer curve
,(-3.5,-1);(3.5,-1)**\crv{(-3,4.5)&(3,4.5)}
% left foot
,(-3.5,-1);(-5,-1)**\dir{-}
,(-5,-1);(-5,-2)**\dir{.}
% right foot
,(3.5,-1);(5,-1)**\dir{-}
,(5,-1);(5,-2)**\dir{.}
% cut
,(1.375,0);(3.33,0)**\dir{.}
% left bracket
,(-4,3);(-5,3)**\dir{-}
,(-5,3);(-5,-0.5)**\dir{-}
,(-5,-0.5);(-4,-0.5)**\dir{-}
% right bracket
,(7,3);(8,3)**\dir{-}
,(8,3);(8,-2)**\dir{-}
,(8,-2);(7,-2)**\dir{-}
% labels
,(0,-2.5)*{M^{2k+1}}
%,(-2.5,-2.5)*{S^j}
%,(2.5,-2.5)*{S^j}
,(0,3.75)*{N^{2k+1}}
,(0,2.5)*{D^{j+1}}
,(5.75,0)*{D^{2k+2-(j+1)}}
,(-11,1.5)*{j+1\text{-handle }\begin{array}{ll}\left(D^{j+1}\times D^{2k-j+1},\right.\\
\left.\partial D^{j+1}\times D^{2k-j+1}\right)\end{array}}
,(9.5,1)*{W^{2k+2}}
\end{xy}\]
\caption{Curved dotted line $D^{j+1}$ is called ``co-core" and horizontal dotted line $D^{2k+2-(j+1)}$ is called ``core".  Solid dots indicate $\partial D^{j+1}=S^j$.}
\label{fig:5.1}
\end{figure}

The Morse transition of index $j+1$ creates a cobordism $W$ pictured in Figure \ref{fig:5.1}.  The associate ``braided'' exact sequences of pairs is displayed below.

\[\begin{xy}<5mm,0mm>:
(-11,2.25)*{0}
,(-11,1.25)*{\rotatebox{-90}{$\cong$}}
,(11,2.25)*{0}
,(11,1.25)*{\rotatebox{-90}{$\cong$}}
,(0,-5.25)*{\rotatebox{-90}{$\cong$}}
,(0,-6.25)*{Z_2}
,(-11,0)*+{H_{j+2}(W,M)}="a1"
,(-5.5,0)*+{H_{j+1}(M)}="a2"
,(0,0)*+{H_{j+1}(W,N)}="a3"
,(5.5,0)*+{H_j(N)}="a4"
,(11,0)*+{H_j(W,N)}="a5"
,(-11,-4)*+{H_{j+2}(W,N)}="b1"
,(-5.5,-4)*+{H_{j+1}(N)}="b2"
,(0,-4)*+{H_{j+1}(W,M)}="b3"
,(5.5,-4)*+{H_j(M)}="b4"
,(11,-4)*+{H_j(W,N)}="b5"
,(-2.75,-2)*+{H_{j+1}(W)}="c1"
,(8.25,-2)*+{H_j(W)}="c2"
\ar "a1";"a2"
\ar "a2";"c1"
\ar "c1";"a3"
\ar "a3";"a4"
\ar "a4";"c2"
\ar "c2";"a5"
\ar^-{\partial_{j+2}} "b1";"b2"
\ar "b2";"c1"
\ar "c1";"b3"
\ar^-{\partial_{j+1}} "b3";"b4"
\ar "b4";"c2"
\ar "c2";"b5"
\ar^-{\partial_j} "b5";"b5"+(3.5,0)
\end{xy}\]
\[H_\ast(W,M)\cong \begin{cases}Z_2 & \ast = j+1 \\ 0 & \ast\neq j+1\end{cases}\text{ and }H_\ast(W,N)\cong \begin{cases}Z_2 & \ast = 2k-j+1 \\ 0 & \ast\neq 2k-j+1\end{cases}\]

First assume $j < k$ making $H_\ast(W,N)\cong 0$, $\ast\leq j+2$.  Then $H_j(N)\cong H_j(W)$.  Suppose $\partial_{j+1} = 0$, then also $H_j(M)\cong H_j(W)$, so $H_j(N)\cong H_j(M)$ and $H_{j+1}(N)\cong H_{j+1}(W)\cong H_{j+1}(M)\oplus Z_2$.  On the other hand, if $\partial_{j+1}\neq 0$ then $H_j(M)\cong H_j(N)\oplus Z_2$ and $H_{j+1}(M)\cong H_{j+1}(N)$.  Clearly for $i\leq k$ and $i\leq j$ or $j+1$, $H_i(M)\cong H_i(W)\cong H_i(N)$.  Thus in either case, $\partial_{j+1} = 0$ or $\partial_{j+1}\neq 0$, the semi-characteristic $s\in Z_2$ satisfies $s(N) = s(M)+1$.

Now consider the more delicate case $j=k$.

\[\begin{xy}<5mm,0mm>:
,(-11,0)*+{0}="a1"
,(-5.5,0)*+{H_{k+1}(M)}="a2"
,(0,0)*+{Z_2}="a3"
,(5.5,0)*+{H_k(N)}="a4"
,(11,0)*+{0}="a5"
,(-2.75,-2)*+{H_{k+1}(W)}="b1"
,(8.25,-2)*+{H_k(W)}="b2"
,(-11,-4)*+{0}="c1"
,(-5.5,-4)*+{H_{k+1}(N)}="c2"
,(0,-4)*+{Z_2}="c3"
,(5.5,-4)*+{H_k(M)}="c4"
,(11,-4)*+{0}="c5"
\ar "a1";"a2"
\ar "a2";"b1"
\ar "b1";"a3"
\ar^-{\partial^\prime} "a3";"a4"
\ar "a4";"b2"
\ar "b2";"a5"
\ar "c1";"c2"
\ar "c2";"b1"
\ar "b1";"c3"
\ar^-\partial "c3";"c4"
\ar "c4";"b2"
\ar "b2";"c5"
\end{xy}\]

Again consider cases.  For $\partial$, $\partial^\prime$ zero or nonzero, we have the following table for $s(N) - s(M)$:

\begin{center}\begin{tabular}{c|c|c|}
\multicolumn{1}{c}{} & \multicolumn{1}{c}{$\partial = 0$} & \multicolumn{1}{c}{$\partial\neq 0$} \\ \cline{2-3}
$\partial^\prime = 0$ & $0$ & $1$ \\ \cline{2-3}
$\partial^\prime\neq 0$ & $1$ & $\ast$ \\ \cline{2-3}
\end{tabular}\end{center}

The $\ast$ indicates that this case does not actually occur.  It is forbidden by Poincar\'{e} duality.  If $\partial(1) = y\neq 0\in H_k(M)$, $y$ will be Poincar\'{e} dual to some $\hat{y}\in H_{k+1}(M)$; $y\cdot\hat{y}=1$.  But a punctured copy of $\hat{y}$ is a null homology for $\partial^\prime(1)$ as sketched in Figure \ref{fig:5.2}.

\begin{figure}[hbpt]
\[\begin{xy}<5mm,0mm>:
% crosshairs
(-3,0)*{\bullet}
,(3,0)*{\bullet}
,(-3,0);(3,0)**\dir{-}
,(0,0.5)*{\bullet}
,(0,-0.5)*{\bullet}
,(0,0.5);(0,-0.5)**\dir{-}
% inner circles
,(4.414,0)*\xycircle<20pt>{}
,(-4.414,0)*\xycircle<20pt>{}
% labels
,(0,1.25)*{\partial^\prime(1)}
,(0,-1.25)*{\partial^\prime(1)}
,(-3.5,0)*{y}
,(3.5,0)*{y}
% left loop
,(0,0.5);(0,-0.5)**\crv{(-0.5,0.5)&(-2.75,0.25)&(-2.85,2)&(-6.5,2)&(-6.5,-2)&(-2.85,-2)&(-2.75,-0.25)&(-0.5,-0.5)}
% right loop
,(0,0.5);(0,-0.5)**\crv{(0.5,0.5)&(2.75,0.25)&(2.85,2)&(6.5,2)&(6.5,-2)&(2.85,-2)&(2.75,-0.25)&(0.5,-0.5)}
,(0,0.525);(0,-0.525)**\crv{(0.5,0.525)&(2.725,0.275)&(2.85,2.025)&(6.525,2.025)&(6.525,-2.025)&(2.85,-2.025)&(2.725,-0.275)&(0.5,-0.525)}
,(0,0.55);(0,-0.55)**\crv{(0.5,0.55)&(2.7,0.3)&(2.85,2.05)&(6.55,2.05)&(6.55,-2.05)&(2.85,-2.05)&(2.7,-0.3)&(0.5,-0.55)}
,(0,0.45);(0,-0.45)**\crv{(0.5,0.45)&(2.8,0.2)&(2.85,1.95)&(6.45,1.95)&(6.45,-1.95)&(2.85,-1.95)&(2.8,-0.2)&(0.5,-0.45)}
,(0,0.475);(0,-0.475)**\crv{(0.5,0.475)&(2.775,0.225)&(2.85,1.975)&(6.475,1.975)&(6.475,-1.975)&(2.85,-1.975)&(2.775,-0.225)&(0.5,-0.475)}
% labels
,(7.75,1.5)*{\hat{y}_\text{punctured}}
,(0,3)*{\text{handle}}
\end{xy}\]
\caption{}
\label{fig:5.2}
\end{figure}

The $(0,0)$ case occurs precisely in the forbidden case where the cobordism $W$ has a ($k+1$)-cycle with odd self-intersection, or from line (\ref{eqn:5.3}) $\mathcal{W}\times \mathcal{W}\rightarrow Z_2$ is nonzero.  One way to compute this is to take the obvious dual disks (called ``core'' and ``co-core'') $D^{k+1}\times 0$ and $0\times D^{k+1}$ in the $k+1$-handle $\left(D^{k+1}\times D^{k+1},\partial D^{k+1}\times D^{k+1}\right)$ of $W$ relative $M$ and close them off by chains, one in $M$ exhibiting a null homology for $\partial(1)$ and one in $N$ exhibiting a null homology for $\partial^\prime(1)$.  This is illustrated in Figure \ref{fig:5.3} where $W$ is drawn as a punctured M\"{o}bius band.

\begin{figure}[hbpt]
\[\begin{xy}<5mm,0mm>:
% circles
(0,0)*\xycircle<15pt>{}
,(0,0)*\ellipse<25pt>:a(150),=:a(245){}
,(0,0.5)*{M}
% normal thickness loop segments
,(-1.52,0.88);(0.75,2)**\crv{(-1.75,3)&(0.8125,2.5)}
,(1.75,1.75);(1.425,1.03)**\crv{(1.66,1.35)}
% bold loop segment 1.1
,(0.75,2);(-0.5,1.5)**\crv{(0.8125,1.375)}
,(0.725,2);(-0.5,1.525)**\crv{(0.80625,1.4)}
,(0.7,2);(-0.5,1.55)**\crv{(0.8,1.425)}
,(0.775,2);(-0.5,1.475)**\crv{(0.81875,1.35)}
,(0.8,2);(-0.5,1.45)**\crv{(0.825,1.325)}
% bold loop segment 2.1
,(1.25,2);(0.75,1)**\crv{(1.25,1.5)}
,(1.275,2);(0.775,0.9875)**\crv{(1.275,1.5)}
,(1.3,2);(0.8,0.975)**\crv{(1.3,1.5)}
,(1.225,2);(0.725,1.0125)**\crv{(1.225,1.5)}
,(1.2,2);(0.7,1.025)**\crv{(1.2,1.5)}
% bold loop segment 2.2
,(0.75,1);(-0.875,0.875)**\crv{(-0.0625,1.5)}
,(0.75,1.025);(-0.875,0.9)**\crv{(-0.0625,1.525)}
,(0.75,1.05);(-0.875,0.925)**\crv{(-0.0625,1.55)}
,(0.75,0.975);(-0.875,0.85)**\crv{(-0.0625,1.475)}
,(0.75,0.95);(-0.875,0.825)**\crv{(-0.0625,1.45)}
% labels
,(1.25,2)*{\bullet}
,(2.25,2.25)*{N}
,(-2,3);(-3,3)**\dir{-}
,(-3,3);(-3,-2)**\dir{-}
,(-3,-2);(-2,-2)**\dir{-}
,(-3.75,0.5)*{W}
,(9.25,-0.25)*{\begin{array}{l}\text{single transverse self intersection point} \\ \text{of the nontrivial element $q\in\mathcal{W}$.} \\ \text{Both dark loops represent the class $q$.}\end{array}}
% bold loop segment 1.2
\ar|(0.125)\hole@{-}@`{(0,3),(1,3),(2,2.75)} (-0.5,1.5);(1.75,1.75)
\ar|(0.125)\hole@{-}@`{(-0.025,3.025),(1,3.025),(2.025,2.775)} (-0.525,1.5);(1.775,1.75)
\ar|(0.125)\hole@{-}@`{(-0.05,3.05),(1,3.05),(2.05,2.8)} (-0.55,1.5);(1.8,1.75)
\ar|(0.125)\hole@{-}@`{(0.025,2.975),(1,2.975),(1.975,2.725)} (-0.475,1.5);(1.725,1.75)
\ar|(0.125)\hole@{-}@`{(0.05,2.95),(1,2.95),(1.95,2.7)} (-0.45,1.5);(1.7,1.75)
% bold loop segment 1.3
\ar@{-}@/_2pt/ (1.75,1.75);(0.75,2)
\ar@{-}@/_2pt/ (1.75,1.775);(0.75,2.025)
\ar@{-}@/_2pt/ (1.75,1.8);(0.75,2.05)
\ar@{-}@/_2pt/ (1.75,1.725);(0.75,1.975)
\ar@{-}@/_2pt/ (1.75,1.7);(0.75,1.95)
% bold loop segment 2.3
\ar|(0.45)\hole@{-}@`{(-1.125,2.625),(1.375,3.125)} (-0.875,0.875);(1.25,2)
\ar|(0.45)\hole@{-}@`{(-1.1125,2.675),(1.3875,3.15)} (-0.9,0.875);(1.275,2)
\ar|(0.45)\hole@{-}@`{(-1.1,2.725),(1.4,3.175)} (-0.925,0.875);(1.3,2)
\ar|(0.45)\hole@{-}@`{(-1.2,2.575),(1.3625,3.1)} (-0.85,0.875);(1.225,2)
\ar|(0.45)\hole@{-}@`{(-1.15,2.55),(1.35,3.075)} (-0.825,0.875);(1.2,2)
% arrow to bullet
\ar (2,1);(1.45,1.8)
\end{xy}\]
\caption{}
\label{fig:5.3}
\end{figure}

\end{proof}

\begin{remark}\label{rem:5.5}
This remark completes the proof of Theorem \ref{thm:5.3}; the statement about manifolds $X$ with $H_\ast(X)\cong H_\ast\left(S^d\right)$.  When $H_{d-1}(X)\cong 0$, each elementary bordism $W$ across a plaquette (i.e., $d$-cell) will necessarily be orientable and hence (unlike the bordism in $T^2$ between $(1,1)$ and $(1,-1)$) be represented as a codimension zero submanifold of $X$, $W\subset X$.  This and $H_{\frac{d}{2}}(X)\cong 0$ implies that even when Morse transitions are of index $k$, $d = 2k+2$, condition (\ref{lem:5.2.2}) of (\ref{lem:5.2}) will hold, whether or not $[\delta] = 0$.  So we conclude that \emph{all} Morse transitions of $E\subset X$ change the parity of $s(E)$.  Furthermore, in the previous part of the proof we established that $\chi(\overline{W}) =$ even in the present case since $\chi(X) = \chi\left(S^d\right) = 2$ and $\epsilon(0) = 0$.  These two facts imply that $(-1)^{s(E)}$ is a consistent assignment of signs for a nontrivial (and unique up to scale) zgswf in $\GDS(X)$. This completes the last portion of Theorem \ref{thm:5.3}.  Note that the proof only requires $H_{d-1}(X) = 0 = H_\frac{d}{2}(X)$, a weaker assumption than assuming $X$ is a $Z_2$-homology sphere.
\end{remark}

\begin{remark}\label{qcremark}
For $d$ odd, Theorem \ref{thm:5.3} gives an explicit form of the ground state wavefunction in each sector as
$\psi_x(E) = i^{\chi(E)}$.
We claim that the unitary $U$ that multiplies each basis vector by a phase $i^{\chi(E)}$ can be written as a local quantum circuit, i.e.,
with a bounded depth of the circuit and with the gates of the circuit supported on sets of bounded diameter, with these bounds depending only on local geometry.
Thus, there is a local quantum circuit that maps the ground state subspace of the $\GDS$ for $d$ odd onto the ground state subspace of a $\GTC$.
Further, since every Morse transition for odd $d$ changes $\chi$ by $\pm 2$, if we restrict to the ground state subspace of $H_+$ then the unitary $U$ conjugates the operator $H_\square$ of the $\GDS$ to the operator $H_\square$ of the $\GTC$ (we leave it as an open problem whether there is a relationship between the theories if we consider the space of states outside the ground state subspace of $H_+$).

To construct the local quantum circuit, write
\be
U=\prod_{j=0}^{d-1} \prod_\text{$j$-cells c} U(j,c),
\ee
where the second product is over all $j$-cells $c$, and $U(j,c)$ is a unitary that
 that multiplies basis vectors by phase equal to either $1$ or $\pm i$, with the phase being $1$ if the given cell $c$ is not in $E$ and $\pm i$ if $c$ is in $E$ (the phase being $+i$ if $j$ is even and $-i$ if $j$ is odd).  Each $U(j,c)$  is supported only on the $(d-1)$ cells that intersect the given $j$-cell $c$.  Those unitaries are the gates in the quantum circuit.  Unitaries supported on disjoint sets can be performed in the same round, and in a bounded number of
rounds all unitaries can be performed, assuming the cellulation has bounded local geometry, i.e., assuming there is a bound on the number of $(j+1)$ cells meeting any $j$ cell and also a bound on the number of $j$-cells in the boundary of any $(j+1)$ cell and a bound on the diameter of any cell.
\end{remark}

\section{Generalized loop, or ``balloon'' operators}\label{sec:6}

Let us recall the Wilson loop ``$W_l$'' operators \cite{LW} for the usual $\TC$ and $\DS$ theories.  In the $\TC$ case, one merely specifies a loop $l$, the support of $W_l$, and then $W_l$ acts as $\operatorname{NOT}$ on the bonds of $l$.  In the case of $\DS$, this definition of $W_l$ is augmented by a sign rule (e.g., line 44 of \cite{LW}) which is necessary to preserve ground states.  The rule stated in \cite{LW} for a loop $l$ or the $2$-sphere $S^2$ may be restated in topological language as the sign rule below.

Let $\alpha$ be the multi-loop of $\uparrow$ bonds.  The sign for the action of $W_l$ on $\psi(\alpha)$ is:

\be
-1^{s(l)} i^{\chi(\partial(l\cap\alpha))} \left(-1^{\operatorname{link}(\partial(l\cap\alpha))}\right)
\ee

Again $s$ is $Z_2$-semi-characteristic, which is $1$ for a simple closed curve $l$.  $(l\cap\alpha)$ consists of line segments (or all of $l$ if $l\subset\alpha$), so $\partial(l\cap\alpha)$ is an even number of points and $\chi$ merely counts them.  These points are naturally divided into pairs (``$0$-spheres'') as the boundaries of the component arcs of $\alpha\setminus l$. For a collection of $0$-spheres in a circle $l$, there is a total mod $2$ linking number which we denote by ``link,'' which can be computed, for example, by making $l$ bound a disk $\Delta$, then spanning all the $0$-spheres by simple arcs in $\Delta$, and finally counting (mod $2$) the number of pairwise intersections of the arcs.  An equivalent
definition is to reflect all component arcs in $\alpha \setminus l$ which lie outside $\Delta$ into the interior of $\Delta$, and then count pairwise intersections.
An example is given in Figure \ref{fig:6.1}.

\begin{figure}[hbpt]
\[\begin{xy}<5mm,0mm>:
% circle
(0,0)*\xycircle<20pt>{}
,(-3.25,0)*{l\text{ (round)}}
% right arc
,(0,0)*\ellipse<20.35pt>:a(-135),=:a(180){}
,(0,0)*\ellipse<20.7pt>:a(-135),=:a(180){}
,(0,0)*\ellipse<19.65pt>:a(-135),=:a(180){}
,(0,0)*\ellipse<19.3pt>:a(-135),=:a(180){}
% left arc (0,1.414);(-1,1)**\crv{(-0.5,1.414)}
,(0,1.459);(-1,1.05)**\crv{(-0.5,1.459)}
,(0,1.43);(-1,1.025)**\crv{(-0.5,1.43)}
,(0,1.364);(-1,0.95)**\crv{(-0.5,1.364)}
,(0,1.389);(-1,0.975)**\crv{(-0.5,1.389)}
% middle arc
,(0.25,0)*{\alpha}
,(0,1.414);(-1,-1)**\crv{(0,0)}
,(0.05,1.414);(-0.95,-1.025)**\crv{(0.05,0)}
,(0.025,1.414);(-0.975,-1.0125)**\crv{(0.025,0)}
,(-0.05,1.414);(-1.05,-0.975)**\crv{(-0.05,0)}
,(-0.025,1.414);(-1.025,-0.9875)**\crv{(-0.025,0)}
% hat
,(-1,1);(1,1)**\crv{(-0.375,2.25)&(0.5,2.375)&(1.25,2)}
,(-1.025,1);(1.025,1)**\crv{(-0.4,2.275)&(0.5,2.4)&(1.275,2.025)}
,(-1.05,1);(1.05,1)**\crv{(-0.4,2.3)&(0.5,2.425)&(1.3,2.05)}
,(-0.95,1);(0.95,1)**\crv{(-0.4,2.2)&(0.5,2.325)&(1.2,1.95)}
,(-0.975,1);(0.975,1)**\crv{(-0.4,2.225)&(0.5,2.35)&(1.225,1.975)}
,(3.25,2)*{\alpha\text{ (in bold)}}
\end{xy}\hspace{.1in}
\begin{xy}
\ar^{W_l} (0,0);(15,0)
\end{xy}\hspace{.1in}
\begin{xy}<5mm,0mm>:
% circle
(0,0)*\xycircle<20pt>{}
% left arc
,(-1,-1);(-1,1)**\crv{(-1.75,0)}
,(-0.98,-0.98);(-0.98,0.98)**\crv{(-1.725,0)}
,(-0.96,-0.96);(-0.96,0.96)**\crv{(-1.7,0)}
,(-1.04,-1.04);(-1.04,1.04)**\crv{(-1.8,0)}
,(-1.02,-1.02);(-1.02,1.02)**\crv{(-1.775,0)}
% middle arc
,(0.25,0)*{\alpha}
,(0,1.414);(-1,-1)**\crv{(0,0)}
,(0.05,1.414);(-0.95,-1.025)**\crv{(0.05,0)}
,(0.025,1.414);(-0.975,-1.0125)**\crv{(0.025,0)}
,(-0.05,1.414);(-1.05,-0.975)**\crv{(-0.05,0)}
,(-0.025,1.414);(-1.025,-0.9875)**\crv{(-0.025,0)}
% right arc ((0,1.414);(1,1)**\crv{(0.5,1.414)})
,(0,1.459);(1,1.05)**\crv{(0.5,1.459)}
,(0,1.43);(1,1.025)**\crv{(0.5,1.43)}
,(0,1.364);(1,0.95)**\crv{(0.5,1.364)}
,(0,1.389);(1,0.975)**\crv{(0.5,1.389)}
% hat
,(-1,1);(1,1)**\crv{(-0.375,2.25)&(0.5,2.375)&(1.25,2)}
,(-1.025,1);(1.025,1)**\crv{(-0.4,2.275)&(0.5,2.4)&(1.275,2.025)}
,(-1.05,1);(1.05,1)**\crv{(-0.4,2.3)&(0.5,2.425)&(1.3,2.05)}
,(-0.95,1);(0.95,1)**\crv{(-0.4,2.2)&(0.5,2.325)&(1.2,1.95)}
,(-0.975,1);(0.975,1)**\crv{(-0.4,2.225)&(0.5,2.35)&(1.225,1.975)}
,(1.5,2)*{\alpha}
\end{xy}\]
\[\text{sign} = (-1)\left(i^4\right)\left(-1^1\right) = 1\]
\caption{}
\label{fig:6.1}
\end{figure}

Our task in this section is to define the correct signs for the analogous Wilson ``balloon'' operator.  To obtain explicit formulas, we introduce restrictions on the topology of $X$ as necessary.  These operators are equal to $\pm \operatorname{NOT}$ on an imbedded codimension $=1$ $\mathcal{C}$-cellular submanifold $L^{d-1}\subset X^d$, up to a sign.  To be consistent, the sign $\pm$ must be set so as to carry zgswfs to zgswfs.  Keeping similar notation, $\alpha$ is the ($d-1$)-dimensional cellular submanifold supporting a basis vector and $L$ the Wilson balloon.

Note that in addition to these Wilson loop operators $W_l$, the operator given by the product of Pauli $Z$ around any closed loop on the dual lattice also commutes with the Hamiltonian of both the TC and DS models.  This operator (which we will call a dual Wilson loop) straightfowardly generalizes to the $\GTC$ and $\GDS$.  For the $\GDS$, given any closed loop on the dual lattice, the product of Pauli $Z$ over all $(d-1)$-cells that intersect that loop is an operator that commutes with the Hamiltonian.  If the closed loop is nullhomologous, then this operator acts as the identity on the ground state subspace.

\begin{theorem}\label{thm:6.1}
Let $d=2k+2$, $k>0$.  As in \ref{thm:5.3}, we require that $X^d$ is a $Z_2$- homology sphere.\footnote{It is sufficient to assume $H_{d-1}(X) \cong 0\cong H_\frac{d}{2}(X)$ and that the Stiefel-Whitney classes $w_i$ of the tangent bundle $\tau(X)$ vanish for $i\leq\frac{d}{2}$.}  Without this assumption, a local description of the balloon operator signs may not be possible.  The consistent sign for Wilson balloon operators is:
\[(-1)^{s(L)}\left(i^{\chi(\partial(L\cap\alpha))}\right)\]
For the case of odd dimensional $d$, let $d=2k+1$.  Then for all closed $X^d$ (no restriction on topology), the consistent sign is:
\[i^{\chi(L)} (-1)^{\chi(L\cap\alpha)}\]
\end{theorem}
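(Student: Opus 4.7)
The plan is to reduce the requirement that $W_L$ carry zgswfs to zgswfs to a single local sign recursion and then verify that recursion for each candidate formula. Writing $W_L|\alpha\rangle = t(L,\alpha)\,|\alpha \triangle L\rangle$, the fact that $L$ is a $(d-1)$-cycle means $\alpha \triangle L$ is a cycle whenever $\alpha$ is, so $W_L$ automatically commutes with every $H_e$; the whole content is commutation with $H_c$ on the zero eigenspace of $H_+$. I would expand $O_c W_L \psi = W_L \psi$ coefficient-by-coefficient for a zgswf $\psi$, substitute the fluctuation rule $\psi(\alpha \triangle \partial c) = -(-1)^{\chi(\alpha \cap \partial c)}\psi(\alpha)$ on $\alpha \triangle L$, and use the mod-$2$ additivity $\chi((\alpha \triangle L)\cap \partial c) \equiv \chi(\alpha \cap \partial c) + \chi(L \cap \partial c)$ coming from $\chi(A \triangle B) \equiv \chi(A) + \chi(B)$. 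The two Morse signs then cancel to the one-step recursion
\begin{equation}
\frac{t(L,\alpha \triangle \partial c)}{t(L,\alpha)} = (-1)^{\chi(L \cap \partial c)}
\end{equation}
that the balloon sign must satisfy for every $d$-cell $c$. By the lemma at the start of Section 5 any two cellular representatives of a class in $H_{d-1}(X)$ are connected by such plaquette moves, so this recursion determines $t(L,\cdot)$ up to one scalar per sector.

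For the odd case $d = 2k+1$ the recursion is immediate from $t(L,\alpha) = i^{\chi(L)}(-1)^{\chi(L \cap \alpha)}$: the prefactor $i^{\chi(L)}$ is independent of $\alpha$, and the set identity $L \cap (\alpha \triangle \partial c) = (L \cap \alpha) \triangle (L \cap \partial c)$ combined with the additivity of $\chi$ mod $2$ produces precisely $(-1)^{\chi(L \cap \partial c)}$, with no hypothesis on $X$ required.

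For the even case $d = 2k+2$ I would first simplify the $i$-factor in $(-1)^{s(L)} i^{\chi(\partial(L \cap \alpha))}$. Because $\mathcal{C}$ is a generic cellulation, $L \cap \alpha$ is a compact $(d-1)$-manifold with corners; since $d-1$ is odd its double $D(L \cap \alpha)$ is a closed odd-dimensional manifold of vanishing Euler characteristic, and $\chi(DM) = 2\chi(M) - \chi(\partial M)$ yields $\chi(\partial(L \cap \alpha)) = 2\chi(L \cap \alpha)$, so
\begin{equation}
i^{\chi(\partial(L \cap \alpha))} = (-1)^{\chi(L \cap \alpha)}.
\end{equation}
The $\alpha$-dependence of the even formula thus collapses to $(-1)^{\chi(L \cap \alpha)}$ and the very same additivity argument closes the recursion. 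The $Z_2$-homology sphere hypothesis does not enter the local recursion itself but pins down the overall scalar: it is exactly the setting of Theorem \ref{thm:5.3} in which the zgswf has the local form $(-1)^{s(E)}$, and $(-1)^{s(L)}$ is the unique normalization making $W_L$ an involution that intertwines these canonical ground states. Without the hypothesis the zgswf sign is only given by the nonlocal history data of Remark \ref{rem:5.5}, and the balloon sign inherits the same nonlocality, which is why a local description is not available in general.

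The main obstacle I anticipate is the step $\chi(\partial(L\cap\alpha)) = 2\chi(L\cap\alpha)$: it requires $L \cap \alpha$ to be an honest manifold with corners rather than a singular subcomplex, so the generic-cellulation/heritability analysis promised in the appendix must be invoked carefully. A useful sanity check is the extremal case $L \subset \alpha$: then $L \cap \alpha = L$ is closed, $\partial(L \cap \alpha) = \emptyset$, and the identity reads $0 = 2\chi(L)$, consistent with $L$ being a closed odd-dimensional manifold.
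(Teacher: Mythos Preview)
Your recursion argument rests on the identity $\chi(A\triangle B)\equiv\chi(A)+\chi(B)\bmod 2$ for codimension-zero submanifolds-with-corners $A,B\subset\partial c$, and this is simply false. Take two adjacent $(d-1)$-cells $f_1,f_2$ of $\partial c$: each is a disk with $\chi=1$, and $f_1\triangle f_2=f_1\cup f_2$ is again a disk with $\chi=1\not\equiv 1+1$. The correct plaquette recursion, obtained by honestly comparing the two Morse signs, is
\[
\frac{t(L,\alpha\triangle\partial c)}{t(L,\alpha)}=(-1)^{\chi((\alpha\triangle L)\cap\partial c)+\chi(\alpha\cap\partial c)},
\]
and checking it for $t(L,\alpha)=(-1)^{\chi(L\cap\alpha)}$ amounts to the four-term identity $\chi(L\cap(\alpha\triangle\partial c))+\chi(L\cap\alpha)+\chi((\alpha\triangle L)\cap\partial c)+\chi(\alpha\cap\partial c)\equiv 0$, which is \emph{not} a formal consequence of inclusion--exclusion. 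In odd dimensions the paper bypasses this entirely: it computes the global difference $\chi(\alpha\triangle L)-\chi(\alpha)$ via the decomposition $L=A\cup B$, $\alpha=B\cup C$, and reads off $i^{\chi(L)}(-1)^{\chi(L\cap\alpha)}$ directly from the known zgswf $\psi=i^{\chi}$.

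The more serious gap is in the even case. Your reduction $i^{\chi(\partial(L\cap\alpha))}=(-1)^{\chi(L\cap\alpha)}$ is correct, so the $\alpha$-dependence of the even and odd formulas coincide; if your argument worked it would establish the even formula with \emph{no} hypothesis on $X$. But the paper's Remark following Lemma~\ref{lem:6.4} exhibits $X=S^2\times\mathbb{C}P^2$ (where $w_2\neq 0$) on which the sign formula gives the wrong answer for an isotopy. So the $Z_2$-homology-sphere hypothesis is not a normalization issue---it is needed for the formula to be \emph{consistent} at all. What actually happens is that consistency requires $s(\alpha\triangle L)+s(\alpha)\equiv s(L)+\tfrac{1}{2}\chi(\partial(L\cap\alpha))$, and the paper proves this by an algebraic lemma on triples of Lagrangians in a symplectic $F_2$-space (Lemma~\ref{lem:6.3}) together with a ``no twist'' condition (Lemma~\ref{lem:6.4}) whose verification uses the vanishing of $w_i(\tau X)$ for $i\le d/2$. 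That is the substantive content of the even case, and your proposal does not touch it.
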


\subsubsection*{Note}
The formula in odd dimensions is quite elementary, involving only Euler characteristic.  In even dimensions, the formula is curiously simpler than the classical $d=2$ case of $\DS$.  If one traces the reason back through the proof, the anomaly for $d=2$ is that $\partial(L\cap\alpha)$ is then zero dimensional, i.e. points. In this very low dimension, the middle dimensional (zero) homology of $S^0\times S^0$ a product of spheres has rank $4$ not $2$ and the usual analysis of ``hyperbolic pairs,'' forms $\left|\begin{array}{cc} 0 & 1 \\ \pm 1 & 0\end{array}\right|$ or over $Z_2$ simply $\left|\begin{array}{cc} 0 & 1 \\ 1 & 0\end{array}\right|$, does not apply.

\begin{proof}[Proof of \ref{thm:6.1}]
Consider first $d$ odd.  Write $L=A\cup B$, and $\alpha = B\cup C$, $M = \partial A = \partial B = \partial C$, as shown in Figure \ref{fig:6.2}, and apply the additivity formula (\ref{eqn:5.4}) for Euler characteristic.

\begin{figure}[hbpt]
\[\begin{xy}<5mm,0mm>:
(0,0)*\xycircle<20pt>{}
,(0,1.4);(1.4,0)**\crv{(1.4,3)&(3,2)}
,(1.4,0)*{\bullet}
,(2,-0.125)*{M}
,(0,1.4)*{\bullet}
,(-0.125,2)*{M}
,(-0.66,-0.66)*{A}
,(0.66,0.66)*{B}
,(2.25,2.25)*{C}
\end{xy}\]
\caption{}
\label{fig:6.2}
\end{figure}

\be
\begin{array}{cc}
\chi(A\cup C) - \chi(B\cup C) = \chi(A) + \chi(C) - \chi(M) - \chi(B) - \chi(C) + \chi(M) = \\
\chi(A) - \chi(B)= \chi(A) + \chi(B)-2\chi(B) = \chi(A \cup B) + \chi(M)-2\chi(B) = \\
\chi(L)+\chi(M)-2\chi(L \cap \alpha).
\end{array}
\ee
Since $M$ is a closed odd dimensional manifold, $\chi(M)=0$.
Thus the change in sign of a configuration in the gswf upon applying $W_L$ is as claimed.

Now consider $d=2k+2$, even.  To handle this case, we need a rather powerful algebraic lemma \ref{lem:6.3} which, since we are using $Z_2$-coefficients, may be stated uniformly regardless of whether $k$ is even or odd.  Lemma \ref{lem:6.3} will be applied to the three kernels
\[\ker\left(H_k(M)\rightarrow H_k(A)\right)\text{, }\ker\left(H_k(M)\rightarrow H_k(B)\right)\text{, and }\ker\left(H_k(M)\rightarrow H_k(C)\right)\]
which, in a slight abuse of notation, are also denoted $A$, $B$, and $C$, respectively.

\begin{lemma}\label{lem:6.2}
If the Stiefel-Whitney classes of the tangent bundle $w_i\left(\tau\left(X^{2k+2}\right)\right)$ vanish for $i\leq k = \left(\frac{d}{2} - 1\right)$, then the intersection form on $H_k\left(M^{2k}; Z\right)$ is isomorphic to a direct sum $Q = \bigoplus_{i=1}^j \left(\begin{array}{cc} 0 & 1 \\ 1 & 0\end{array}\right)$.
\end{lemma}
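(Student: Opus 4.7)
The plan is to combine a half-lives/half-dies Lagrangian argument with a Wu-class vanishing. Since $M=\partial B$ bounds inside $X$, and $X$ is orientable because $w_1(\tau X)=0$ (which is among the vanishing classes as $k\ge 1$), a $\mathbb{Z}$-coefficient refinement of Fact~\ref{fact:1} on the torsion-free quotient shows that $\mathcal{L}:=\mathrm{image}\bigl(H_k(B;\mathbb{Z})/\mathrm{tors}\to H_k(M;\mathbb{Z})/\mathrm{tors}\bigr)$ is a Lagrangian (isotropic of half the rank) for the intersection form on $H_k(M;\mathbb{Z})/\mathrm{tors}$.

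Next, split by parity of $k$. When $k$ is odd the form is skew-symmetric and unimodular on a finitely generated free abelian group, hence automatically a direct sum of standard symplectic pairs; this matches $Q$ after a basis change. When $k$ is even the form is symmetric, and a unimodular symmetric $\mathbb{Z}$-form admitting a Lagrangian decomposes as $\bigoplus Q$ if and only if it is \emph{even} (all $x\cdot x\in 2\mathbb{Z}$); by Wu's formula this evenness is equivalent to the vanishing of the top Wu class $v_k(M)\in H^k(M;\mathbb{Z}_2)$.

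It therefore remains to show $v_k(M)=0$. Decomposing $\tau X|_M=\tau M\oplus \nu$, where $\nu$ is the rank-two normal bundle, and combining multiplicativity of total Stiefel--Whitney classes with the hypothesis $w_i(\tau X)=0$ for $i\le k$, I get $w(\tau M)\equiv w(\nu)^{-1}$ in degrees $\le k$, expressing each low $w_i(\tau M)$ as a polynomial in $w_1(\nu),w_2(\nu)$. In the paper's geometric setup the three-fold bounding $M=\partial A=\partial B=\partial C$ together with two-sidedness of the cellular pieces $A,B,C$ in $X$ trivializes $\nu$; then $w_i(\tau M)=0$ for $i\le k$, and the Wu recursion $v_k=w_k+\mathrm{Sq}^1 v_{k-1}+\cdots+\mathrm{Sq}^{k-1}v_1$ inductively yields $v_j(M)=0$ for all $j\le k$, starting from $v_1=w_1=0$.

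The hard part is cleanly establishing the SW-triviality of $\nu$ from the stated hypotheses alone. If the two-sidedness argument does not go through in full generality, one must instead carry out an explicit Steenrod-algebra computation verifying that the polynomial expressing $v_k(M)$ in $w_1(\nu)$ and $w_2(\nu)$ vanishes identically---possibly requiring additional input from the multi-bounding of $M$ in $X$ to constrain the low Stiefel--Whitney classes of $\nu$.
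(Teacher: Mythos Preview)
Your approach works but is substantially more elaborate than the paper's, mainly because you have read the coefficient ring as $\mathbb{Z}$ whereas the paper intends $Z_2$ throughout (the bare ``$Z$'' in the statement is a typo; note that Lemma~\ref{lem:6.3}, into which this lemma feeds, is explicitly over $F_2$). Over $Z_2$ there is no parity split on $k$ and no need for a Lagrangian at all: once $v_k(M)=0$ is established, Wu's formula gives $x\cup x = v_k\cup x = 0$ for every $x\in H^k(M;Z_2)$, so the intersection form is alternating over $F_2$, and any nondegenerate alternating form over a field is automatically a direct sum of hyperbolic planes (with even rank coming for free). Your $\mathbb{Z}$-coefficient detour through half-lives/half-dies plus the classification of indefinite even unimodular forms is correct and proves a sharper integral statement, but it is not needed for the downstream $F_2$ argument.

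On the normal bundle, your caution is unnecessary: the paper dispatches this in one line. Since $M=\partial A=\partial B$ sits as the corner where two hypersurface pieces meet, the inward normal to $A$ and the inward normal to $B$ furnish two linearly independent sections of the rank-$2$ bundle $\nu_{M\hookrightarrow X}$, trivializing it outright. Hence $w_i(\tau M)=\mathrm{inc}^\ast w_i(\tau X)=0$ for $i\le k$ directly, and the Wu recursion gives $v_j(M)=0$ for $j\le k$ exactly as you outline; the Steenrod-algebra fallback you flag as a potential hard step never arises.
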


\begin{proof}
The total Stiefel-Whitney class of $\tau(M)$, $w = 1 + w_1 + w_2 +\cdots + w_{2k}$, satisfies $w = \operatorname{Sq}(v)$ where $v = 1 + v_1 + v_2 +\cdots + v_{2k}$ is the total Wu class and $\operatorname{Sq}$ is the total Steenrod square $\operatorname{Sq} = 1 + \operatorname{Sq}^1 + \operatorname{Sq}^2 + \cdots$.  Inductively we find:
\[w_1 = v_1\]
\[w_2 = v_2 + v_1\cup v_1\]
\[w_3 = v_3 + {\rm Bockstein}(v_2)\]
\[w_4 = v_4 +{\rm Bockstein}(v_3) + v_2\cup v_2\]
\[\vdots\]
so $w_i = 0$, $i\leq k$, implies $v_i = 0$, $i\leq k$.

We have assumed that $w_i(\tau(X)) = 0$, $i\leq k$.  However, $\tau(X)\vert_M = \tau(M)\oplus \nu_{M\hookrightarrow X}$ and $\nu_{M\hookrightarrow X}$ is trivialized by the cross-sections (inward normal to $A$, inward normal to $B$).  Thus, for all $i$, $w_i \tau(M) = \operatorname{inc}^\ast w_i(\tau(X))$, implying that $w_i(\tau(M)) = 0$, $i\leq k$.  Hence the $k^\text{th}$ Wu class of $M$, $v_k = 0\in H^k(M)$, vanishes.

$\operatorname{Sq}^k(x) = v_k\cup x$ for $x\in H^k(M)$ so:
\[x\cup x = v_k\cup x,\]
so the vanishing of $v_k$ implies that for all $x\in H^k(M)$, $x\cup x = 0$.  It is now an elementary exercise in duality to put the intersection form of $M$ in the claimed coordinates.
\end{proof}

In Lemma \ref{lem:6.3}, $Z_2$ is called $F_2$, the field of two elements, to emphasize that it is a field.

\begin{lemma}\label{lem:6.3}
Consider a space $F_2^{2j}$, with coordinates labelled $1,2,...,2j$.  Let $Q$ be the form
\be
Q=\begin{pmatrix}
0 & 1 \\
1 & 0 \\
&& 0 & 1 \\
&&1 & 0 \\
&&&& \ldots
\end{pmatrix}.
\ee
Let $A,B,C$ be three maximal subspaces (i.e., $j$-dimensional subspaces) such that $Q$ vanishes on each subspace.
Assume also that for any triple of vectors $a\in A, b \in B, c \in C$ such that $a+b+c=0$ we have \be
\label{ztwist}
(a,Qb)=0.
\ee
Then
\be
\dim(A \cap B)+\dim(B \cap C)+\dim(C \cap A)=j \mod 2.
\ee
\end{lemma}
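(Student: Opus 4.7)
The plan is to encode the identity $\alpha+\beta+\gamma \equiv j \pmod{2}$ in the rank of a bilinear form on the auxiliary space
\[ K = \{(a,b,c) \in A \oplus B \oplus C : a + b + c = 0\}. \]
Define $F \colon K \times K \to F_2$ by $F((a,b,c),(a',b',c')) = (a,Qb')$. The twist hypothesis (\ref{ztwist}) says exactly $F(k,k) = 0$ for all $k \in K$, and expanding $F(k+k',k+k') = 0$ yields $F(k,k') + F(k',k) = 0$, so $F$ is symmetric and alternating. Over $F_2$ any such form has even rank, so $\dim K - \dim R$ is even, where $R$ is the radical of $F$. Thus it suffices to establish $\dim K = j$ and $\dim R = \alpha + \beta + \gamma$.

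The next step is to reduce to the case $J := A \cap B \cap C = 0$ by passing to the symplectic reduction $V' = J^\perp / J$ (the subspace $J$ is isotropic because $J \subseteq A = A^\perp$). The Lagrangians $A,B,C$ descend to Lagrangians of $V'$, the twist condition is inherited, and each of $\alpha,\beta,\gamma,j$ drops by $\dim J$, so the target congruence is unchanged. In the reduced setting $(A+B+C)^\perp = J = 0$ gives $A+B+C = V$ and hence $\dim K = 3j - 2j = j$. Introduce
\[ S_1 = \{(x,x,0) : x \in A\cap B\},\ S_2 = \{(0,y,y) : y \in B\cap C\},\ S_3 = \{(z,0,z) : z \in A\cap C\} \subseteq K; \]
when $J = 0$ the pairwise intersections $S_i \cap S_j$ are trivial and the kernel of $S_1 \oplus S_2 \oplus S_3 \to K$ (the diagonal copy of $J$) is zero, so $\dim(S_1+S_2+S_3) = \alpha+\beta+\gamma$.

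The core of the argument is then to show $R = S_1 + S_2 + S_3$. The inclusion $\supseteq$ is a direct check using $A = A^\perp$ together with the alternative evaluations $F((a,b,c),(a',b',c')) = (a,Qb') = (a,Qc') = (c,Qa')$ that follow from $a+b+c = 0$, the Lagrangian condition, and the twist. For $\subseteq$, the set of $b'$ appearing in triples of $K$ is $B \cap (A+C)$, so the condition $F(k,\cdot) \equiv 0$ forces
\[ a \in A \cap (B\cap(A+C))^\perp = A \cap (B + (A\cap C)) = (A\cap B) + (A\cap C), \]
and symmetrically $b \in (A\cap B) + (B\cap C)$. With $J=0$ the decompositions $a = a_1+a_2$, $b = b_1+b_2$, $c = c_1+c_2$ into the three pairwise intersections are unique, and projecting the relation $a+b+c = 0$ onto each summand of the direct sum $(A\cap B) \oplus (A\cap C) \oplus (B\cap C)$ forces $a_1 = b_1$, $a_2 = c_1$, $b_2 = c_2$; then $(x,y,z) := (a_1, b_2, a_2)$ realizes $(a,b,c) = (x+z, x+y, y+z) \in S_1+S_2+S_3$.

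The main obstacle is this last direct-sum projection: the reduction to $J=0$ is indispensable because it makes the relevant decompositions unique and so pins $R$ down to $S_1+S_2+S_3$ exactly, rather than merely containing it. Once this identification is in place, $\dim(K/R) = j - (\alpha+\beta+\gamma)$ is even and the lemma follows.
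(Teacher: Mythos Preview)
Your argument is correct and takes a genuinely different route from the paper.  The paper proceeds by an explicit induction on $j$: whenever one of the pairwise intersections is nonzero it performs a basis change that peels off a hyperbolic plane, reducing $j$ by one while changing $\alpha+\beta+\gamma$ by one, and after several case distinctions lands in the base case $A\cap B=B\cap C=C\cap A=0$; there it parametrizes $B$ and $C$ as graphs $(My;y)$, $(Ny;y)$ over $A$, observes that the no-twist condition forces $M+N$ to be symmetric with zero diagonal, and that $B\cap C=0$ forces $M+N$ to be nonsingular, whence $j$ is even by the standard fact about alternating forms.

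Your proof packages the same underlying parity fact more conceptually.  By introducing the space $K=\{(a,b,c):a+b+c=0\}$ and the bilinear form $F(k,k')=(a,Qb')$, you turn the no-twist hypothesis into the single statement that $F$ is alternating, so its rank is even; what remains is the clean identification of $\dim K$ and of the radical in terms of the pairwise intersections, which you carry out after the symplectic reduction by $J=A\cap B\cap C$.  This is essentially Kashiwara's construction of the Maslov index of a Lagrangian triple, transported to $F_2$.  The advantages are that no case analysis is needed, the role of the no-twist condition is isolated to a single line, and the argument would go through verbatim over any field.  The paper's advantage is that it is entirely elementary linear algebra with no auxiliary constructions, and it makes visible the concrete mechanism (the matrix $M+N$) by which the hypothesis enters.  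Both proofs ultimately rest on the same fact that a nondegenerate alternating form has even rank; yours invokes it once on all of $K/R$, whereas the paper invokes it only in the stripped-down base case.
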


Since we will give an inductive proof and we will want to show that this last condition Eq.~(\ref{ztwist}) holds under the induction, we will be refering to it below; so, we give it a name, referring to it as
the ``no twist" condition.
Before giving the proof, we remark that the no twist condition is symmetric under interchange of the subspaces $A,B,C$, because $(b,Qc)=(b,Qa)+(b,Qb)=0$ and similarly $(a,Qc)=0$.

\begin{proof}
Assume that $\dim(A\cap B)\neq 0$.  In this case we will give an inductive proof, reducing the problem to a problem with $j$ reduced by $1$.
Apply a basis transformation so that $A$ and $B$ both contain the vector $v_1 \equiv (1,0,0,0,\ldots)$.  Then, every vector in $A,B$ vanishes on the second entry.
So, $A$ can be written as the span of $v_1$ and $A'$ and $B$ can be written as the span of $v_1$ and $B'$, where $A'$ and $B'$ are $j-1$-dimensional subspaces with $Q$ vanishing on $A'$ and $B'$ and every vector in $A'$ or $B'$ vanishing on the first two entries.
Suppose that $v_1$ is also in $C$ or that $v_2=(0,1,0,0,\ldots)$ is in $C$.  Then, $C$ can also be written as the span of $v_1$ and $C'$ or $v_2$ and $C'$, with $C'$ vanishing on the first two entries.  Then, the subspaces $A',B',C'$ are maximal subspaces of a space $F_2^{2(j-1)}$ with $Q$ vanishing on $A',B',C'$ and the no twist condition holding for $A',B',C'$  because $A',B',C'$ are subspaces of $A,B,C$ respectively.  Further,
\be
\label{inductive}
\begin{array}{ll}
\dim(A \cap B)+\dim(B \cap C)+\dim(C \cap A)=\\
\dim(A' \cap B')+\dim(B' \cap C')+\dim(C' \cap A')+1 \mod 2,
\end{array}
\ee
as desired.

So, suppose that $v_1$ and $v_2$ are not in $C$.  Then, it cannot be the case that every vector in $C$ vanishes on the first coordinate (else $v_2$ must be in $C$) or that every vector in $C$ vanishes on the second coordinate (else $v_1$ must be in $C$).  So, one of two cases hold.  Either $C$ contains a vector $w=(1,0,w_3,w_4,\ldots)$ and a vector $w'=(0,1,w'_3,w'_4,\ldots)$ or
$C$ contains a vector $x=(1,1,x_3,x_4,\ldots)$ but does not contain any vector of the form $w$ or $w'$.

Suppose the first case holds, that $C$ contains such a $w$ and $w'$.
So, $C$ can be written as the span of $w,w'$ and some subspace $C'_0$ that vanishes on the first and second coordinates.  Let $y$ be the vector $(0,0,w_3,w_4,\ldots)$.  Define $C'$ to be the span of $C'_0$ and $y$.  Note that $y$ is not in $C'_0$ by the assumption that $v_1$ is not in $C$ so $C'$ has dimension $j-1$.  Note that $\dim(A \cap C)=\dim(A \cap {\rm span}(w,C_0'))=\dim(A' \cap C')$ and similarly $\dim(B' \cap C')=\dim(B \cap C)$.
So, this again gives three subspaces $A',B',C'$ for which Eq.~(\ref{inductive}) holds and the no twist condition holds so again we reduce $j$ by $1$.

Suppose instead the second case holds, that $C$ contains a vector
\[x=(1,1,x_3,x_4,\ldots)\]
but does not contain any vector of the form $w$ or $w'$.
Define $C'$ to be the projection of $C$ onto the space on which the first two coordinates vanish; i.e., $C'$ is the space of vectors $(0,0,v_3,v_4,\ldots)$
such that $(0,0,v_3,v_4,\ldots)$ is in $C$ or $(1,1,v_3,v_4,\ldots)$ is in $C$.  Being a projection, $C'$ has dimension $j$ or $j-1$.  The form $Q$ vanishes on $C'$ so $C'$ has dimension $j-1$.  So, if a vector $(1,1,x_3,x_4,\ldots)$ is in $C$ then $(0,0,x_3,x_4,\ldots)$ is in $C$.  So, $(1,1,0,0,\ldots)$ is in $C$ so $C$ is the span of $(1,1,0,0,\ldots)$ and $C'$, so $\dim(A' \cap C')=\dim(A \cap C)$ and $\dim(B' \cap C')=\dim(B \cap C)$ so Eq.~(\ref{inductive}) holds.
Further, the no twist condition holds for $A',B',C'$.

So, we succeed in reducing $j$ by $1$ whenever $\dim(A\cap B)\neq 0$.  A similar reduction can be performed whenever $\dim(A \cap C) \neq 0$ or $\dim(B \cap C)\neq 0$ so
we can assume that $A\cap B=B\cap C=C\cap A=\emptyset$.  We now show that this implies that $j$ is even.

Make a basis transformation leaving $Q$ invariant so that $A$ is the space of vectors vanishing on the even coordinates.
Introduce notation, writing a vector $v\in F_2^{2j}$ as $v=(x;y)$ where $x,y$ are $j$-dimensional vectors to indicate that the odd coordinates of $v$ are the entries of $x$ while
the even coordinates are the entries of $y$.  Parametrize $B$ as the space of vectors $(Ku;Lu)$ for two matrices $K,L$ where $u$ is a $j$-dimensional vector.  Since $B$ is $j$-dimensional and
$A \cap B=\emptyset$, $L$ is non-singular.  So, we can instead parametrize $B$ as the space of vectors $(KL^{-1} y; y)$ or $(My;y)$ where $M=KL^{-1}$.  Similarly, we can parametrize $C$
as the space of vectors $(Ny;y)$.  The matrices $M,N$ are symmetric because $Q$ vanishes on $B,C$.  We now use the no twist condition.  We have $((M+N)y;0)+(My;y)+(Ny;y)=0$, and so $(My,y)+(Ny,y)=0$ for all $y$ or equivalently $(y,(M+N)y)=0$.  So, $M+N$ vanishes on the diagonal.  However, $M+N$ must be nonsingular (otherwise, $B \cap C \neq \emptyset$).
So, $j$ must be even.

The fact that $j$ must be even follows from the fact that any symmetric, non-singular matrix over $F_2$ which vanishes on the diagonal must be even dimensional.  To show this, consider any such matrix $F$.  If $G$ is a non-singular matrix, then $G^T F G$ is still a symmetric non-singular matrix which vanishes on the diagonal.  We use two different choices of $G$, either a permutation matrix or a matrix which is $1$ on the diagonal and has one nonzero entry of the diagonal, to simplify the matrix $F$ through a sequence of steps similar to Gaussian elimination; the second choice of $G$ gives a ``row and column" operation.  $F$ must have a nonzero entry on the first row (otherwise it is singular), so use permutations to make the upper right corner of $F$ nonzero.  Then, use ``row and column" operations to make all entries in the last column vanish, except for the entry in the upper right corner.  Repeat this procedure on the second row, and so on, continuing until all the nonzero entries are on the diagonal from top right to bottom left corner.  This diagonal intersects the main diagonal if $j$ is odd (and so the matrix must be singular), so $j$ must be even.
\end{proof}

Again, the notation of the lemma is designed to evoke Figure \ref{fig:6.2}, with a minimal abuse of notation.  We wrote $A$, $B$, and $C$ for the kernels of the three maps:
\[H_k(M)\rightarrow H_k(A)\]
\[H_k(M)\rightarrow H_k(B)\]
\[H_k(M)\rightarrow H_k(C).\]
As we now show, these kernels indeed have the ``no twist'' (nt) property:
\be\label{eqn:nt}
\tag{\text{nt}}
\text{$a+b+c=0$ ($a\in A$, $b\in B$, $c\in C$) implies $(a,Qb) = 0$.}
\ee

\begin{lemma}\label{lem:6.4}
Under the assumption $w_i(\tau(X)) = 0$, $i\leq k+1 = \frac{d}{2}$, (\ref{eqn:nt}) holds.
\end{lemma}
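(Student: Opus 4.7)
The plan is to invoke a Brown--Kervaire-style quadratic refinement $q$ of the intersection form $Q$ on $H_k(M; F_2)$, induced by the Stiefel--Whitney structure on $X$, and to show that each of the three bordisms $A$, $B$, $C$ trivializes $q$ on its associated Lagrangian kernel. The (nt) identity then follows from the defining quadratic relation.

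First, I would propagate Wu-class vanishing from $X$ to $M$ and to each of $A$, $B$, $C$. From $w_i(\tau(X)) = 0$ for $i \leq k+1$, the Whitney product formula together with trivializability of the relevant normal bundles (the rank-$2$ bundle $\nu$ of $M$ in $X$ is framed by the inward normals to $A$ and $B$ along $M = \partial A = \partial B$; each line bundle $\nu_A$ of $A$ in $X$ is trivialized by a coorientation) gives $w_i(\tau(M)) = 0$ and $w_i(\tau(A)) = 0$ (similarly for $B, C$) for $i \leq k+1$. The relation $w = \operatorname{Sq}(v)$ and the induction used in Lemma~\ref{lem:6.2} then yield $v_i = 0$ in all these bundles for $i \leq k+1$. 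In particular, the Wu-structure data on $M^{2k}$ produces a quadratic refinement
\[q: H_k(M; F_2) \to F_2, \qquad q(x+y) = q(x) + q(y) + Q(x, y).\]

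Second, I would show that $q$ vanishes on each Lagrangian kernel. For $a \in \ker_A$ with bounding chain $\alpha_A \subset A$, the vanishing $v_{k+1}(\tau(A)) = 0$ implies that the Wu-structure on $M$ defining $q$ extends over the bordism $(A, M)$; the standard Brown--Kervaire argument (in its relative form for a cobordism of the base manifold) then forces $q(a) = 0$. The same argument applied to $B$ and $C$ gives $q(b) = q(c) = 0$. Using $a + b + c = 0$ in $H_k(M; F_2)$, so $c = a + b$,
\[Q(a, b) = q(a+b) + q(a) + q(b) = q(c) + 0 + 0 = 0,\]
which is (nt).

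The main obstacle is the extension of the Wu structure from $M$ to each bordism and the consequent trivialization $q|_{\ker_A} = 0$. Setting this up rigorously demands a relative version of the Brown--Kervaire construction tailored to codim-$1$ cellular bordisms with controlled Stiefel--Whitney classes: one needs cochain-level manipulations of Steenrod squares on the pairs $(A, M)$, $(B, M)$, $(C, M)$, and an identification of $q(a)$ with a normal-framing obstruction for $\alpha_A$ that is killed by the extended Wu-structure. The underlying principle—that passing through a Wu-structured cobordism trivializes the quadratic refinement on the associated Lagrangian—is standard in the algebraic topology of Arf--Kervaire invariants, but the bookkeeping must be adapted carefully to the singular ``Y''-structure at $M$ where the three codim-$1$ pieces meet.
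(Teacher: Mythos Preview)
Your approach via a Brown--Kervaire quadratic refinement is genuinely different from the paper's, and the outline is plausible, but the gap you yourself flag is more serious than an extension problem for a single bordism. You need the \emph{same} refinement $q$ on $H_k(M;F_2)$ to vanish on all three Lagrangians $\ker_A$, $\ker_B$, $\ker_C$ simultaneously. The set of quadratic refinements vanishing on a fixed Lagrangian is an affine coset inside the $H^k(M;F_2)$-torsor of all refinements, and nothing you have written guarantees that the three cosets meet. Inducing $q$ from a global structure on $X$ is the natural move, but then you must show that the refinements obtained by restricting through $A$, through $B$, and through $C$ all agree on $M$; that compatibility is the real content, and ``the standard Brown--Kervaire argument'' does not supply it. (Separately, your claim that $\nu_A$ is trivialized by a coorientation does not follow from the Stiefel--Whitney hypothesis of Lemma~\ref{lem:6.4} alone; it leans on the stronger homology assumptions in Theorem~\ref{thm:6.1}.)

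The paper sidesteps all of this with a direct geometric computation. It glues the three null-homologies into a closed $(k{+}1)$-cycle $V=\delta_a^A\cup\delta_b^B\cup\delta_c^C$ representing a class $x\in H_{k+1}(X)$, then evaluates $x\cdot x$ by perturbing a second copy $V'$ of $V$ (pushed toward $A$ along $L=A\cup B$) and counting transverse intersection points in $X$. A Seifert-form--style linking calculation in $L$, using only $v_k(M)=0$, shows that the arc contributions give $(a,Qb)$, and a parity argument on the normal field along the remaining closed $1$-manifold of $V\cap V'$ shows the residual points come in pairs. Since $v_{k+1}(X)=0$ forces $x\cdot x=0$, one concludes $(a,Qb)=0$. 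This self-intersection is exactly the quantity your abstract $q$ is meant to encode, but working with the explicit cycle in $X$ avoids having to name, extend, or reconcile any quadratic refinement.
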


\begin{proof}
First recall from the proof of Lemma \ref{lem:6.2} that $w_i(\tau(X)) = 0$, $1\leq i\leq k+1$, implies that two Wu classes vanish, $v_{k+1}\subset H^{k+1}(X)$ and $v_k\in H^k(M)$.  The first vanishing is equivalent to $x\cdot x = 0$ for all $x\in H_{k+1}(X)$ and the second vanishing is equivalent to $y\cdot y = 0$ for all $y\in H_k(M)$.

Assuming $(a,Qb)=1$, we may build a cycle $x\in H_j(X)$ with $x\cdot x=1$, a contradiction.  The construction is to form $V$ representing $x$ by gluing together three coboundaries: $V = \delta_a^A\cup\delta_b^B\cup\delta_c^C$ where the superscript is the ambient space of the coboundary, and $\partial\left(\delta_p^- = p\right)$.

\begin{figure}[hbpt]
\[\begin{xy}<5mm,0mm>:
% left circles
(0,-0.5)*\xycircle<10pt>{}
,(-0.75,0.25)*{a^\prime}
,(0,0)*\ellipse<20pt>:a(-35),=:a(340){}
,(-1.25,1.25)*{a}
% left cones
,(-1.25,1);(-3.5,0)**\dir{-}
,(-3.5,0);(-1.75,-0.75)**\dir{-}
,(-2.5,-0.625);(-3.5,-2)**\dir{-}
,(-3.5,-2);(-0.75,-1.75)**\dir{-}
% left middle cone
,(0,0.225);(0.5,3)**\dir{-}
,(0.5,3);(3.25,-0.75)**\dir{-}
% right cones
,(5,0.25);(7,0)**\dir{-}
,(7,0);(5,-0.25)**\dir{-}
,(4.25,-0.5);(6.25,-0.75)**\dir{-}
,(6.25,-0.75);(4.25,-1)**\dir{-}
% labels
,(-3,1)*{\delta_a^A}
,(-3,-2.75)*{\delta_{a^\prime}^A}
,(8,0)*{\delta_b^B}
,(7.25,-0.75)*{\delta_{b^\prime}^B}
,(0.5,4)*{\delta_{c^\prime}^C}
,(2.5,4)*{\delta_c^C}
,(4.5,0)*{b}
,(3.75,-0.75)*{b^\prime}
,(3,-3)*{\begin{array}{c}\text{this crossing leads to an}\\\text{intersection point of }x\cdot x\end{array}}
\ar (1,-2);(1,-1.25)
% right middle cone
\ar|(0.35)\hole@{-} (0.6875,1.225);(2.5,3)
\ar@{-} (2.5,3);(4,0)
% top ellipse
\ar|(0.45)\hole@{-}@`{(1.4,0.4),(4,0.4)} (1.4,0);(4,0)
\ar|(0.55)\hole@{-}@`{(1.4,-0.4),(4,-0.4)} (1.4,0);(4,0)
% bottom ellipse
\ar|(0.325)\hole@{-}@`{(0.67,-0.35),(3.25,-0.35)} (0.67,-0.75);(3.25,-0.75)
\ar@{-}@`{(0.67,-1.15),(3.25,-1.15)} (0.67,-0.75);(3.25,-0.75)
% line between ellipses
\ar@{.} (1.4,0);(0.67,-0.75)
\end{xy}\]
\caption{}
\label{fig:6.3}
\end{figure}

\iffalse
To compute $x\cdot x$, we break symmetry and form the closed $(d-1)$-hypersurface $A\cup B$ and, using the product collars of $\delta$ in $A$ and $\delta$ in $B$, displace $\delta_a^A\cup\delta_b^B$ toward $A$.  Call the result ${\delta_a^B}^\prime\cup{\delta_b^B}^\prime$ (see Figure \ref{fig:6.3} for the displacement).  The crossing shown in Figure \ref{fig:6.3} translates to a single transverse intersection point $q$ of $\delta_a\cup\delta_b$ and $\delta_a^\prime\cup\delta_b^\prime$ once both are deformed, relative boundary, into a normal product collar $K = \mathpzc{n}\times I$, a neighborhood, $\mathpzc{n}$, of $A\cap B$ within $A\cup B$.  $\mathpzc{n}$ is taken to be oppositely directed from the inward normal over $A\cap B$ into $C$.

To finish computing $x\cdot x$, perturb a copy $V^\prime$ of $V$ within $A\cup B\cup C$, $\overline{V} := \delta_a^{A^\prime}\cup\delta_b^{B^\prime}\cup\delta_c^{C^\prime}$, before perturbation into transverse position within $X$, $\widetilde{V} \pitchfork \widetilde{V}\subset X$.  Initially, before perturbation, $V$ and $V^\prime$ intersect in an arc $\gamma$ and a closed $1$-manifold $S\subset V\subset A\cup B\cup C =: T$.  $\gamma$ is associated to the crossing point in Figure \ref{fig:6.3} and leads to the intersection point $q$ already identified.  Additional intersection points $q^\prime\subset x\cdot x$ come from perturbing $S$ normally to $T$.  Near $\delta := A\cap B = B\cap C = C\cap A$, cyclically symmetric normal directions are defined (Figure \ref{fig:6.4}).
\fi

The plan is to compute $x\cdot x$ taking $V$ and a copy $\widetilde{V}$ perturbed in $X$ to be in general position with respect to $V$ and counting intersection points.  The answer is $x\cdot x = \left|V\cap\widetilde{V}\right| = (a, Qb)$, but from our assumptions on $X$, $x\cdot x = 0$, implying $(a, Qb) = 0$.

Although the submanifolds $A$, $B$, and $C$ play symmetric roles, it is helpful for the computation to \emph{break symmetry} and consider $a\cup b\subset L := A\cup B$.  We may obtain a second copy $a^\prime\cup b^\prime$ of $a\cup b$ by perturbing this $k$-cycle toward $A$ using the normal direction of $M\subset A$.  Let $\delta_a^{A^\prime}$, $\delta_b^{B\prime}$, and $\delta_c^{C^\prime}$ be formed by extending this perturbation over the three $\delta$s.  There is a $Z_2$-linking number defined:
\[l\left(a\cup b, a^\prime\cup b^\prime\right) = (a\cup b)\cdot\left(\delta_a^{A^\prime}\cup\delta_b^{B^\prime}\right) = \left(\delta_a^A \cup\delta_b^B\right) \cdot\left(a^\prime\cup b^\prime\right).\]

As in the construction of the classical Seifert form in knot theory,
\[l\left(a\cup b, a^\prime\cup b^\prime\right) = l\left(a,a^\prime\right) + l\left(a,b^\prime\right) + l\left(b,a^\prime\right) + l\left(b,b^\prime\right) :=\]
\[a\cdot\delta_a^{A^\prime} + a\cdot\delta_b^{B^\prime} + b\cdot\delta_a^{A^\prime} + b\cdot\delta_b^{B^\prime} = 0 + [a]\cdot[b] + [b]\cdot[b] = 0 + (a,Qb) + 0\]
where the first term is zero since the perturbation is towards $A$, the $[\;\;]$ denote classes in $H_k(M)$, and the ``dot'' on the last line denotes intersection in $M$.  The last term is zero since $V_k = 0 \in H^k(M)$.

Let $V^\prime = \delta_a^{A^\prime}\cup\delta_b^{B^\prime}\cup\delta_c^{C^\prime}$.  $V\cap V^\prime$ consists of a collection of arcs counted by $a\cdot b\equiv_2 (a, Qb)$. Now form $\widetilde{V}$ by further perturbing $V^\prime$ to be transverse to $V$ in $X$.  Each arc of $V\cap V^\prime$ becomes a transverse intersection point of $V\cap \widetilde{V}$.  Thus we conclude $x\cdot x\equiv_2\left|V\cap\widetilde{V}\right|\equiv_2 (a, Qb)$, provided all \emph{other} points of $V\cap\widetilde{V}$ come in \emph{pairs}.

Before we perturbed $V^\prime$ to $\widetilde{V}$, $V\cap V^\prime$ consists of the just described arcs (counted by $a\cdot b$) and a closed $1$-manifold $S\subset V\subset A\cup B\cup C$.  We denote $A\cup B\cup C$ by $T$.  Any additional intersection points arise as the zero locus of a normal perturbation to $S$ within the ($d-1$)-manifold sheet of $T$ in which it \emph{locally} lives.

\begin{figure}[hbpt]
\[\begin{xy}<5mm,0mm>:
% sheet C
(0,0);(4,1)**\dir{-}
,(4,1);(4,3)**\dir{-}
,(4,3);(0,2)**\dir{-}
,(0,2);(0,0)**\dir{-}
,(3.5,2.375)*{C}
% sheet B
,(0,0);(2,-2)**\dir{-}
,(2,-2);(6,-1)**\dir{-}
,(6,-1);(4,1)**\dir{-}
,(1.5,-0.5)*{B}
% sheet A
,(4,1.25);(2.5,2)**\dir{--}
,(2.5,2);(0.16,1.415)**\dir{--}
,(-0.16,1.335);(-1.5,1)**\dir{-}
,(-1.5,1);(0,0)**\dir{-}
,(-0.5,0.875)*{A}
% line across A and B
,(0.5,1.5);(4,-1.5)**\crv{(1.6,0.7)&(2,0.5)&(2.4,0.3)}
% line across B and C
,(2,2.5);(4.75,-1.3125)**\crv{(2,1.5)&(2.065,0.9375)}
% delta
,(-1,-1)*{\delta}
% S arrows
,(4.625,-2.5)*{S}
\ar (4.75,-2.125);(4.75,-1.5)
\ar (4.5,-2.125);(4.125,-1.75)
% delta arrow
\ar (-0.625,-0.875);(-0.1,-0.15)
% A arrow
\ar (0.25,2.25);(0.25,3)
\ar@{--} (0.25,0.875);(0.25,1.875)
% B arrow
\ar@{--} (2.5,-1);(2.1,-1.8)
% C arrow
\ar (3.5,1.875);(5,1.875)
\ar (2,-2);(1.5,-3)
\end{xy}\]
\caption{}
\label{fig:6.4}
\end{figure}

But the total number of points $\left|\{q\}\right|$ along $S$ where the normal field vanishes (as it reverses direction) must be even.  If it were odd, $S$ would reverse orientation on the $R^{2k}$ normal bundle to its sheet in $T$.  But this normal bundle is a direct sum of two $R^k$ normal bundles: $S$ within $V$ and $S$ within $V^\prime$.  $S$ must either preserve or reverse the orientation of the normal $R^k$ in these two bundles together.  So the sum will certainly have its orientation preserved.  Thus,
\[0\equiv_2 x\cdot x = \left|\widetilde{V}\cap \widetilde{V}\right| \equiv_ 2 (a,Qb) + \left|\{q\}\right| \equiv_2 (a, Qb) + \text{even} = (a, Qb).\]
The first congruence is from the vanishing of $v_{k+1}\in H^{k+1}(X)$.
\end{proof}

\subsubsection*{Remark on hypothesis of Lemma \ref{lem:6.4}}
As initially introduced, Figure \ref{fig:6.1} portrayed two loops $l$ and $\alpha$ on the $2$-sphere $S^2$.  But let us now reinterpret Figure \ref{fig:6.2} as two $\left(S^1\times\mathbb{C}P^2\right)$s, $l$ and $\alpha$, within $S^2\times\mathbb{C}P^2$ by simply taking the Cartesian product of the entire diagram with $\mathbb{C}P^2$.  Thus $M$ is no longer four points but $\{4\text{ points}\}\times\mathbb{C}P^2$.  $X = S^2\times\mathbb{C}P^2$ has a nontrivial, low dimensional, Stiefel-Whitney class: $w_2(\tau(X))$ is the generator of $H^2\left(S^2\times\mathbb{C}P^2\right)$ corresponding to $H^2\left(\ast\times\mathbb{C}P^2\right)$ under the K\"{u}nneth decomposition. So Lemma \ref{lem:6.4} does not apply, and indeed the sign formula in Theorem \ref{thm:6.1} is not valid in this example.  Since $W_l$ produces isotopic ``before'' and ``after'' submanifolds $S^1\times\mathbb{C}P^2$, we expect the sign to be $+1$; however,
\[-1^{s(l)} i^{\chi(l\cap\alpha)} = -1^3 i^{12} = -1.\]\qed

Returning to the proof of Theorem \ref{thm:6.1}, consider the Mayer-Vietoris sequences for $B\cup C$, the initial multi-balloon; $A\cup C$, the final multi-balloon; and $A\cup B$, the Wilson balloon.  We use the abbreviations $a\cap c$, $b\cap c$, and $a\cap b$ for the dimensions of the intersected kernels $A\cap C$, $B\cap C$, and $A\cap B$, respectively.  The sequences read in part:

\be\label{eqn:seq1}\tag{$1$}
D\rightarrow R^{a\cap c}\rightarrow H_k(A\cap C)\rightarrow H_k(A)\oplus H_k(C)\rightarrow H_k(A\cup C)\rightarrow H_{k-1}(A\cap C)\rightarrow \cdots,
\ee

\be\label{eqn:seq2}\tag{$2$}
0\rightarrow R^{b\cap c}\rightarrow H_k(B\cap C)\rightarrow H_k(B)\oplus H_k(C)\rightarrow H_k(B\cup C)\rightarrow H_{k-1}(B\cap C)\rightarrow\cdots,
\ee

\be\label{eqn:seq3}\tag{$3$}
0\rightarrow R^{a\cap b}\rightarrow H_k(A\cap B)\rightarrow H_k(A)\oplus H_k(B)\rightarrow H_k(A\cup B)\rightarrow H_{k-1}(A\cap B)\rightarrow\cdots
\ee

A basic fact is that the alternating sum of dimensions in any exact sequence with field coefficients is zero.  We are only interested in collecting mod $2$ information so we consider just the non-alternating sum of dimensions.  let $\Delta = s(A\cup C) + s(B\cup C)$, the mod $2$ ``difference'' of semi-characteristics.  For notational convenience, we extend the definition of the semi-characteristic to odd dimensional manifolds with boundary such as $A$, $B$, and $C$, writing $s(A) = \sum_{i=1}^k b_i(A)\mod 2$, where $b_i$ is again the $Z_2$-Betti number.  From sequences \ref{eqn:seq1} and \ref{eqn:seq2} we see:
\[s(A\cup C) = a\cap c + \sum_{i=0}^k b_i(\delta) + \sum_{i=0}^k b_i(A) + \sum_{i=0}^k b_i(C)\text{, and}\]
\[s(B\cup C) = b\cap c + \sum_{i=0}^k b_i(\delta) + \sum_{i=0}^k b_i(B) + \sum_{i=0}^k b_i(C)\text{, so}\]
\be\label{eqn:6.3}
\Delta = a\cap c + b\cup c + s(A) + s(B)\text{, and by Lemma \ref{lem:6.3}} = \frac{b_k(\delta)}{2} + a\cap b + s(A) +s(B).
\ee
But from sequence \ref{eqn:seq3},
\be\label{eqn:6.4}
s(A\cup B) = a\cap b + s(A) + s(B) + \sum_{i=1}^k b_i(M).
\ee
But
\be\label{eqn:6.5}
\sum_{i=1}^k b_i(M) \equiv \frac{\chi(M)}{2} + \frac{b_k(M)}{2} \mod 2
\ee
so combining lines (\ref{eqn:6.3}), (\ref{eqn:6.4}), and (\ref{eqn:6.5}) we get
\be\label{eqn:6.6}
\Delta = s(A\cup B) + \frac{\chi(M)}{2}
\ee

Thus the change $\Delta$ in the semi-characteristic of a fluctuating hypersurface ($A\cup C$) upon $\operatorname{NOT}$ on the support ($A\cup B$) of the Wilson balloon operator is given by the right-hand side of  (\ref{eqn:6.6}), a local formula in agreement with our sign rule.

\end{proof}

\begin{note}
The computation of $x\cdot x$ above relies on a certain regularity of the cellular $k$-chain $V$ representing the class $x$.  We have implicitly used that $V$ is a pseudo-manifold, meaning that it is a union of ($k+1$)-cells glued together so that every $k$-face is on the boundary of precisely two ($k+1$)-cells.
\end{note}

\section{Excitations}
Given a $\GDS$ on some manifold, and given a ground state $\Psi$, the state $O\Psi$ may be an excited state for certain choices of the operator $O$. We now briefly discuss these excitations.

One interesting choice of the operator $O$
to consider is a product of Pauli $Z$ over all $(d-1)$-cells that intersect some arc on the dual lattice; we call these dual open Wilson loops.  This case is completely analogous to the case of the $\GTC$.  This operator commutes with all terms of the Hamiltonian {\it except} those terms near the two endpoints of the arc.  
The state $O\Psi$ is the same for any two arcs that are homologous.
The excitations created near the endpoints of the arc must occur in pairs: given such a ground state $\Psi$ and such an operator $O$, there is no state whose reduced density matrix agrees with $\Psi$ everywhere except near {\it one} of the endpoints of the arc, while agreeing with the reduced density matrix of $O\Psi$ near that endpoint.

We can also consider operators obtained from ``open balloons".  Assume $d=2k+2$, $k>0$.  Choose $L$ to be an ``open balloon" (a codimension $=1$ submanifold with boundary).  Let $\alpha$ be the $(d-1)$-dimensional cellular submanifold supporting a basis vector.
In this case, choose the operator to be $\pm \operatorname{NOT}$, where the sign is
given by
\[(-1)^{s(L)}\left(i^{\chi(\partial(L\cap\alpha))}\right).\] 
Note that in the DS model for $d=2$, the operators that drag semions $b$ and that drag semions $\overline b$ are related by complex conjugation; however, for the present theory, if we replace $i$ by $-i$ in the definition of the open balloon operator, then the resulting operator is unchanged because $\chi(\partial(L\cap\alpha))$ is even.

Balloon or open balloon operators will either commute or anti-commute with dual Wilson loop or dual open Wilson loop operators depending upon the parity of the intersection.
Further, dual open Wilson loop operators commute with each other.  The commutation relations between open balloons are more complicated, and we leave this as an open question.  We also leave the statistics of the excitations created by the open balloons as an open question.

\section{Comparison with $Z_2$-Dijkgraaf-Witten theories}
One aspect of the $\GDS$ theory (in $d=$ even dimensions) is that topological features of the $d$-manifold on which $X$ it lives may create ``missing sectors,'' elements $\alpha\in H_{d-1}\left(X;Z_2\right)$ so that for hypersurface $E\subset X$, $[E] = \alpha$, $\psi[E] = 0$ for any zgswf.  An example of this is $\C P^{2k}$.  Since $\chi\left(\C P^{2k}\right) = 2k+1=$ odd and $H_{4k-1}\left(\mathbb{C}P^{2k};Z_2\right) = 0$,
the Hilbert space $\GDS\left(\C P^{2k}\right) = 0$.

It is natural to wonder if there is a relation between $\GDS$ and the twisted $Z_2$-Dijkgraaf-Witten (DW) theory, and we thank Meng Cheng and Kevin Walker for raising this question and discussions on the resolution.  For $G=Z_2$, $\text{BG} = K\left(Z_2,1\right) = RP^\infty$, the infinite real projective space.
\[H^j\left(RP^\infty;R/Z\right)\cong\begin{cases} 0,& j\text{ even}\\Z_2,& j\text{ odd}\end{cases},\]
so there is a unique twisting to consider when $j$ is odd.  Generically the effect on the Hilbert space of a twisting is to introduce some relations into the corresponding untwisted space, so we should check if these relations could cause the ``missing sectors'' observed in $\GDS$.  Theorem \ref{thm:TDW} below shows this is \emph{not} the case.

In the untwisted DW theory for a finite group $G$, the Hilbert space $\DW\left(X^d\right) =$ is $\C$-linear formal combinations of isomorphism classes of $G$-principal bundles over $X$.  A twist is given by the choice of a cocycle $\alpha$, $[\alpha]\in H^{d+1}\left(\text{BG};R/Z\right)$.  (The class $[\alpha]$ determines the theory.)  Now, following \cite{FQ}, $\TDW\left(X^d\right)$ is generated by maps $f: X\rightarrow\text{BG}$ with a relation for every homotopy $F:X\times I\rightarrow\text{BG}$, $F\mid_{X\times 0} = f_0$ and $F\mid_{X\times 1}=f_1$ of the form $f_0=\omega f_1$, where $\omega = F^\ast(\alpha)[X\times I]$.  From this definition, it follows easily that if for all maps $g:X\times S^1\rightarrow\text{BG}$, $g^\ast(\alpha)\left[X\times S^1\right]=1$, then $\dim\TDW(X)=\dim\DW(X)$, i.e., there are no additional relations.  We use this criteria to prove:

\begin{theorem}\label{thm:TDW}
For $d$ even, $\GDS$ is distinct from $Z_2\TDW$ (and also $Z_2\DW$, the untwisted theory).
\end{theorem}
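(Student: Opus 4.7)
My plan is to establish the distinction by exhibiting a single manifold on which the dimensions of the relevant state spaces disagree, rather than attempting a structural comparison. The natural choice, already foreshadowed by the paragraph preceding the theorem, is $X = \C P^{2k}$ for $k \geq 1$ (so $d = 4k$ is even, $d > 2$). The paper has shown $\GDS(\C P^{2k}) = 0$ because $\chi(\C P^{2k})$ is odd and the only available sector $[E] = 0 \in H_{d-1}(\C P^{2k};Z_2)$ is killed by the sweepout argument (as in Section~3). So it suffices to prove $\dim \DW(\C P^{2k}) \neq 0$ and $\dim \TDW(\C P^{2k}) \neq 0$.

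For the untwisted case, $\dim \DW(X)$ is the number of isomorphism classes of principal $Z_2$-bundles over $X$, which is $|H^1(X;Z_2)|$. Since $H^1(\C P^{2k}; Z_2) = 0$, the only $Z_2$-bundle is the trivial one, and $\dim \DW(\C P^{2k}) = 1$. This already separates $\DW$ from $\GDS$.

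For the twisted case, I will apply the criterion stated in the excerpt: if $g^*(\alpha)[X \times S^1] = 1$ (trivially) for every map $g : X \times S^1 \to BG = RP^\infty$, then the twisting introduces no new relations and $\dim \TDW(X) = \dim \DW(X)$. Homotopy classes of such maps are in bijection with
\[
H^1(X \times S^1; Z_2) \cong H^1(\C P^{2k};Z_2) \oplus H^1(S^1;Z_2) \cong Z_2,
\]
by K\"unneth. So there are two cases: the constant map (for which $g^*(\alpha) = 0$ trivially), and the map factoring as $g : X \times S^1 \xrightarrow{\pi} S^1 \hookrightarrow RP^\infty$. In the second case, $g^*(\alpha) = \pi^* i^*(\alpha)$ factors through $H^{d+1}(S^1; R/Z)$, which vanishes for $d \geq 1$. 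Hence $g^*(\alpha) = 0$ on the nose, and its evaluation on the fundamental class is trivial. Applying the criterion, $\dim \TDW(\C P^{2k}) = \dim \DW(\C P^{2k}) = 1$.

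Combining, $\dim \GDS(\C P^{2k}) = 0$ while $\dim \DW(\C P^{2k}) = \dim \TDW(\C P^{2k}) = 1$, so $\GDS$ is distinct from both theories. The main potential obstacle is the verification of the twisted criterion — one must be careful that the cited criterion is actually a sufficient condition for equality of dimensions (the excerpt states it as such) and that the classification of maps to $RP^\infty$ genuinely reduces to computing $H^1$. Both points are standard once one uses $RP^\infty = K(Z_2, 1)$, so no serious difficulty arises. The rest is just cohomology of $\C P^{2k}$ and $S^1$.
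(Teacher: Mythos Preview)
Your argument for $d \equiv 0 \pmod 4$ is correct and cleaner than the paper's. The paper establishes that $g: \C P^{2k} \times S^1 \to RP^\infty$ factors through $S^1$ by explicitly lifting $g|_{\C P^{2k} \times \{\theta\}}$ to the contractible cover $S^\infty$, coning it off, and then repeating on the resulting suspension; you get the same factorization in one line from $[Y, K(Z_2,1)] = H^1(Y;Z_2)$ together with K\"unneth.

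The gap is that the theorem asserts the distinction for \emph{every} even $d$, and you have said nothing about $d \equiv 2 \pmod 4$ --- there is no $\C P^m$ of real dimension $4k+2$. The paper handles this case with $X = RP^2 \times \C P^{2k}$: again $\chi(X) = 2k+1$ is odd, so by Theorem~\ref{thm:5.3} the trivial sector is absent and $\dim \GDS(X) < |H_{d-1}(X;Z_2)| = 2 = |H^1(X;Z_2)| = \dim \DW(X)$; it then remains to check $\dim \TDW(X) = \dim \DW(X)$. The paper does this via an iterated lifting-and-coning construction over a filtration of $RP^2 \times S^1$, but your method extends directly: since $H^1(\C P^{2k};Z_2) = 0$, K\"unneth gives $H^1(RP^2 \times \C P^{2k} \times S^1; Z_2) \cong H^1(RP^2 \times S^1; Z_2)$ via pullback along the projection, so every $g$ factors through $RP^2 \times S^1$, and $g^*(\alpha)$ lies in $H^{4k+3}(RP^2 \times S^1; R/Z) = 0$ for $k \geq 1$. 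As written, your proof leaves these dimensions uncovered.
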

\begin{proof}
For simplicity, begin with $d=4k$, where $\GDS$ is distinguished from both twisted and untwisted Dijkgraaf-Witten theory by
\[\GDS\left(\C P^{2k}\right) = 0\]
and
\[Z_2\TDW\left(\C P^{2k}\right)\cong Z_2\DW\left(\C P^{2k}\right)\cong\C.\]
The first equality results from $\chi\left(\C P^{2k}\right)$, the Euler characteristic, being odd which implies that the empty hypersurface may be ``swept over'' $\C P^{2k}$ and back to itself with an odd number of Morse transitions, each of which counts for $(-1)$ in the wave function.

Since $\pi_1\left(\C P^{2k}\right)\cong 0$, there is only the trivial $Z_2$-principal bundle so
\[\DW\left(\C P^{2k}\right)\cong\C.\]
We will prove that any map $g:\C P^{2k}\times S^1\rightarrow RP^\infty$ homotopically factors through the second factor:
\[\vcenter{\vbox{\xymatrix{
g:\C P^{2k}\times S^1\ar[r]^-g \ar[d]^{\pi_2} & RP^\infty \\
S^1 \ar@{-->}[ur]^{g^\prime}
}}}\text{(commutes up to homotopy)}\]
This immediately implies (by a $Z_2$-Stokes theorem) that
\[g^\ast(\alpha)\left[\C P^2\times S^1\right] = {g^\prime}^\ast\left[S^1\right] = 1,\]
the second equality, since $S^1$ has no higher homology with any coefficient system.

Now we construct $g^\prime$.  Fix $\theta\in S^1$ and consider
\[\xymatrix{
&&S^\infty\ar[d]^{\times 2} \\
\C P^{2k} \ar@(ur,l)[rru]^l \ar[r]^-{\text{id}\times\theta} & \C P^{2k}\times S^1 \ar[r]^-g & RP^\infty
}\]
The lift $l$ exists since $\pi_1\left(\C P^{2k}\right) = 1$.  $l$ extends to a map over $\operatorname{cone}\left(\C P^{2k}\right)$ and this cone further projects back into $RP^\infty$.
\[\xymatrix{
\operatorname{cone}\left(\C P^{2k}\right) \ar[r]^-{\bar{l}} & S^\infty \ar[d]^{\times 2} \\
\C P^{2k} \ar@{^{(}->}[u]\ar[ur]^l & RP^\infty
}\]
Homotopically, this factors $g$ through $g_1:\Sigma\left(\C P^{2k}\right)\vee S^1\rightarrow RP^\infty$, $\Sigma$ denoting suspension.
\begin{figure}[hbpt]
\[g_1:\begin{xy}<5mm,0mm>:
(0,0)*\ellipse<15pt,7.5pt>{}
,(1.75,0)*{\C P^{2k}}
,(-2,0)*\ellipse<7.5pt,15pt>{}
,(2,0)*\ellipse<7.5pt,15pt>{}
,(-4,1.0625);(4,1.0625)**\dir{-}
,(-4,-1.0625);(4,-1.0625)**\dir{-}
,(-4.125,-1.0625);(-5.5,0)**\dir{-}
,(-4.125,1.0625);(-5.5,0)**\dir{-}
,(4.125,-1.0625);(5.5,0)**\dir{-}
,(4.125,1.0625);(5.5,0)**\dir{-}
,(-5.5,0);(5.5,0)**\crv{(-6,0.75)&(-6,3)&(6,3)&(6,0.75)}
,(-5.5,-0.75);(-5.5,-1.5)**\dir{-}
,(-5.5,-1.5);(5.5,-1.5)**\dir{-}
,(5.5,-1.5);(5.5,-0.75)**\dir{-}
,(0,-2.25)*{\Sigma\left(\C P^{2k}\right)}
,(0,2.875)*{\bullet}
,(0,2.25)*{\theta}
\end{xy}\longrightarrow RP^\infty\]
\caption{}
\label{fig:suspension}
\end{figure}
But $\Sigma\left(\C P^{2k}\right)$ is also simply connected so $g_1$ extends over a cone on $\Sigma\left(\C P^{2k}\right)$, to a map $g_2$,
\[g_2:\Sigma\left(\C P^{2k}\right)\vee S^1\cup\operatorname{cone}\left(\Sigma\left(\C P^{2k}\right)\right) \rightarrow RP^\infty.\]
But the source of $g_2$ is homotopy equivalent to $S^1$, and making this identification, $g_2$ becomes the desired $g^\prime$.

With this warmup we can now handle the case $d=4k+2$, $k\geq 1$.  This is done by considering the Hilbert spaces for $RP^2\times \C P^{2k}$.  The argument will be an iterative version of the previous.  Since $\chi\left(RP^2\times\C P^{2k}\right) = 2k+1=$ odd, it again will suffice to show that for $g:RP^2\times \C P^{2k}\times S^1\rightarrow RP^\infty$ homotopically factors through a map $g^\prime: RP^2\times S^1\rightarrow RP^\infty$ (since $H^{d+1}\left(RP^2\times S^1; R/S^1\right)\cong 0$).

To simplify notation, set $M=RP^2\times S^1$.  Initially we have a (product) bundle:
\[\xymatrix{
\C P^{2k}\ar[r] & \C P^{2k}\times M\ar[r]^-g\ar[d] & RP^\infty \\
& M\ar@{-->}[ur]^h
}\]
where the homotopy extension $h$ exists since $\pi_1\left(\C P^{2k}\right) = 1$.

Make $h$ transverse to $RP^{\infty-1}\subset RP^\infty$ with $Y = h^{-1}\left(RP^{\infty-1}\right)$, and write $M=Y \cup P$, $P=M\setminus Y$.  By the previous method, $g$ factors through $g_1$,
\[g_1:\left[\nu_Y\times \C P^{2k}\cup\left(\partial\left(\nu_Y\right)\times\operatorname{cone}\left(\C P^{2k}\right)\right)\right]\cup P\rightarrow RP^\infty,\]
where $\nu_Y$ denotes the normal bundle to $Y$ in $M$.  Notice the space in brackets is a bundle $E$ over $Y$ with a simply connected fiber: $\Sigma\left(\C P^{2k}\right)$:
\[\xymatrix{
\Sigma\left(\C P^{2k}\right) \ar[r] & E \ar[d] \ar[r]^-{g_1\mid} & RP^\infty \\
& Y \ar@{-->}[ur]^{h_1}
}\]
As before, $\pi_1\left(\Sigma\C P^{2k}\right) = 1$ implies a homotopy factoring through $h_1$.  Now make $h_1$ transverse to $RP^{\infty-1}\subset RP^\infty$ and set $h_1^{-1}\left(RP^{\infty-1}\right) = L\subset Y$.  Set $Y\setminus L = Q$, $Y = L\cup Q$.

Now $g_1$ factors through $g_2$,
\[g_2:\left[\left(\nu_L\times\Sigma\left(\C P^{2k}\right)\right)\cup \partial\left(\nu_L\right)\times\operatorname{cone} \left(\Sigma\left(\C P^{2k}\right)\right)\right] \cup Q\cup P\rightarrow RP^\infty.\]
Again the space in brackets is a bundle with simply connected fiber: $\Sigma\left(\Sigma\left(\C P^{2k}\right)\right)$:
\[\xymatrix{
\Sigma^2\left(\C P^{2k}\right) \ar[r] & F \ar[d] \ar[r]^-{g_2\mid} & RP^\infty \\ & L \ar@{-->}[ur]_{h_2}
}\]
Finally, make $h_2$ transverse to $RP^{\infty-1}\subset RP^\infty$ with $h_2^{-1}\left(RP^{\infty-1}\right) = J$ and $L\setminus J = K$, $L=J\cup K$.  Now $g_2$ factors through $g_3$,
\[g_3:\left[\left(\nu_J\times \Sigma^2\left(\C P^{2k}\right)\right)\cup\partial\left(\nu_J\right)\times\operatorname{cone}\left(\Sigma^2\left( \C P^{2k}\right)\right)\right]\cup K\cup Q\cup P\rightarrow RP^\infty.\]

Again the space in brackets is a bundle with simply connected fiber $\Sigma^3\left(\C P^3\right)$:
\[\xymatrix{
\Sigma^3\left(\C P^{2k}\right) \ar[r] & G\ar[d]\ar[r]^-{g_3\mid} & RP^\infty \\ & J
}\]
But now $J$ is merely a finite number of points indexing the components of $G$.  Each such component can now be lifted to $S^\infty$, coned in $S^\infty$, and then projected back to $RP^\infty$.  The result is to factor $g_3$ through $g_4$ where the source of $g_4$ is the source of $g_3$ with these cones attached to the components of $G$.  But with these cones the source of $g_4$ is homotopy equivalent to $RP^2\times S^1 = J\cup K\cup Q\cup P$.  Using this homotopy equivalence, $g_4$ may be identified with the map we have been seeking, $g^\prime$.
\end{proof}

\section{Discussion and Relation to Other TQFTs}
We have constructed a Hamiltonian which generalizes the double semion TQFT to higher dimensions and we have computed the dimension of the space of zero energy states for several different choices of ambient space.  For odd dimensions $d$, we have shown (Remark \ref{qcremark}) that the ground state subspace of the theory is related that of
the toric code by a local unitary transformation.

For even dimensions $d$, we conjecture that: for all non-empty choices of ambient space,
there is no local unitary transformation that maps the ground state subspace of the GDS theory into the ground state subspace of the GTC or twisted $Z_2$ Dijkgraaf-Witten model.
More strongly, we conjecture that: given a ball $B$ contained in the ambient space,
 no ground
state wavefunction of the GDS
theory can be mapped by a local unitary transformation in such a way that the reduced density matrix on $B$
agrees with the reduced density matrix on $B$
of either a ground state wavefunction of the GTC or a ground state wavefunction of the twisted $Z_2$ Dijkgraaf-Witten model.  To make this conjecture more precise, one may consider a sequence of cellulations of the ambient space with bounded local
geometry and increasing number of cells so that the diameter of the cells tends to zero; then one constructs a sequence of quantum circuits which map some ground state wavefunction of the GDS theory so that the reduced density matrix on $B$ has the desired property.  These circuits are chosen to minimize the depth of the circuit (taking the gates in
the circuit to have bounded range); then the conjecture is that this depth diverges.

This conjecture is one way of formally stating that the Hamiltonians correspond to different phases of matter.
The main
evidence for this belief in even dimensions is that we have shown that for certain choices of ambient space the ground state degeneracy of the GDS model does not agree with that of either the toric
code or the twisted $Z_2$ Dijkgraaf-Witten model; this immediately implies that there is no local unitary transformation that maps the ground subspace of one theory onto
that of the other theory in those cases as the dimensions are different.  However, more work is needed to establish the stronger conjecture above.  This is perhaps reminiscent of the case in two dimensions where it was shown first that no ground state of the toric code on a torus can be mapped by a local quantum circuit to a product state\cite{bhv}, then it was shown that the ground state of the toric code on a sphere could not be mapped
to a product state\cite{leshouches}, and finally more general invariants were found showing that the topological S-matrix of the theory is invariant under local unitary transformations\cite{haah}.  Hence, to show such a conjecture here may require a more detailed analysis of excitations of the theory.  This
may require a description in terms of higher categories\cite{kongwenzheng}.

\appendix
\section{Generic Cellulations}
Consider any set $S$ of points in general position in $d$-dimensions.  Here ``general position" means that no more than $d+1$ points lie on any hypersphere and that
no more than $n+1$ points lie on any $n$-dimensional hyperplane for $n<d$.
Consider the Voronoi cells defined by this set of points.
Consider any subset $T$ of the $d$-dimensional Voronoi cells.  Let $B$ be the boundary of this subset.  We now show that $B$ is a $d-1$-dimensional PL manifold.

Let $x$ be any point in $B$.  Then, $x$ is at some distance $D$ from some set of points $p_1,p_2,...,p_k$ in which are centers of cells in $T$ and points $q_1,...,q_l$ which are centers of cells not in $S$.  Thus, there are $k+l$ points in $S$ at distance $D$ from $x$.  Because the points in $S$ are in general
position, $k+l\leq d+1$.
Further, $k>0$ and $l>0$ (otherwise, $x$ is not on the boundary), so $k,l \leq d$.

For notational simplicity, translate the point $x$ to the origin so $|p_i|=|q_j|=D$.
The points in $B$ near the origin are the points that are the same distance from centers of cells in $T$ as from centers of cells not in $T$.
For any point $y$, the distance from $y$ to a point $p_i$ is equal to $\sqrt{d^2-2 p_i \cdot y + y^2}$.
So, a point $y$ near the origin (i.e., sufficiently close to the origin that $p_1,...,p_k,q_1,...,q_l$ are closer to $y$ than any other point in $S$)
is in $B$ if and only if
\be
\label{equaldist}
{\rm max}_i p_i \cdot y = {\rm max}_j q_j \cdot y.
\ee

Since the points are in general position, the vectors $p_i,q_j$ span at least $k+l-1$ dimensions.  Suppose that the vectors $p_i,q_j$ span $k+l-1$ dimensions.  Then, the
set of points satisfying the Eq.~(\ref{equaldist}) is $R^{d+1-k-l}$ times the set of points in a $k+l-1$ dimensional space (i.e., the space which is the span of $p_i,q_j$) which satisfy Eq.~(\ref{equaldist}).  So, in this case we can reduce the problem of showing that $B$ is a PL manifold to the problem of showing that $B$ is a PL manifold in this case that $k+l=d+1$.

Suppose instead that the vectors $v$ span $k+l$ dimensions.
Then, there is some nonzero vector $v$ such that $p_i \cdot v = q_j \cdot v={\rm constant}$ for all $i,j$.  Then, translating the origin by this vector $v$, we arrive at a new problem in which the vectors $p_i,q_j$ span
only $k+l-1$ dimensions.  This reduces to the above case where $p_i,q_j$ span $k+l-1$ dimensions (this vector $v$ is a vector in the space $R^{d+1-k-l}$ above).
So, in general we can assume that $k+l=d+1$ and the points span $d$ dimensions.

Consider the space of vectors $v$ such that
\be
\label{peq}
v \cdot p_i = v \cdot p_j
\ee
for all $i,j$
and
\be
\label{qeq}
v \cdot q_i = v \cdot q_j
\ee
for all $i,j$.  These equations give $(k-1)+(l-1)=d-1$ constraints.  Hence, there is at least a one dimensional space of vectors satisfying Eqs.~(\ref{peq},\ref{qeq}).
Let $v$ be a nonzero vector satisfying Eqs.~(\ref{peq},\ref{qeq}) with $v\cdot p_i=z_p$ for some constant $z_p$ and $v \cdot q_j = z_q$ for some constant $z_q$.
By the assumption that the points are in general position, $z_p \neq z_q$, as otherwise there would be $d+1$ points on the hyperplane of fixed inner product with $v$.

There is a $d-1$ dimensional space of points orthogonal to $v$.  Let $m$ be any vector in this space.
Then,
\be
y=m+cv
\ee
satisfies Eq.~(\ref{equaldist}) if and only if
\be
c= \frac{{\rm max}_i  p_i \cdot m- {\rm max}_j  q_j \cdot m}{z_q-z_p}.
\ee
So, setting $y=m+\frac{{\rm max}_i  p_i \cdot m- {\rm max}_j  q_j \cdot m}{z_q-z_p}v$
gives a PL homeomorphism from a neighborhood of the origin in $R^{d-1}$ to a neighborhood of the origin in $B$.

Note that without the assumption of general position, $B$ need not be a manifold.  As an example consider four points in the plane, at coordinates $(\pm 1, \pm 1)$.  Consider the set of Voronoi cells containing the cells corresponding to the points at $(+1,-1)$ and $(-1,+1)$.  Then, the boundary of this set is the horizontal and vertical axes, which has a singularity at the origin.
\end{document}